\documentclass[sn-mathphys,Numbered]{sn-jnl}


\usepackage{graphicx}%
\usepackage{multirow}%
\usepackage{amsmath,amssymb,amsfonts}%
\usepackage{amsthm}%
\usepackage{mathrsfs}%
\usepackage[title]{appendix}%
\usepackage{xcolor}%
\usepackage{textcomp}%
\usepackage{comment}
\usepackage{float}
\usepackage{manyfoot}%
\usepackage{booktabs}%
\usepackage{algorithm}%
\usepackage{algorithmicx}%
\usepackage{algpseudocode}%
\usepackage{cleveref}
\usepackage[section]{placeins}
\usepackage{listings}%

\DeclareMathOperator*{\argmin}{arg\,min}
\usepackage{tikz-cd}

\def\bB{\mathbb B}

\newcommand{\pluseq}{\mathrel{+}=}

\usepackage{subcaption}

\crefname{equation}{Eq.}{Eqs.}
\crefname{figure}{Fig.}{Figs.}
\crefname{section}{Section}{}




\newtheorem{theorem}{Theorem}
\newtheorem{proposition}[theorem]{Proposition}%
\newtheorem{definition}{Definition}%

\raggedbottom

\begin{document}

\title[Article Title]{Consensus ranking by quantum annealing}


\author[1]{\fnm{Daniele} \sur{Franch}}\email{daniele.franch@unitn.it}
\author[1]{\fnm{Enrico} \sur{Zardini}}\email{enrico.zardini@unitn.it}
\author[1,3]{\fnm{Enrico} \sur{Blanzieri}}\email{enrico.blanzieri@unitn.it}
\author[2,3]{\fnm{Davide} \sur{Pastorello}}\email{davide.pastorello3@unibo.it}

\affil[1]{\orgdiv{Department of Information Engineering and Computer Science}, \orgname{University of Trento}, \orgaddress{\street{via Sommarive 9}, \city{Povo}, \postcode{38123}, \state{Trento}, \country{Italy}}
\\}
\affil[2]{\orgdiv{Department of Mathematics}, \orgname{University of Bologna}, \orgaddress{\street{Piazza di Porta San Donato 5}, \city{Bologna}, \postcode{40126}, \state{Bologna}, \country{Italy}}
\\}
\affil[3]{\orgdiv{TIFPA-INFN}, \orgaddress{\street{via Sommarive 14}, \city{Povo}, \postcode{38123}, \state{Trento}, \country{Italy}}}


\abstract{Consensus ranking is a technique used to derive a single ranking that best represents the preferences of multiple individuals or systems. It aims to aggregate different rankings into one that minimizes overall disagreement or distance from each of the individual rankings. Kemeny ranking aggregation, in particular, is a widely used method in decision-making and social choice, with applications ranging from search engines to music recommendation systems. It seeks to determine a consensus ranking of a set of candidates based on the preferences of a group of individuals.
However, existing quantum annealing algorithms face challenges in efficiently processing large datasets with many candidates.
In this paper, we propose a method to improve the performance of quantum annealing for Kemeny rank aggregation. 
Our approach identifies the pairwise preference matrix that represents the solution list and subsequently reconstructs the ranking using classical methods. This method already yields better results than existing approaches. Furthermore, we present a range of enhancements that significantly improve the proposed method's performance, thereby increasing the number of candidates that can be effectively handled.
Finally, we evaluate the efficiency of our approach by comparing its performance and execution time with that of KwikSort, a well-known approximate algorithm.}

\keywords{Kemeny rank aggregation, quantum annealing, pairwise comparison matrix, transitivity enforcement, Kwiksort}

\maketitle

\section{Introduction}\label{sec1}

The Kemeny ranking method \cite{kemeny1} is designed to minimize the total number of disagreements among individual rankings, aiming to find a ranking that closely aligns with all individual preferences by minimizing the required pairwise swaps. This approach offers not just a single solution but a range of potential solutions, empowering users to choose the one that best fits their subjective criteria, which may be challenging or impossible to formally model.
Throughout this paper, we use the terms {\em ranking} and {\em list} interchangeably to refer to the ordered sequence of items or candidates.
The Kemeny ranking is defined as the list that minimizes the cumulative Kendall-Tau distance across all lists in the dataset. The Kendall-Tau distance acts as a measure to quantify the variance between two rankings, assessing all inconsistencies in the relative order of candidate pairs between the rankings. This distance metric is widely used in ranking assessments due to its computational simplicity and its ability to capture the importance of the order of items in rankings.
By leveraging the Kemeny method and the Kendall-Tau distance, it becomes possible to effectively assess and compare rankings while accounting for disagreements between them. This approach provides a practical and insightful solution for ranking challenges, enabling users to make well-informed decisions tailored to their specific needs and preferences.

Quantum computing (QC) is a revolutionary computational paradigm that leverages the fundamental principles of quantum mechanics to process information. In recent years, significant strides have been made in the development of practical quantum computers, making them increasingly accessible in the market. This progress holds immense potential for transforming the fields of artificial intelligence and machine learning. QC's unique ability to provide efficient solutions to various search and optimization problems commonly encountered in these domains is particularly noteworthy.\\
In the realm of Kemeny ranking solutions, researchers have proposed leveraging quantum annealers to address the ranking aggregation problem \cite{fiergolla2023heuristicdiversekemenyrank}, showcasing the promise of this approach. However, these attempts face limitations in terms of performance. One key drawback is that the representation employed requires a significantly large number of qubits, which is challenging for the current quantum annealers. Additionally, it requires a high penalty coefficient term, further increasing the complexity and leading to suboptimal performance.

Given this context, we have chosen to adopt a hybrid approach to address the challenges encountered by the annealer. Our strategy involves the annealer primarily focusing on identifying the binary representation of the optimal list, while preprocessing and postprocessing (e.g., list reconstruction) are performed using classical methods. This approach aims to improve the overall performance and overcome the current limitations by reducing the complexity placed on the annealer and shifting it to classical computation.
Our approach starts by determining a preference matrix that quantifies the relative preferences between each pair of candidates. Each element in this matrix indicates how many times a candidate is preferred over another in the dataset. We then search for the optimal rank, represented as a binary preference matrix, that best aligns with the preferences in the matrix.
However, a significant challenge arises in dealing with cycles, which represent violations of the transitivity property in the binary preference matrix.\\
In some practical cases, a disagreement may arise among three candidates, creating a cyclic inconsistency, where there is no clear ordering among the three candidates. For example, $c1\prec c2\prec c3\prec c1$ is impossible to realize in a valid rank since $\prec$ is not a total order relation.\\
To address this challenge, we introduce a penalty term that discourages the existence of such cycles. This addition enables us to determine the optimal rank effectively, eliminating any cyclic inconsistencies. By combining the strengths of the quantum annealer and classical computation, we aim to find an effective solution that minimizes the Kemeny distance while handling cycle removal in the most efficient way.
Overall, the combination of quantum and classical techniques in our hybrid approach holds the potential to leverage the strengths of both paradigms, paving the way for more efficient and effective solutions to the Kemeny ranking problem.

\subsection{Structure of the paper}

The structure of the paper is organized into several sections. Section 2, titled ``Background," presents a summary of the key concepts and theoretical framework pertinent to our study, discussing the essential theory required to comprehend the subject. Section 3, ``Representation of the Kemeny Aggregation into a Quantum Annealer," describes the methodology employed and the enhancements achieved. Section 4, ``Kemeny Ranking in Multiagent Systems," examines the potential application of the proposed method for Kemeny ranking within decision-making in multiagent systems. Section 5, ``Methodology," elaborates on the research approach we adopted. Section 6, ``Results," reveals the primary findings, giving thorough explanations of the discoveries and significant observations, including results for each proposed variation and a comparison with KwikSort. Section 7, ``Related Work," examines a paper that suggested a similar approach but did not implement the improvements we propose. Section 8, ``Conclusions and Outlook," distills the main insights and contributions of the study, indicating possible areas for future research. Section 9, ``Acknowledgements," conveys appreciation to those who assisted and supported the research efforts.

\section{Background}
This section delves deeply into a variety of topics, providing crucial theoretical foundations necessary for understanding the proposed model. The topics covered include QUBO problems, quantum annealing employing D-Wave technology, Kemeny ranking, Kendall-Tau distance, as well as partial and weighted lists.

\subsection{QUBO problems}
Quadratic Unconstrained Binary Optimization (QUBO) problems are optimization problems of the form:
\begin{equation}
    \argmin_x x^TQx,
\end{equation}
\\
where $x$ is a binary vector, and Q is an upper triangular (or symmetric) matrix of real values.
If we suppose $x$ to be a $n\times 1$ vector and Q an upper triangular $n\times n$ matrix the QUBO problem can be reformulated by separating the diagonal elements of the matrix from the off-diagonal elements. This results in a formulation that is equivalent to the interaction form that appears in the quantum annealer:
\begin{equation}
    x^TQx=\sum_{i=1}^n Q_{ii}x_i+\sum_{i<j}^n Q_{ij}x_ix_j,
\end{equation}
where we exploit the fact that $x_i\in\mathbb{B}=\{0,1\}$ and so that $x_i^2=x_i$.
Although QUBO problems are typically unconstrained by name, we can still introduce constraints by penalizing configurations that do not adhere to them. This is done by adding penalty terms to the objective function that increase the energy of solutions that violate the constraints.
The key benefit of QUBO is that it can be directly mapped to the Ising model by simply changing the domain from \{0,1\} to \{-1,1\}, allowing us to leverage quantum annealers to solve QUBO problems.

\subsection{Quantum annealing}

Quantum annealing (QA) is a heuristic search used to solve optimization problems~\cite{kadowaki}. The solution of a given problem corresponds to the \emph{ground state} (the less energetic physical state) of a $n$-qubit system with energy described by a \emph{problem Hamiltonian} $H_P$, which is a self-adjoint linear operator on the Hilbert space $(\mathbb C^2)^{\otimes n}$.
The annealing procedure is implemented by a time evolution of the quantum system towards the ground state of the problem Hamiltonian. More precisely, let us consider the time-dependent Hamiltonian
\begin{equation}
H(t)=\Gamma(t) H_D+H_P\qquad t\in[0,\tau],
\end{equation}
where $H_P$ is the problem Hamiltonian, and $H_D$ is the \emph{transverse field Hamiltonian}, which gives the kinetic term that does not commute with $H_P$ allowing quantum superpositions improving the exploration of the solution landscape. $\Gamma:[0,\tau]\rightarrow [0,1]$ is a smooth decreasing function that attenuates the kinetic term, driving the system towards the ground state of $H_P$.
In practical terms, the problem Hamiltonian can be represented as a {\em quantum Ising spin glass}:
\begin{equation}\label{eq:Hp}
H_P =\sum_{i=1}^n h_i\sigma_z^{(i)} + \sum_{i<j}^nJ_{ij}\sigma_z^{(i)}\sigma_z^{(j)}
\end{equation}
where $\sigma_z^{(i)}$ denotes the linear operator defined on the $n$-qubit Hilbert space $(\mathbb C^2)^{\otimes n}$ which acts as the $Z$-Pauli matrix on the $i$-th qubit and as the identity elsewhere, while $J_{ij}$ and $h_i$ are real parameters that fully characterize the optimization problem by describing the interaction and local field terms, respectively. The transverse field Hamitlonian is $H_D=\sum_i d_i\sigma_x^{(i)}$, where $d_i\in\mathbb R$ and $\sigma_x^{(i)}$ acts as the $X$-Pauli matrix on the $i$-th qubits and trivially elsewhere.
The ground state of the problem Hamiltonian corresponds to the spin configuration $\mathbf{z}^*\in\{-1,1\}^n$ that minimizes the energy function:
\begin{equation}
E(\mathbf{J},\mathbf{b},\mathbf{z}) = \sum_{i} h_iz_i + \sum_{i<j}^n J_{ij}z_iz_j, \hspace{0.3cm} \mathbf{z}=(z_1,...,z_N) \in \{-1,1\}^n
\end{equation}
\\
While the problem Hamiltonian (\ref{eq:Hp}) represents a fully connected spin glass, achieving this level of connectivity in reality is infeasible for actual quantum annealers. In fact, the limited connectivity between qubits, along with the restricted number of available qubits, are significant constraints faced by current quantum annealing hardware. Addressing the existing challenges related to hardware limitations is necessary to fully exploit the advantages of quantum annealers and pave the way for solving more complex and practical optimization problems.

\subsection{Kemeny ranking method}

Kemeny ranking is a method used to compute a consensus ranking from individual preferences. It determines the ranking that minimizes the total number of pairwise disagreements among all individual rankings. This approach proves particularly useful for achieving a compromise or establishing a collective preference in the presence of diverse opinions. Kemeny ranking efficiently aggregates individual preferences, offering a solution that balances fairness and consensus.

Let us recall that a {\em total preorder} (or {\em weak order}) on a set $X$ is a binary relation $\prec$ on $X$ that is reflexive ($x\prec x$, $\forall x\in X$), transitive ($x\prec y \mbox{ and } y\prec z \Rightarrow x\prec z$, $\forall x,y,z\in X$), and strongly connected ($\forall x,y\in X$, $x\prec y$ or $y\prec x$). If a total preorder $\prec$ is also antisymmetric ($x\prec y$ and $y\prec x$ $\Rightarrow$ $x=y$) then it is said to be a {\em total order}.

\vspace{0.5cm}

\begin{definition}
    A {\bf ranking} (or a {\bf list}) over a set $C$ is a total order on $C$. The elements of a set on which a ranking is defined are called {\bf candidates}. If $\prec$ is a ranking over $C$ and $c\prec c'$ then $c$ is said to be {\bf preferred} to $c'$. \\
    A {\bf ranking with ties} over a set $C$ is a total preorder on $C$.
    
\end{definition}

\vspace{0.5cm}
\noindent
The set of all the rankings over $C$ is denoted by $\Pi_C$. A ranking $\prec$ over $C$ can be also identified to the totally ordered set $(C,\prec)$ itself. While a totally preordered set desxcribes a ranking where ex aequo are possible, that is $x\prec y$ and $y\prec x$ with $x\not =y$ is allowed. In the following, we consider ranking {\em without} ties.

By definition, a ranking $\prec$ on a finite set $C$ can be equivalently expressed by a bijective function $\pi:C\rightarrow \{1,...,n\}$, where $|C|=n$, such that $c_i\prec c_j$ if and only if $\pi(c_i)\leq \pi(c_j)$. In case of a ranking with ties, the associated function is not injective.  Let us denote the ranking univocally represented by the function $\pi$ as $\prec_\pi$.

The Kendall tau distance quantifies the total pairwise disagreement between rankings defined over the same set.

\vspace{0.5cm}

\begin{definition}\label{def:KT}
Let $\pi_1$ and $\pi_2$ two rankings over the finite set $C=\{c_1,...,c_n\}$, the {\bf Kendall tau (KT) distance} between $\pi_1$ and $\pi_2$ is defined by 
\begin{equation}
    KT(\pi_1,\pi_2) := \sum_{i<j} \overline{K}_{i,j}(\pi_1,\pi_2)
\end{equation}
where
\begin{equation}
    \overline{K}_{i,j}(\pi_1,\pi_2):=\left\{\begin{array}{cc}
         0& \mbox{\em if } (c_i\prec_{\pi_1} c_j \wedge c_i\prec_{\pi_2} c_j)\vee (c_j\prec_{\pi_1} c_i \wedge c_j\prec_{\pi_2} c_i)  \\
         & \\
         1& \mbox{\em otherwise}\hspace{6cm}
    \end{array}\right.
\end{equation}
in other words, the term \(\overline{K}_{i,j}(\pi_1,\pi_2)\) is 0 when elements $c_i$ and $c_j$ maintain the same relative order in both rankings, and 1 otherwise.
\end{definition}

\vspace{0.5cm}

\noindent
In definition \ref{def:KT}, the elements of $C$ are arbitrarily labelled to provide the analytic definition of the KT distance, let us remark that these labels do note relate to any ranking defined over $C$ a priori. 
The goal of the Kemeny ranking method is finding an optimal ranking which aggregates the rankings from a set \(\Pi\subset\Pi_C\) as a minimizer of the cumulative KT distance to all other rankings in the set.

\vspace{0.5cm}

\begin{definition}\label{def:opt}
Let \(\Pi\) be a set of rankings over a finite set $C$, a {\bf Kemeny ranking} for $\Pi$ is $\pi_o\in\Pi_C$ such that:
    \begin{equation}\label{eq:opt}
    \pi_o := \argmin_{\pi\in\Pi_C} \left( \sum_{\widetilde\pi\in\Pi} KT(\widetilde\pi, \pi) \right).
\end{equation}
\end{definition}

\vspace{0.5cm}

Let us remark that the optimization in (\ref{eq:opt}) is over $\Pi_C$ then, in general, $\pi_o$ does not belong to $\Pi$.
The Kemeny ranking is highly valued for its theoretical properties, particularly its ability to minimize disagreements and ensure that the consensus ranking closely aligns with individual rankings. 
    It is axiomatically attractive, as it satisfies several desirable properties such as neutrality, consistency, and the {\em Condorcet criterion}, which means it will rank the alternative that wins all pairwise comparisons (known as the Condorcet winner) in the top position \cite{Gpu}.
    Kemeny ranking has an interpretation as a maximum likelihood estimator, making it well-suited to epistemic social choice, which assumes that there is a "ground truth" or correct ranking that the method is trying to approximate \cite{rausr}.
    The method is justified in contexts where there is an assumption of a correct order that the rankings are trying to estimate, such as in expert rankings or when aggregating individual judgments to arrive at a collective decision \cite{rausr}.
However, Kemeny ranking is computationally complex (NP-hard), making it challenging to compute for a large number of candidates. It is also coNP-complete \cite{conp} to verify if a given ranking is a Kemeny ranking.
In summary, Kemeny ranking is an effective tool for deriving consensus rankings by minimizing pairwise disagreements, using the Kendall tau distance to measure dissimilarities. It is applicable in voting systems and decision-making processes, providing an efficient method for preference aggregation that respects the majority consensus.

\subsection{Partial and weighted lists}

In the context of Kemeny ranking and decision theory, a partial list is a ranking that represents preferences over a subset of the available alternatives. In other words, instead of ordering all the alternatives under consideration, a partial list orders only some of them, capturing individual preferences for those specific alternatives. In particular, the $k$-top list is a partial list formed by the first $k$ alternatives referring to a given ranking. 

\vspace{0.3cm}

\begin{definition}
    A {\bf partial list} over a finite set $C$ is any ranking over a subset $D\subset C$. Given a ranking $\prec$ over $C$ and $k\leq |C|$, the {\bf $k$-top list} w.r.t. $\prec$ is the partial list $(D,\prec)$ where $D\subset C$, with $|D|=k$, and $d\prec c$ for any $c\in C\setminus D$.
\end{definition}

\vspace{0.3cm}

\noindent
Another relevant notion is that of weighted list that is a ranking in which pairwise comparisons between alternatives are associated with numerical weights. In practical situations, these weights reflect the relative importance of preferences expressed by an individual or a group of decision makers.

\vspace{0.3cm}

\begin{definition}
    A {\bf weighted list} over $C$ is a pair $(\prec, \omega)$ where $\prec$ is a ranking and $\omega:C\times C\rightarrow \mathbb R^+$ is a weight function.
\end{definition}

\vspace{0.3cm}

\noindent
In the definition above, the value $\omega(c_i,c_j)$ is the weight assigned to the preference $c_i\prec c_j$.
The weighted lists are used to refine the process of aggregation of preferences in the Kemeny ranking, allowing a more detailed and precise representation of the priorities of the decision makers. This method allows you to significantly influence the final consensus order, more accurately reflecting collective preferences.

\section{Representation of the Kemeny aggregation into a quantum annealer}
In this section, we present the QUBO formulation for our Kemeny ranking approach and its representation into a quantum annealer. Rather than having the annealer directly output the optimal list, we choose to output the binary configuration which represents the optimal list in terms of a strictly upper triangular binary matrix, which is then used to derive the optimal list through classical computation.

\subsection{QUBO formulation of the Kemeny ranking problem}\label{sec:QUBO}

Suppose we start from a set of rankings $\Pi$ over a fnite set $C$;
the first step is to construct the {\em pairwise comparison matrix} $W=\{w_{ij}\}$ which indicates how many times a candidate is preferred over any other in the dataset:
\begin{equation}\label{eq:w}
    w_{ij}:=\sum_{\pi\in\Pi}(c_i\prec_{\pi} c_j),
\end{equation}
where $(c_i\prec_{\pi} c_j)=1$ if $c_i$ is preferred to $c_j$ in the list $\pi$ (i.e. $c_j\prec_{\pi} c_i$ is true), and $(c_i\prec_{\pi} c_j)=0$ otherwise. In other words, $w_{ij}$ is the number of rankings in $\Pi$ where $c_i$ is preferred to $c_j$.
Next, we define the following cost function to minimize:
\begin{equation}\label{eq:C_1}
    \mathcal C(X):=\sum_{i<j}^n(w_{ji}-w_{ij})x_{ij}=\sum_{i<j}^nb_{ij}x_{ij},
\end{equation}
where $b_{ij}=(w_{ji}-w_{ij})$ are biases forming the {\em bias matrix} $B=\{b_{ij}\}$ and the binary variables $\{x_{ij}\}_{i<j}$ defines a $n\times n$ binary matrix $X$ that is {\em strictly upper triangular}, i.e. $x_{ij}=0$ for any $i\geq j$. Let us denote the set of $n\times n$ strictly upper triangular binary matrices by $\mathcal M_{n,\bB}$.

\vspace{0.3cm}

\begin{definition}\label{def:X}
Given a ranking $\pi$ over a set $C=\{c_1,...,c_n\}$ we say that $X\in\mathcal M_{n,\bB}$ {\bf represents} $\pi$ if $x_{ij}=1$ when $c_i\prec_\pi c_j$ and $x_{ij}=0$ otherwise for any $i< j$. 
\end{definition}

\vspace{0.3cm}

\noindent
Conversely, any $X\in\mathcal M_{n,\bB}$ induces a strongly connected relation $\prec$ on a set $C=\{c_1,...,c_n\}$, defined by:
\begin{equation}\label{eq:X}
\mbox{$c_i\prec c_j$ if $x_{ij}=1\quad$ and $\quad c_j\prec c_i$ if $x_{ij}=0,\quad$ for $i< j$.}    
\end{equation}
In general, we have the following elementary technical result providing a necessary and sufficient condition such that $X\in\mathcal M_{n,\bB}$ represents a ranking.
\vspace{0.3cm}

\begin{proposition}\label{prop:X}
Let be a finite set $C$, with $|C|=n$ and $\prec$ be the strongly connected relation on $C$ represented by $X\in\mathcal M_{n,\bB}$ as in (\ref{eq:X}). Then, $\prec$ is a ranking if and only if $x_{ij}=x_{jk}=1$ implies $x_{ik}=1$ for any $i<j<k$.
\end{proposition}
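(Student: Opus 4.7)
The plan is to reduce the equivalence to a triple-by-triple analysis. Since (\ref{eq:X}) already forces $\prec$ to be strongly connected and antisymmetric on every pair $i<j$ (exactly one of $c_i\prec c_j$ or $c_j\prec c_i$ is selected), being a ranking is the same as being transitive, and transitivity on a finite set is equivalent to transitivity restricted to every three-element subset.

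The forward direction is immediate: given $i<j<k$ with $x_{ij}=x_{jk}=1$, (\ref{eq:X}) yields $c_i\prec c_j$ and $c_j\prec c_k$; transitivity of the ranking then gives $c_i\prec c_k$, and since $i<k$, the same clause of (\ref{eq:X}) forces $x_{ik}=1$.

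For the converse I would pick three distinct candidates, sort their indices as $i<j<k$, and enumerate the $2^3=8$ possible values of $(x_{ij},x_{jk},x_{ik})$. A direct check shows that six of them induce one of the six linear orders on $\{c_i,c_j,c_k\}$, while the two remaining patterns $(1,1,0)$ and $(0,0,1)$ produce respectively the opposite $3$-cycles $c_i\prec c_j\prec c_k\prec c_i$ and $c_i\prec c_k\prec c_j\prec c_i$. The stated hypothesis rules out the first pattern directly; once every triple is cycle-free, the restriction of $\prec$ to any three candidates is a linear order, and global transitivity follows.

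The main obstacle I anticipate is ruling out the $(0,0,1)$ pattern in the converse, since it is not literally within the scope of the premise as written. Closing this gap requires either invoking the mirror condition $x_{ij}=x_{jk}=0\Rightarrow x_{ik}=0$ (equivalently, applying the triangle hypothesis to the bit-complement $\bar X$, which represents the reverse order) or reformulating the hypothesis as the assertion that no sorted triple carries a $3$-cycle pattern. With that handled, the eight-case enumeration immediately yields transitivity on every triple, and hence $\prec$ is a total order.
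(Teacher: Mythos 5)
Your analysis is correct and is in fact more careful than the paper's own proof, which disposes of transitivity in a single sentence by asserting that the sorted-triple implication ``is equivalent to'' transitivity. The obstacle you flag is real and is a defect of the proposition as literally stated, not of your argument: on a sorted triple $i<j<k$ the pattern $(x_{ij},x_{jk},x_{ik})=(0,0,1)$ induces the cycle $c_i\prec c_k\prec c_j\prec c_i$, yet it satisfies the stated hypothesis only vacuously. Concretely, for $n=3$ the matrix with $x_{12}=x_{23}=0$ and $x_{13}=1$ meets the condition ``$x_{ij}=x_{jk}=1\Rightarrow x_{ik}=1$ for all $i<j<k$'' while the induced relation is not transitive, so the ``if'' direction fails as written. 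The paper's proof does not notice this because it silently identifies the index-increasing instance of transitivity with full transitivity, which quantifies over all ordered triples of candidates, not only those whose labels happen to be sorted. (The same asymmetry infects Definition~\ref{def:cycles}, which names only the $(1,1,0)$ pattern a cycle.)

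Your proposed repair is the right one, and it is what the paper implicitly uses downstream: the penalty polynomial $c_{ijk}=x_{ik}+x_{ij}x_{jk}-x_{ij}x_{ik}-x_{jk}x_{ik}$ evaluates to $1$ exactly on the two patterns $(1,1,0)$ and $(0,0,1)$ and to $0$ on the six patterns inducing linear orders, so the optimization in fact enforces both your condition and its bit-complement mirror $x_{ij}=x_{jk}=0\Rightarrow x_{ik}=0$. With the hypothesis strengthened in either of the two ways you describe, your eight-case enumeration (six linear orders, two opposite $3$-cycles) together with the standard fact that a tournament is transitive iff it contains no $3$-cycle completes the converse; the forward direction and the reflexivity/antisymmetry observations agree with the paper's.
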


\begin{proof}

$\prec$ is reflexive since we have $x_{ii}=0$ that means $c_i\prec c_i$ for any $i=1,...,n$, observe that also the property $x_{ii}=1$, for any $i$, would imply the reflexivity as well but it is excluded since $X$ has zero diagonal. 
If $i\not =j$ we have that $c_i\prec c_j$ ($x_{ij}=1$) and $c_j\prec c_i$ ($x_{ij}=0$) are mutually exclusive then $\prec$ is antisymmetric.
The property $x_{ij}=x_{jk}=1$ implies $x_{ik}=1$ is equivalent, by (\ref{eq:X}), to $c_i\prec c_j $ and $c_j\prec c_k$ imply $c_i\prec c_k$ that is transitivity.  
\end{proof}

\vspace{0.2cm}

\begin{definition}\label{def:cycles}
     If $X\in\mathcal M_{n,\bB}$ represents a non-transitive relation then the triple $(i,j,k)$ such that $x_{ij}=x_{jk}=1$ and $x_{ik}=0$ is called {\bf cycle} of $X$. If $X$ does not present cycles it is said to be {\bf acyclic}.
\end{definition}

\vspace{0.3cm}

\noindent
In other words, if $X\in\mathcal M_{n,\bB}$ admits cycles then it does not represent a ranking over a set of cardinality $n$, since the corresponding total relation fails transitivity. In fact, if $(i,j,k)$ is a cycle of $X$ then the corresponding total relation satisfies $c_i\prec c_j\prec c_k\prec c_i$. Thus, the set 
\begin{equation}
    \mathcal M_{n,\bB}^a:=\{X\in\mathcal M_{n,\bB}\, :\, \mbox{$X$ is acyclic}\},
\end{equation}
is in bijective correspondence to the set of rankings over a set of cardinality $n$.

Now, let us clarify the meaning of the cost function (\ref{eq:C_1}), the values assumed by the biases $b_{ij}$ entails three main possibilities:
\begin{enumerate}
    \item $b_{ij}>0$: implies that $c_j$ is preferred to $c_i$ most of the times.
    Minimization requires $x_{ij}=0$; 
    \item $b_{ij}<0$: implies that $c_i$ is preferred to $c_j$ most of the times.
    Minimization requires $x_{ij}=1$;
    \item $b_{ij}=0$: indicates the absence of a clear preference. For minimization, the value of $x_{ij}$ is irrelevant and will be randomly assigned. This situation can occur when we consider a data set with an even number of votes, or when we deal with partial lists.
\end{enumerate}

More precisely, we have the following result relating an optimal ranking in the sense of definition \ref{def:opt} to a minimum of the cost function (\ref{eq:C_1}).

\vspace{0.3cm}

\begin{theorem}\label{thm:Xo}
    Let $\Pi$ be a set of rankings over the finite set $C$ with $|C|=n$, and $\pi_o\in\Pi_C$ be the ranking represented by the acyclic matrix $X_o\in\mathcal M_{n,\bB}^a$. Then, $\pi_o$ is a Kemeny ranking for $\Pi$ if and only if:
    \begin{equation}\label{eq:Xo}
        X_o=\argmin_{X\in\mathcal M^a_{n,\bB}} \mathcal C(X).
    \end{equation}
    where $\mathcal C:\mathcal M^a_{n,\bB}\rightarrow\mathbb R$ is the cost function defined in (\ref{eq:C_1}).
\end{theorem}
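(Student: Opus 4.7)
The plan is to show that the objective in Definition \ref{def:opt} differs from $\mathcal C(X)$ by an additive constant (independent of $\pi$ and of $X$) once we identify $\pi$ with its representation $X$. Since Proposition \ref{prop:X} yields a bijection between $\Pi_C$ and $\mathcal M_{n,\bB}^a$, the two argmin problems have the same optimizers, which is exactly the claim.

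First I would work out, for a single pair $i<j$ and a single $\widetilde\pi\in\Pi$, the indicator $\overline K_{i,j}(\widetilde\pi,\pi)$ in terms of $x_{ij}$. By Definition \ref{def:X}, for $i<j$ we have $x_{ij}=1$ iff $c_i\prec_\pi c_j$ and $x_{ij}=0$ iff $c_j\prec_\pi c_i$. Hence $\overline K_{i,j}(\widetilde\pi,\pi)=0$ precisely when $(c_i\prec_{\widetilde\pi} c_j)$ coincides with $x_{ij}$, which gives the compact formula
\begin{equation*}
\overline K_{i,j}(\widetilde\pi,\pi)=x_{ij}\bigl(1-(c_i\prec_{\widetilde\pi} c_j)\bigr)+(1-x_{ij})(c_i\prec_{\widetilde\pi} c_j).
\end{equation*}
Summing over $\widetilde\pi\in\Pi$ and using the definition (\ref{eq:w}) of $w_{ij}$, together with the fact that rankings in $\Pi$ are without ties, so that $w_{ij}+w_{ji}=|\Pi|$, this telescopes to
\begin{equation*}
\sum_{\widetilde\pi\in\Pi}\overline K_{i,j}(\widetilde\pi,\pi)=x_{ij}(|\Pi|-w_{ij})+(1-x_{ij})w_{ij}=w_{ij}+(w_{ji}-w_{ij})x_{ij}=w_{ij}+b_{ij}x_{ij}.
\end{equation*}

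Then I would sum over all pairs $i<j$ to obtain
\begin{equation*}
\sum_{\widetilde\pi\in\Pi}KT(\widetilde\pi,\pi)=\underbrace{\sum_{i<j}w_{ij}}_{=:K_\Pi}+\sum_{i<j}b_{ij}x_{ij}=K_\Pi+\mathcal C(X),
\end{equation*}
where $K_\Pi$ depends only on the input data $\Pi$ and not on $\pi$ (or $X$). From here the theorem is essentially immediate: minimizing the left-hand side over $\pi\in\Pi_C$ is the same as minimizing $\mathcal C(X)$ over the image of the representation map $\pi\mapsto X$. Proposition \ref{prop:X} and Definition \ref{def:cycles} identify this image with $\mathcal M^a_{n,\bB}$, so the two minimization problems are equivalent and share their optimizers, giving both directions of the ``if and only if''.

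The only genuinely delicate step is the indicator-to-algebra manipulation in the first paragraph; one has to be careful that $w_{ij}$ counts rankings in which $c_i$ is preferred to $c_j$ (i.e.\ $c_i\prec_{\widetilde\pi} c_j$) in precisely the convention used in (\ref{eq:w}), so that the no-ties identity $w_{ij}+w_{ji}=|\Pi|$ really produces $b_{ij}=w_{ji}-w_{ij}$ with the correct sign. Everything after that is a bookkeeping summation plus the bijection already established in Proposition \ref{prop:X}.
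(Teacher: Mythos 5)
Your proposal is correct and follows essentially the same route as the paper: both establish the identity $\sum_{\widetilde\pi\in\Pi}\overline K_{i,j}(\widetilde\pi,\pi)=w_{ij}+b_{ij}x_{ij}$ for each pair, sum over $i<j$ to conclude that the cumulative Kendall tau objective equals $\mathcal C(X)$ plus the constant $\sum_{i<j}w_{ij}$, and then invoke the bijection between rankings and acyclic matrices. Your explicit indicator formula merely packages uniformly the two cases ($x_{ij}=1$ and $x_{ij}=0$) that the paper treats by direct inspection.
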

\begin{proof}
    Let us consider the cost function of the optimization problem defined by (\ref{eq:opt}):
    \begin{equation}\label{eq:costKT}
        F_\Pi(\pi):=\sum_{\widetilde\pi\in\Pi} KT(\widetilde\pi,\pi)=\sum_{\widetilde\pi\in\Pi} \sum_{i<j}^n \overline{K}_{ij}(\widetilde\pi,\pi)=\sum_{i<j}^n\sum_{\widetilde\pi\in\Pi}\overline{K}_{ij}(\widetilde\pi,\pi).
    \end{equation}
Taking an arbitrary ranking $\pi\in\Pi_C$ and assuming that $c_i\prec_\pi c_j$ for some $i,j\in\{1,...,n\}$, by definition of Kendall tau distance the term $\sum_{\widetilde\pi\in\Pi}\overline{K}_{ij}(\widetilde\pi,\pi)$ is the number of rankings in $\Pi$ such that $c_j$ is preferred to $c_i$, i.e. $c_i$ and $c_j$ are not ordered like in $\pi$.\\
Now, consider the optimization of $\mathcal C$ which is equivalent to the minimization of:
    \begin{equation}\label{eq:C2}
        \mathcal C_2(X)= \sum_{i<j}^n(b_{ij}x_{ij}+w_{ij}),
    \end{equation}
because of the mere addition of the constant $\sum_{i<j} w_{ij}$ to $\mathcal C$. Taking $X$ which represents $\pi$, we have $x_{ij}=1$ since $c_1\prec_\pi c_j$, therefore $b_{ij}x_{ij}+w_{ij}=w_{ji}$ that is the number of rankings in $\Pi$ such that $c_j$ is preferred to $c_i$ by (\ref{eq:w}). Therefore:
\begin{equation}
    \sum_{\widetilde\pi\in\Pi}\overline{K}_{ij}(\widetilde\pi,\pi)=b_{ij}x_{ij}+w_{ij}\qquad \forall i,j\in\{1,...,n\},
\end{equation}
then $F_\Pi(\pi)=\mathcal C_2(X)$. In view of the arbitrary choice of $\pi$ in $\Pi_C$, we can conclude that $\pi_o$ is an optimum of the cost function $F_\Pi$ defined in  (\ref{eq:costKT}) if and only if its representing matrix $X_o$ optimizes $\mathcal C_2$ and, equivalently, $\mathcal C$.
\end{proof}

\noindent
By theorem \ref{thm:Xo}, the problem of finding a Kemeny ranking for a set of lists $\Pi$ consists in solving the constrained optimization problem (\ref{eq:Xo}), where the constraint is posed by the requirement that the matrices are acyclic, whose  unconstrained version is the optimization over $\mathcal M_{n,\bB}$ that is trivial since any bias $b_{ij}$ forces the binary variable $x_{ij}$ to be either 0 or 1. In order to apply the constraint of absence of cycles, we incorporate in the cost function of the optimization over $\mathcal M_{n,\bB}$ a penalty term for any cycle:
\begin{equation}
    \mathcal P_{(i,j,k)}(X):=P_{ijk}\underbrace{(x_{ik}+x_{ij}x_{jk}-x_{ij}x_{ik}-x_{jk}x_{ik})}_{\text{$c_{ijk}$}},
\end{equation}
where $P_{ijk}\in\mathbb R^+$ is the penalty coefficient specific for the cycle $(i,j,k)$ and, by definition \ref{def:cycles}, we have $c_{ijk}=1$ if $(i,j,k)$ is a cycle, $c_{ijk}=0$ otherwise.
In practice, the values of $P_{ijk}$ must be sufficiently large to ensure that the penalty term effectively prioritizes cycle removal over the gain given by the linear term. For preventing any potential cycle that may arise, a penalty term is added for each possible cycle resulting in the following contribution to the cost function:
\begin{equation}
    \mathcal C_c(X)=\sum_{i<j}^n\sum_{k<j}^nP_{ijk}(x_{ik}+x_{ij}x_{jk}-x_{ij}x_{ik}-x_{jk}x_{ik}),
\end{equation}
which vanishes if and only if $X$ is acyclic. Then, the total cost function is:
\begin{equation}\label{eq:Ctot}
    \mathcal C_{tot}(X)=\sum_{i<j}^nb_{ij}x_{ij}+\sum_{i<j}^n\sum_{k<j}^nP_{ijk}(x_{ik}+x_{ij}x_{jk}-x_{ij}x_{ik}-x_{jk}x_{ik}).
\end{equation}
The binary variables $x_{ij}$, corresponding to the elements of $X$ above the diagonal, can be rearranged into a binary vector $(z_1,...,z_{\frac{n(n-1)}{2}})$ 
obtaining the QUBO formulation of the optimization of $\mathcal C_{tot}$ over $\mathcal M_{n,\bB}$ which corresponds to the constrained optimization of $\mathcal C$ over $\mathcal M_{n,\bB}^a$ that is equivalent to the problem of finding a Kemeny ranking by minimizing the cumulative Kendall tau distance by \autoref{thm:Xo}.

\subsection{Choice of $P_{ijk}$}
The choice of $P_{ijk}$ is crucial in penalizing the occurrence of the cycle $(i,j,k)$. For simplicity, let us assume that $P_{ijk}$ is constant across all cycles and denote it as $P$. Therefore, we need to determine an appropriate value for $P$ that strikes a balance between the effectiveness of cycle penalization and the performance of the annealer.
Ideally, we want $P$ to be as small as possible since larger values can negatively impact the annealer's efficiency. Initially, one might think to set the strict requirement that $P$ must be greater than the number of votes in the dataset ($P > |\Pi|$) since it is the maximum value a bias can have.
Instead, we believe that by carefully analyzing the cycles and exploring how they can be concatenated, it is possible to decrease this value.
For instance, by considering a single cycle featuring an odd number of votes, the request can be diminished to $P > |\Pi|-2$ (the proof can be found in Appendix \ref{secA1}). This relaxation of the constraint is feasible due to the fact that if one of the biases happens to have a value equal to $|\Pi|$, it becomes impossible for the remaining two to take values that would allow the triplet to form a cycle.
However, this reduction is not as straightforward when dealing with a dataset containing an even number of votes.\\
While $P > |\Pi|-2$ guarantees cycle removal, it's not always strictly needed. So, while the requirement ensures certainty, many situations allow flexibility without compromising cycle elimination.

In fact, to further optimize and potentially reduce the value of $P$, we can perform additional calculations to determine the maximum possible value of the bias coefficients. Then, we will select the penalty coefficient as the smaller value between this maximum and the theoretical required value. This approach will ensure that the value of $P$ is as small as possible while still being sufficiently high to remove all cycles:
\begin{align}
P_{even} &> \text{min}(\text{max}(|b_{ij}|),|\Pi|)\\
P_{odd} &> \text{min}(\text{max}(|b_{ij}|),|\Pi|-2) \label{minmax}
\end{align}

\noindent
where $P_{even}$ stands for the penalty coefficient for datasets with an even number of votes, while $P_{odd}$ represents the penalty coefficient for datasets with an odd number of votes.
This additional process carries a time complexity of $\mathcal{O}(n^2)$, with $n$ being the number of candidates.
We believe there is potential to further improve these results and reduce $P$ even more.
However, these efforts will become unnecessary with the introduction of one of the two improvements described below. Specifically, the \textsf{iterative method} outlined in Section \ref{im}, which involves assigning a custom value to $P_{ijk}$ for each cycle instead of treating it as a constant, addresses this issue.

\subsection{Hybrid quantum-classical model for ranking aggregation}
In the following, we call {\em base model} (BM) the overall quantum-classical procedure given by the following steps: 
\begin{enumerate}
    \item initialization of the cost function (\ref{eq:C_1});
    \item formulation of the QUBO problem (\ref{eq:Ctot});
    \item quantum annealing to solve the QUBO problem;
    \item reconstruction of the Kemeny ranking.  
\end{enumerate}
Moreover, we should consider also the embedding procedure of the considered QUBO into the sparse annealer architecture. However, let us assume for now that the quantum architecture is connected enough so that the QUBO is directly representable.  
The process of list reconstruction, which involves identifying the ranking represented by the binary matrix $X\in\mathcal M_{n,\bB}^a$, is carried out classically using a straightforward method. 

\vspace{0.3cm}
\begin{definition}
Let $\pi$ be a ranking over a finite set $C=\{c_1,...,c_n\}$, and $X\in\mathcal M_{n,\bB}^a$ be the matrix representing $\pi$. The {\bf score} of $c_i\in C$ with respect to $\pi$ is the integer number:
\begin{equation}\label{listrec}
    V_{c_i}:=\sum_{i<j}x_{ij}+\sum_{k<i}(1-x_{ki}).
\end{equation}

\end{definition}

\vspace{0.3cm}

In practice, for each candidate, we calculate a score by counting the number of elements with a value of 1 in the row of $X$, and the number of elements with a value of 0 in the column corresponding to the candidate $c_i$. 
The sum of these counts provides the score $V_{c_i}$ as the number of candidates such that $c_i$ is preferred over. 
In other words, the score of $c_i$ can be alternatively defined as the cardinality:
\begin{equation}
    V_{c_i}=\left|\left\{c\in \pi\, :\, c_i\prec c  \right\}\right|.
\end{equation}
After calculating the scores for each candidate, we arrange them in decreasing order based on these scores. Ideally, each candidate score should be distinct; however, in cases of approximate solutions where not all cycles are removed, it is conceivable to have two or more candidates sharing the same score.

\begin{figure}[!h]
    \centering
    \includegraphics[scale=0.5]{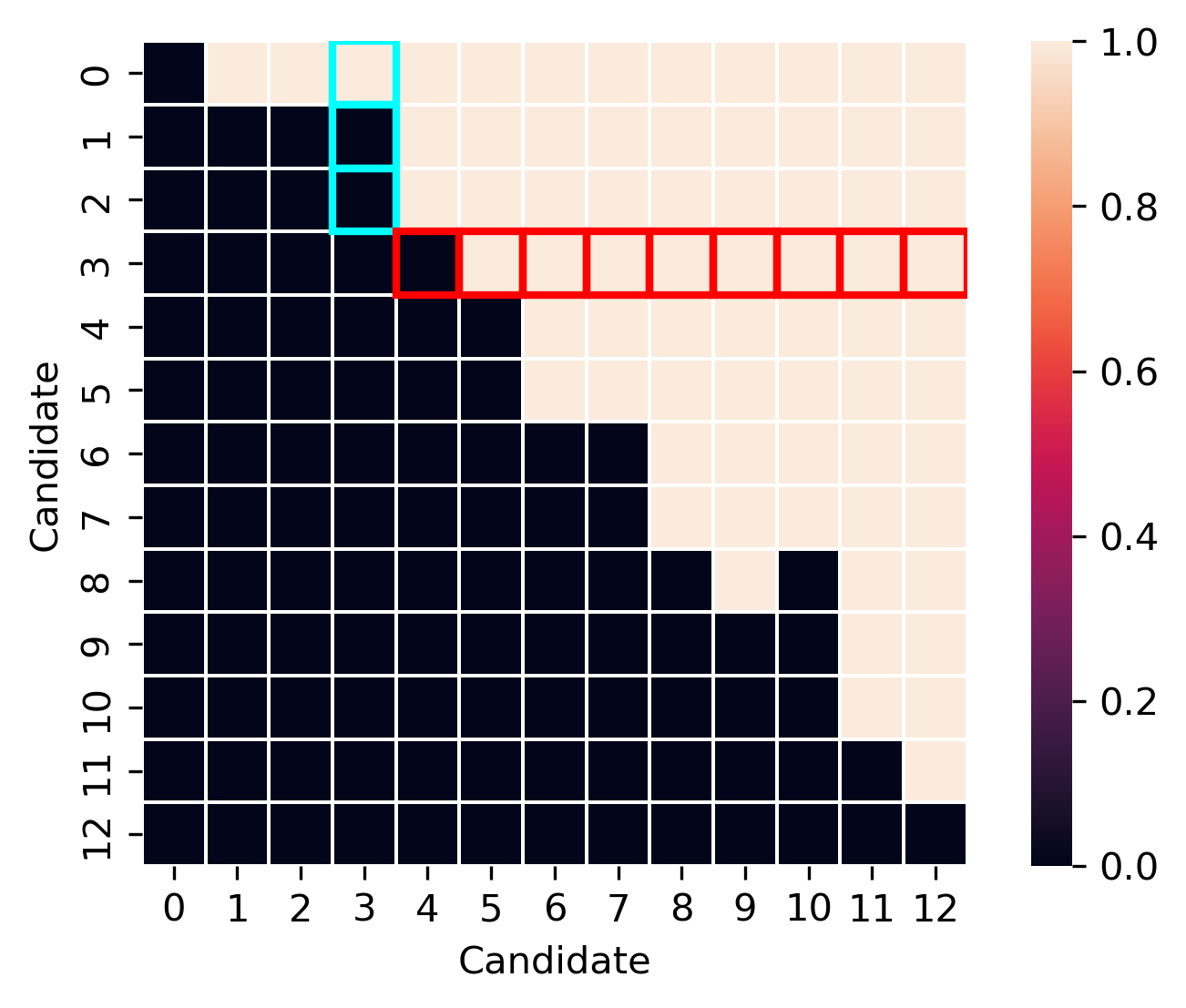}
    \caption{Graphical representation of the matrix elements contributing to the $V_{c_3}$ score. The matrix elements from the first sum are highlighted in red, while those from the second sum are shown in cyan.}
    \label{rossoeciano}
\end{figure}
To address this scenario, rather than considering all possible degeneracies and calculating the best list classically, we take a more conservative approach: when faced with candidates having identical scores, we randomly choose the order for those candidates. This ensures that we do not overly boost the performance of the model.

\vspace{0.3cm}

\begin{proposition}\label{prop:complexity}
Let $\Pi$ be a set of rankings over a finite set $C$ with $|C|=n$. The classical part of BM that returns a Kemeny ranking for $\Pi$ operates in time $\mathcal{O}\left(|\Pi|n^2\right)$.
\end{proposition}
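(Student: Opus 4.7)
The plan is to isolate the three classical subroutines of BM, bound each individually, and take the maximum. The classical pipeline consists of (i) building the pairwise comparison matrix $W$ from the input set $\Pi$ and deriving the bias matrix $B$, (ii) instantiating the QUBO coefficients of $\mathcal{C}_{tot}$ of (\ref{eq:Ctot}), and (iii) reconstructing a ranking from the acyclic matrix $X\in\mathcal M_{n,\bB}^a$ returned by the annealer via the scores in (\ref{listrec}).

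First I would analyze the preprocessing. By (\ref{eq:w}), filling $W$ amounts to scanning every ranking $\pi\in\Pi$ and, for each, updating the $\binom{n}{2}$ entries corresponding to ordered candidate pairs; this costs $\mathcal O(|\Pi|\, n^2)$. The bias matrix $B=\{b_{ij}\}$ with $b_{ij}=w_{ji}-w_{ij}$ is then obtained in $\mathcal O(n^2)$. Together this gives an $\mathcal O(|\Pi|\, n^2)$ bound for step (i).

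Next I would analyze the postprocessing (step (iii)). For each candidate $c_i$, formula (\ref{listrec}) sums at most $n-1$ entries of $X$, so computing all $n$ scores takes $\mathcal O(n^2)$, and sorting the candidates by score adds only $\mathcal O(n\log n)$. Possible ties are broken by a random ordering, which is done in constant time per tie. Hence the reconstruction is $\mathcal O(n^2)$.

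The main subtle point, which I expect to be the only real obstacle, is step (ii): the penalty contribution in $\mathcal C_{tot}$ ranges over triples $(i,j,k)$ and so naively produces $\mathcal O(n^3)$ QUBO coefficients. The way to reconcile this with the stated bound is to observe that the proposition refers to the regime in which the bias data is informative, i.e.\ $|\Pi|\gtrsim n$, so $n^3=\mathcal O(|\Pi|\, n^2)$ and the QUBO setup is absorbed; alternatively, one can regard the cycle-penalty instantiation as part of the annealer interface rather than as part of the ``return a Kemeny ranking'' computation, in which case it is not counted at all. Under either reading, combining the bounds for (i)--(iii) yields the claimed complexity $\mathcal O(|\Pi|\, n^2)$.
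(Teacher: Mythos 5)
Your decomposition matches the paper's own proof: it counts $|\Pi|\binom{n}{2}$ operations for building $W$, $\mathcal O(n^2)$ for computing the scores $V_{c_i}$, and $\mathcal O(n^2)$ (or $\mathcal O(n\log n)$) for the final sort, summing to $\mathcal O(|\Pi|\,n^2)$. Your treatment of step (ii) is in fact more careful than the paper's, which silently omits the $\mathcal O(n^3)$ cost of instantiating the cycle-penalty coefficients; the paper's accounting implicitly adopts your second reading, treating the QUBO initialization as part of the annealer interface rather than of the classical pipeline.
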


\begin{proof}
    The construction of the comparison matrix as defined in (\ref{eq:w}) requires all the possible comparisons among the elements of $C$ for any ranking in $\Pi$ that is: 
    \begin{equation}
        N_W=|\Pi|\binom{n}{2}=|\Pi|\frac{n(n-1)}{2}.        
    \end{equation}
    Once initialized the quantum annealer with the bias matrix $b_{ij}=w_{ij}-w_{ji}$, it returns a global optimum of (\ref{eq:Ctot}), in time $t_{ann}$, that corresponds to a matrix $X_o\in\mathcal M^a_{n,\bB}$ representing a Kemeny ranking for $\Pi$ by Theorem \ref{thm:Xo}. In order to reconstruct the ranking, the computation of the score $V_{c_i}$ requires $N_V=\mathcal O(n)$ operations for any $i=1,...,n$, then all the scores are obtained in $\mathcal O(n^2)$ operations. Finally, the ordering of the scores in decreasing order require $N_o=\mathcal O(n^2)$ operations by brute force or $\mathcal O(n\log n)$ as a limit complexity. Therefore: $N_W+N_V+N_o=\mathcal O(|\Pi|n^2)$.      
\end{proof}

\vspace{0.3cm}

\noindent
In Proposition \ref{prop:complexity}, we assumed that the annealer architecture is connected enough to allow the direct representation of the QUBO problem without any embedding procedure. The annealing time $t_{ann}$ could be estimated by the adiabatic theorem under the assumption that the quantum annealer operates adiabatically, however in the real quantum machine the cooling is characterized by decoherence and energy dissipation. In practice, $t_{ann}$ is a user-specified parameter and the returned ground state is approximated, in this sense proposition \ref{prop:complexity} provides the effective time complexity of the base model as implemented in the present work up to the time complexity of the embedding which depends on a built-in function.


By construction, the model is able to accommodate partial lists and weighted lists.
Regarding partial lists, we focus on $k$-top lists, wherein the candidates present in the list are given a definite preference over those that are missing. The method slightly differs from the complete list procedure when calculating the elements of the pairwise comparison matrix $W$, presenting three potential scenarios: both candidates are present, one of the two candidate is missing, and both candidates are missing. In the first two cases, we can still determine an order, while in the third case, no order can be established, resulting in no contribution being made to the matrix elements. Once this step is completed, the procedure remains unchanged.

In the case of weighted lists, we introduce the ability to assign different weights to entire lists, enabling prioritization of one list over another. This is reflected in the contribution to the $w_{ij}$ matrix, which is also weighted accordingly. Furthermore, we can assign individual weights to each position within a list, providing even greater control over the order and significance of the candidates.
The model adeptly handles partial lists, ensuring order despite missing candidates.


\subsection{Enhancements}

In this section, we explore two approaches to optimize the performance of the proposed model. The first method, which we refer to as the \textsf{iterative method}, addresses the issue of penalizing an excessive number of cycles. The maximum number of cycles subject to penalties, denoted as $N_{max}$, is calculated using the formula $N_{max} = \frac{n(n-1)(n-2)}{3!}$, where $n$ represents the number of candidates in the dataset. This formula comes from combinatorics, where $n(n-1)(n-2)$ counts the number of ways to choose three distinct candidates from a total of $n$, and the division by $3!$ accounts for the fact that the order of selection does not matter when forming a cycle.
Despite this theoretical upper bound, the actual number of cycles observed during the solution process is typically much lower. This suggests that some of the penalization terms, which currently constrain the model's performance, may not be necessary for achieving an optimal solution. Thanks to our method of representing the list as a binary matrix, we can identify the cycles present in the annealer's output, offering an opportunity for further enhancement.
Thanks to the representation of the list as a binary matrix, we can identify and penalize cycles in the annealer's output, allowing for iterative refinement until convergence.
The second method, that we called ``pair removal", involves strategically excluding specific pairs from the problem formulation, with the understanding that their values can be inferred at the end of the annealing process.
By carefully selecting and removing certain pairs, we can reduce the overall complexity of the problem and potentially enhance the performance.

\subsubsection{Iterative method}\label{im}
This procedure is designed to precisely target and penalize only those cycles directly involved in the problem, applying the least possible weight to eliminate a cycle without compromising the discovery of the optimal solution. A cycle $(i,j,k)$ is directly involved in the problem when, over the considered set of rankings, $c_i$ is preferred to $c_j$ most of the times ($b_{ij}<0$), $c_j$ is preferred to $c_k$ most of the times ($b_{jk}<0$) and $c_k$ is preferred to $c_i$ most of the times ($b_{ik}>0$). An elementary example is given by the dataset $\Pi_C=\{\pi_1,\pi_2,\pi_3\}$ over $C=\{c_1,c_2,c_3\}$ such that: 
$$c_1\prec_{\pi_1} c_2\prec_{\pi_1} c_3\quad ,\quad c_2\prec_{\pi_2} c_3\prec_{\pi_2} c_1\quad ,\quad c_3\prec_{\pi_3} c_1\prec_{\pi_2} c_2,$$
in this case the dataset encodes the cycle $(1,2,3)$.
Through the selection of the cycles directly engaged in the problem and the assignment of penalties, we aim to keep the problem as simple as possible for the annealer and enhance its efficiency. In fact, the magnitude of the penalty coefficients plays a crucial role in influencing the efficiency of the annealing process.
To achieve this precise targeting, several runs of the annealer are often required. This iterative approach is essential for identifying all the relevant cycles in the datataset and for determining an appropriate value for the penalty coefficients.
To begin the procedure, we construct the matrix $\Omega$ by applying the following transformation to the bias matrix $B$ introduced in (\ref{eq:C_1}):
\begin{equation}\label{omega}
    \omega_{ij}:=\Theta(-b_{ij})
\end{equation}
where $\Theta$ is the Heaviside step function, which is defined as $\Theta(x)=0$ for $x<0$, $\Theta(x)=1$ for $x>0$, and $\Theta(0)=1/2$. Therefore,  $\omega_{ij}=1$ when $c_i$ is preferred to $c_j$ the majority of times over the dataset.
A sufficient condition for the presence of cycles is:  
\begin{equation}\label{omega1}
    (\omega_{ij}=0 \land \omega_{ik}=1 \land \omega_{jk}=0)  \lor (\omega_{ij}=1 \land \omega_{ik}=0 \land \omega_{jk}=1),
\end{equation}
where $\wedge$ and $\vee$ denote the logical conjunction and disjunction respectively. 
Additionally, for datasets with an even number of votes, we could also consider cycles that might emerge due to the random values assigned to the qubits with no bias; in such cases, the cycle detection criteria changes to:
\begin{equation}\label{omega2}
(\omega_{ij} \neq 1 \land \omega_{ik} \neq 0 \land \omega_{jk} \neq 1) \lor (\omega_{ij} \neq 0 \land \omega_{ik} \neq 1 \land \omega_{jk} \neq 0)
\end{equation}
Here, the rationale behind setting $\Theta(0)=1/2$ becomes clear.
By incorporating this change, we may end up considering a higher number of cycles than necessary. However, the advantage is that it can potentially decrease the total number of runs required to achieve the optimal output.
We associate a penalty coefficient $P_{ijk}$ with each cycle $(i,j,k)$, starting from the smallest weight that could potentially remove the cycle ($0+\epsilon$ for datasets with an even number of votes or $1+\epsilon$ for datasets with an odd number of votes, where $\epsilon$ is a small value to ensure the penalty coefficient is higher than the bias) and it may be adjusted upwards during subsequent iterations as needed.
Upon completing the annealing process, we analyze the output solution represented by the binary matrix $X$.  Specifically, we search again for cycles, and this search leads to one of the following scenarios:

\begin{enumerate}
\item \textbf{Absence of Cycles}: The absence of cycles in the matrix $X$ indicates that all necessary cycles have been effectively penalized. In this case, we consider the solution to be successfully identified.
\item \textbf{Presence of Cycles}: When cycles are detected within the output matrix, a tailored response is necessary based on the nature of the cycles:
\begin{enumerate}
    \item \textbf{New Cycles}: These cycles have not appeared in earlier iterations of the process. For each new cycle detected, we introduce it into the penalization framework and assign the minimum penalty coefficient to it.
    \item \textbf{Previously Penalized Cycles}: If the identified cycles include some of those that have been penalized in previous iterations but have recurred, it indicates that the penalization was insufficient. For these cycles, we increase the penalty coefficient by a factor of 2. The rationale behind this increment is tied to the bias spectrum's step size of two, which originates from the constraint $w_{ij} + w_{ji} = n$.\\
    It is important to note that this principle does not apply to partial
    lists, since the constraint $w_{ij} + w_{ji} = n$ is no longer present, and therefore requires a different approach.
\end{enumerate}
\end{enumerate}
We iterate this procedure until we obtain an output without cycles, at which point we have converged to an optimal solution.

However, converging to the optimal solution becomes complex in challenging situations where the problem is difficult to solve, such as when there is a high number of candidates or cycles. 
In these situations, the annealer may fail to produce the optimal output due to qubits adopting incorrect values, leading to the appearance of irrelevant cycles. 
For the same reason we can obtain an output without cycles that does not represent the optimal solution, such as when a qubit, representing a pair of closely ranked elements in the optimal list, acquires the wrong value. Consequently, while an output without cycles is achieved, the two candidates end up being swapped, resulting in a suboptimal solution.
A significant benefit of the pairwise preference representation is its flexibility when applied with the \textsf{iterative method}. Unlike conventional approaches that require pursuing convergence, this method allows us to halt the process at any desired step and settle for an approximate solution. This flexibility is made possible by the list reconstruction step, which effectively handles scenarios involving reconstructing a list from a binary configuration with cycles.

Opting for this approximate strategy necessitates a knowledge that the number of iterations does not directly correlate with the quality of the approximate solution. In other words, the Kemeny ranking of the output list at the $m$-th step may be larger than that of the list obtained at the $(m-1)$-th step. This underscores the need for a nuanced understanding of the iterative process and its outcomes when pursuing approximate solutions.

In summary, the primary objective of the \textsf{iterative method} is to penalize only the cycles directly relevant to the problem using the smallest possible penalty terms. However, the implementation of this strategy often requires multiple runs of the annealer and classical computation to identify and update the cycles involved.
The classical computational demand of a single iteration is due to cycle detection which scales as $\mathcal{O}(n^3)$, where $n$ represents the total number of candidates considered, as provided by (\ref{omega1}) and (\ref{omega2}), since the complexity of executing the base model is quadratic in $n$. 

The iterative nature of this method involves starting from the $\Omega$ matrix and creating a set of pairs (cycle, penalty coefficient) = $((i,j,k),P_{ijk}) \in \{C_n\}$ at the beginning of each iteration. This set is continuously updated by including newly appeared cycles and increasing the value of $P_{ijk}$ as necessary. This iterative process ensures that the penalty coefficients are refined over successive iterations to focus on relevant cycles and optimize the overall solution.\\
Initially, $P_{ijk}$ is set to the minimum value required to eliminate a cycle: $\epsilon$ for datasets with an even number of votes and $1+\epsilon$ for those with an odd number. The coefficient $P_{ijk}$ may then be increased by 2 at each iteration as needed.\\
At iteration $n$, the cost function we aim to minimize can be expressed as:
\begin{equation}
    C(X)=\sum_{i<j}^nb_{ij}x_{ij}+\sum_{(i,j,k)\in \{C_n\}} P_{ijk}(x_{ik}+x_{ij}x_{jk}-x_{ij}x_{ik}-x_{jk}x_{ik}).
\end{equation}

In practice, a solution is considered achieved when a cycle-free output is obtained. However, with challenging datasets, cycle-free outputs may not necessarily represent optimal solutions due to inaccuracies in the annealer's output.
Note, however, that running the same iteration multiple times can produce outputs with different cycles. Consequently, the final set of cycles {$C_n$}, which when penalized allow us to find the optimal solution, may differ between runs of the iterative method, this happens because a cycle can have the same absolute value of a bias for two or more pairs present in the cycle. Furthermore, the number of iterations required to converge to the optimal solution can also vary from run to run. 
To effectively manage such challenges, employing measures such as running the annealer multiple times per iteration can be beneficial. Ultimately, the method can serve as an approximate solution approach by halting before eliminating all necessary cycles, thereby settling for a suboptimal solution.

\subsubsection{Pair removal}\label{pairremoval}
We have observed that the performance of the annealer is influenced by the number of qubits employed (see \Cref{pairremoval}). Therefore, our aim is to reduce the quantity of candidate pairs submitted to the annealer (each represented by a qubit) while preserving enough information to effectively solve the problem.\\
At the process's conclusion, our strategy involves reinstating the eliminated pairs by inferring their values through the transitive property. To determine which pairs to remove, we employed two methods: pair removal of high biases ($PRHB$) and pair removal based on the $\Omega$ matrix ($PR\Omega$).
The $PRHB$ technique focuses on the elimination of pairs characterized by high bias. This selection criterion is grounded in the assumption that pairs with high bias are likely to contain candidates that are significantly distant in the optimal rank. The primary advantage of employing the $PRHB$ method lies in its ability to reduce the overall size of the problem by excluding pairs with high bias, thus simplifying the problem. {Moreover, $PRHB$ allows the problem to be rescaled and potentially leading to better performance.}
However, a potential limitation of this approach is the inherent uncertainty regarding the actual spatial separation of the elements of the pair in the optimal solution. This uncertainty may pose challenges in accurately deducing the value of the elements, possibly affecting the overall quality of the solution. The search of the pairs with a bias above a threshold presents a complexity $O(n^2)$ then, after the annealing, the calculation of the contribution of the removed pairs is calculated in time $O(n)$, as a result $PRHB$ has a quadratic classical complexity.

In contrast, the $PR\Omega$ technique leverages the $\Omega$ matrix, which represents an approximate solution. By analyzing the list reconstructed with this matrix, it becomes possible to identify elements that are likely to be distant, w.r.t. an arbitrary scale over the lists, in the final solution. {The main benefit of the $PR\Omega$ method is its ability to increase confidence, particularly in the context of large lists, in removing pairs representing candidates that are far apart in the optimal list. This offers the significant advantage of having numerous pairs available in Eq. (\ref{eq:valueassignement}) that can collectively contribute to inferring the value of the missing pair.}
However, this method requires careful execution, as removing many pairs containing the same candidates could inadvertently lead to incorrect qubit value assignments. Such inaccuracies could propagate through the reconstruction process, leading to errors in the inferred values of subsequent pairs.
There are also certain constraints to be considered when removing qubit pairs. Specifically, we cannot eliminate pairs that are part of cycles, as this would result in the loss of qubit changes associated with the cycle's removal. The $PR\Omega$ method requires the construction of the approximated list from $\Omega$ in time $O(n^2)$, the recognition of the pair with elements far a part, with complexity $O(n^2)$ then, after the annealing, we need the values of the removed pairs that takes time $O(n)$. Therefore, the classical complexity of $PR\Omega$ is $O(n^2)$.

The procedure for pair removal is articulated as follows (see also Algorithm \autoref{alg:pair_removal} in \autoref{appC}):
\begin{enumerate}\setlength\itemsep{1em}

\item Find the set of pairs to remove, $R$, using one of the two techniques and remove those pairs from the cost function. The total cost function is then as follows:
\begin{equation}
    C_{pr}(X)=\sum_{i} \sum_{\substack{i < j \\ (i,j) \notin R}}
b_{ij}x_{ij}+\sum_{i} \sum_{\substack{i<j \\ (i,j) \notin R}} \sum_{\substack{j<k \\ (i,k) \notin R \\ (j,k) \notin R}}
P_{ijk}(x_{ik}+x_{ij}x_{jk}-x_{ij}x_{ik}-x_{jk}x_{ik})
\end{equation}
\item Run the quantum annealer on the modified cost function $C_{pr}$ and obtain the output
\item Reintegrate the removed pairs (qubits), inferring their value from the other pairs.
\end{enumerate}
The reconstruction process involves gathering information from the remaining qubits, leveraging the transitive property, and the process can be described as follows:
\begin{enumerate}
    \item Suppose the removed pair is $(a,b)$. To infer the pair value, we employ all the pairs that were not removed and that contain a candidate from our pair. Specifically, we consider all pairs of the form $(a,c)$ and $(b,c)$, where $c$ is another candidate in the dataset. For simplicity, we assume an ordered input $a \prec b \prec c$ (For cases where $a \prec c \prec b$ or $c \prec a \prec b$, refer to \cref{tab:AXB} in Appendix \ref{tt}).
    We then examine the states of the output qubits: $x_{ac}$ and $x_{bc}$. 
    There are four possible scenarios:\\
    \begin{enumerate}\setlength\itemsep{1em}
        \item ($x_{ac},x_{bc}$)=(0,0): $c\prec a,c\prec B$ $\Rightarrow$ no information about the order
        \item ($x_{ac},x_{bc}$)=(0,1): $c\prec a,b\prec c$ $\Rightarrow$ $b\prec a$
        \item ($x_{ac},x_{bc}$)=(1,0): $a\prec c,c\prec b$ $\Rightarrow$ $a\prec b$
        \item ($x_{c},x_{bc}$)=(1,1): $a\prec c,b\prec c$ $\Rightarrow$ no information about the order\\
    \end{enumerate}
Note that if either $(a,c)$ or $(b,c)$ is removed, no inference can be made: both pairs are required to add a contribution to the majority vote.
Based on these cases, we assign a value to the variable $R_{c}$:
\begin{itemize}\setlength\itemsep{1em}
    \item $R_{c}=0$ for (a) and (d)
    \item $R_{c}=-1$ for (b)
    \item $R_{c}=1$ for (c)\\
\end{itemize}
\item We assign a value to the pair:
\begin{equation}\label{eq:valueassignement}
    x_{ab}=\Theta\bigg(\sum_{c\notin \{a,b\}}R_c\bigg),
\end{equation}
where $\Theta(x)$ is the Heaviside step function with $\Theta(0)=0.5$.
This specific value of 0.5 indicates an undecided state for the qubit, suggesting that the majority vote resulted in a draw, and therefore, no preference can be determined yet.\\

\item Repeat points 1 and 2 for all missing pairs\\

{\item Repeat the process by incorporating the reconstructed pairs until either all removed pairs have been reconstructed or stalling is detected (i.e., when it is no longer possible to assign a value to any pair in the last step).\\
}

\item If the process fails to reconstruct all pairs, reintroduce those pairs (remove from $R$) and start the procedure again.\\
To achieve full pair reconstruction, the procedure may need to be executed multiple times.
\end{enumerate}
As discussed above, the pair removal adds a computational cost of $O(n^2)$
in exchange of a modest performance enhancement. Therefore, there may be a temptation to improve the performance further by removing more pairs. However, it is crucial to strike a balance in the number of pairs removed: excessive removal can lead to incorrect value inference and consequently compromise list reconstruction.
Additionally, this method should be complemented with techniques that can identify the cycles involved, such as the iterative approach. Although the benefits may not always be substantial, this method can help in complex scenarios, providing valuable assistance.
The key strength of the model lies in its ability to extract valuable information from its binary representation, such as cycle identification, while offering the flexibility to adjust the matrix according to specific requirements.
The following are the model's key capabilities:
\begin{enumerate}
\item \textbf{Information Extraction:} The binary representation of the model provides a comprehensive account of the current sequence of all pairs, thereby enriching the available information.
\item \textbf{Error Identification:} The model allows for the exact identification of errors within an output, enabling targeted penalization of only the cycles present in that particular configuration.
\item \textbf{Approximate Solutions:} The model supports approximate solutions, allowing for the removal of some cycles without requiring the complete elimination of all cycles. This settles for an approximate solution, which eventually can be refined later.
\item \label{refinement}\textbf{Refinement of Approximate Solutions:} The model can be adjusted to refine and enhance the quality of an approximate solution, such as one obtained via KwikSort \cite{KwikSort}. By reducing the number of disagreeing pairs between the annealer output without cycle penalization (\cref{omega}) and the binary representation of the approximate solution (changing the signs of certain elements in the bias matrix), the model aims to solve a simplified version of the minimization problem. This results in a solution that is at least as good as, if not better than, the approximate solution.

\end{enumerate}
In summary, the binary representation not only allows the extraction of critical information such as cycle identification, but also allows the bias matrix to be adjusted to meet specific needs, such as refining an approximate solution. These features make the model a versatile and invaluable tool in different scenarios.

\subsubsection{Challenges and mitigation strategies}\label{section:mitigation}
The process presents several challenges:
\begin{enumerate}
    \item Inability to find an embedding: due to the extensive penalty on cycles, it may be difficult to find an embedding.
    \item Lack of convergence: the process might fail to produce an output matrix that is completely free of cycles.
    \item Convergence to a suboptimal solution: we may find an output matrix without cycles that is not the optimal solution.
\end{enumerate} 
To address these challenges, we can employ the following strategies: 
\begin{enumerate}
    \item \label{embeddingdiscovery} \textbf{Facilitating embedding discovery:} To aid the discovery of an embedding, we can reduce the number of cycles we penalize. One effective method involves eliminating a cycle from the penalty list if all the pairs it comprises are already accounted for in at least $k$ other cycles. This strategy reduces the cycle count for embedding purposes while preserving crucial cycles. The decision-making process involves finding a balance between reducing the number of cycles and retaining the key pairs that drive cycle formation.
    \item \textbf{Addressing non convergence and suboptimal solutions:} These issues often arise when the problem is too complex. To tackle this problem, we can ``double-check" multiple times the cycles we penalize. By conducting multiple runs for each stage and considering only the cycles that consistently appear in all runs, we can filter out random cycles and retain only the relevant ones. This approach helps ensure that the cycles we penalize are those that consistently appear and are not merely the result of random fluctuations.
    \end{enumerate}

Overcoming challenges such as embedding discovery, non convergence, and suboptimal solutions involves strategic adjustments. By refining cycle reduction methods and validating penalized cycles rigorously, we aim to enhance efficiency and achieve more effective outcomes.

\section{Kemeny ranking in multiagent systems}

In multiagent systems, the choice of the optimal aggregation method is important in the context of collective decision-making, where agents must agree on a common decision despite potentially conflicting preferences. The exploration of voting theory and how it applies to multiagent systems has been extensive \cite{pitt2006voting, endriss2014social, dodevska2019computational, ultimomas}, providing foundational insights into achieving fair and efficient decision-making under these complex scenarios. Among different ranking strategies, Kemeny ranking is distinguished by its fairness and ability to minimize total disagreement. Other list aggregation techniques, like Borda count and plurality voting, tend to have biases that might skew the ultimate decision. For instance, Borda count can disproportionately emphasize mid-ranked preferences, whereas plurality voting may disregard the preferences of agents whose top choice differs from the majority. By assuring that the collective ranking is the nearest to all individual rankings when using pairwise comparisons, Kemeny ranking provides a more equitable method for merging diverse inputs.
Moreover, compiling a list of potential decisions strengthens the resilience of multi-agent systems. Should a primary decision prove unsuccessful or infeasible, the system can promptly switch to the next available option, maintaining operational continuity. This redundancy fosters adaptability to evolving circumstances or unforeseen obstacles, thereby enhancing collaborative decision-making and optimizing resource distribution. A predetermined list not only supports the exploration of diverse strategies, but also helps agents prioritize the most promising options.

In our dataset, each agent’s decision is represented as a ranked preference list. These lists serve as the basis for the aggregation process, utilizing Kemeny ranking to deliver a consensus that minimizes overall pairwise disagreement among the agents' preferences. Kemeny ranking has received limited attention for multi-agent systems, probably due its computational difficulty. However, its theoretical properties make it appealing; by employing the \textsf{iterative method}, we enable the computation of Kemeny rankings for multi-agent system decisions, offering a theoretically sound alternative.

We can also utilize different variations of Kemeny ranking depending on the context.
In some decision-making situations, narrowing the classification process to focus on key choices can be beneficial. This is where $k$-top list aggregation comes into play. Instead of evaluating all agents' complete rankings, $k$-top aggregation only considers the top-k decisions, capturing the most critical items. This method reduces complexity, making the problem more tractable while ensuring crucial decisions are properly aggregated. In large multi-agent systems, where the number of options can be daunting, emphasizing top-k decisions provides a feasible way to pinpoint the best choices without compromising the consensus-enhancing capability of Kemeny ranking. It is particularly suited to cases where only top choices are crucial, such as selecting job candidates or determining optimal policy alternatives.

An additional extension of Kemeny ranking involves utilizing weighted lists. Often, it is not appropriate for all agents' opinions to carry equal significance. Agents who are more reliable, knowledgeable, or trustworthy should have their preferences prioritized in the final ranking. Weighted Kemeny ranking achieves this by applying greater weights to these agents or particular choices within the ranking. For instance, if an agent possesses more experience or information, their rankings should be given precedence. This approach ensures that the most dependable participants influence the system's final decision. Weighted rankings are also useful for highlighting the significance of certain list positions, such as placing more emphasis on top-ranked choices.

Our base model, along with the \textsf{iterative method} when necessary, effectively addresses traditional Kemeny ranking and its extensions, including $k$-top lists and weighted rankings. These techniques, especially the \textsf{iterative method}, are particularly adept at solving large-scale optimization challenges. In systems with numerous agents and decision options, they provide a practical way to rapidly determine an optimal or nearly optimal ranking that would be computationally challenging to achieve using conventional methods. While our core model and the \textsf{iterative method} are exceptional in large-scale conditions, it is vital to recognize that classical methods for computing Kemeny ranking remain valuable for smaller systems with fewer choices. In these scenarios, their simplicity and directness often make them the preferred option. This distinction highlights the adaptability of our model to varying problem scales and complexities.

\section{Methodology}
In this section, we present the way we implemented the proposed model, the metrics employed and the datasets used to evaluate the performance of the various models in addition to the hardware employed to perform classical computations.

\subsection{Implementation}

We employed Python to interface with the D-Wave Ocean suite, enabling a direct link to the D-Wave quantum computer. Our approach involved formulating the problem as a Constrained Quadratic Model (CQM), requiring the definition of a cost function to represent the problem at hand. In order to facilitate processing on the D-Wave quantum computer, we converted the CQM into a Binary Quadratic Model (BQM) utilizing dedicated functions available in D-Wave's Python library, since this form is required for submission to the annealer. Opting for the CQM-to-BQM conversion method over directly constructing a binary quadratic matrix was based on empirical evidence showcasing its superior efficiency: this approach not only resulted in a marginal increase in the number of occurrences of the optimal solution, but also provided a more straightforward and comprehensible representation of the problem.

In essence, our implementation strategy was designed to enhance the efficiency and effectiveness of solving complex problems through quantum computing. By encoding the problem as a CQM and subsequently transforming it into a BQM, we were able to leverage the capabilities of the D-Wave quantum computer easily and more efficiently.
For the detailed pseudocode of the base model, the \textsf{iterative method}, and pair removal, refer to Appendix \ref{appC}.

\subsection{Metrics}

We employed three distinct metrics to evaluate the performance of the models: 
\begin{enumerate} \item \textbf{Accuracy:} This metric evaluates how closely the annealer's output $X$ matches the binary representation of a set of optimal lists $L$. Let $L = \{L_1, L_2, \ldots, L_t\}$ be a set of optimal lists, where each list $L_t = \{l_{ij}^{(t)}\}$ is represented as a matrix where $l_{ij}^{(t)} = 1$ if $i \prec j$ in the optimal solution and $l_{ij}^{(t)} = 0$ otherwise. The accuracy is defined as the minimum value of the accuracy term across all optimal lists in $L$. Specifically, accuracy is computed as follows:
\begin{equation} 
\text{accuracy} = 
\begin{cases}
0 & \text{if } \min_{t} \left( \sum_{i}\sum_{j>i} |x_{ij} - l_{ij}^{(t)}| \right) > 0 \\
1 & \text{if } \min_{t} \left( \sum_{i}\sum_{j>i} |x_{ij} - l_{ij}^{(t)}| \right) = 0 
\end{cases} 
\end{equation}
The accuracy metric is used to evaluate how closely the output generated by the annealer matches the most similar optimal solution obtained through classical methods. When possible we will prefer accuracy over alternative metrics, such as the Kendall-Tau distance, because it is not affected by the random selections made during list reconstruction in cases of outputs containing cycles (while the optimal solution could occasionally be derived from such outputs, accuracy ensures that the output of the annealer is exactly the binary representation of the optimal solution), ensuring a more reliable and robust evaluation.\\
\item \textbf{Number of occurrences:} This parameter, denoted as \textit{num\_occ}, indicates how many times a specific solution appears during the annealing process. This number can be used to quantify the confidence we have in a certain output.
In instances where multiple optimal solutions exist, we consider the total number of occurrences across all optimal solutions.

It should be noted that during the data collection process, \textit{num\_occ} was considered different from 0 only when the output was an optimal solution.\\
\item \textbf{Kendall-Tau distance:} This metric, along with its normalized version, is employed when the accuracy cannot be determined. This occurs in scenarios where it is not feasible to compute the solution using classical methods (brute force), thereby preventing a comparison between the classical computation and the output of the annealer.
The Kendall-Tau distance of the output is computed as the sum of the Kendall-tau distance between the output list and each list present in the dataset.
\end{enumerate}

\subsection{Hardware}
In the domain of computational research, the choice of hardware significantly impacts the efficiency and accuracy of computations. For our classical computation and data collection endeavors, we have relied on the following hardware setup:
\begin{itemize}
\item Processor: Intel(R) Core(TM) i7-4710MQ CPU @2.5GHz (8CPUs)\\
\item Memory: 8192MB RAM
\end{itemize} 
It is imperative to acknowledge that this hardware configuration is now outdated. Transitioning to modern hardware has the capacity to significantly improve the outcomes and subtly alter the conclusions of our study.

\subsection{Datasets}
To showcase the effectiveness of our method and assess its limitations, we employed two different types of datasets:

\begin{enumerate}\setlength\itemsep{0.5em}
    \item \textbf{Synthetic datasets}: These datasets were created by randomly permuting a starting list to generate individual votes.\\
    \item \textbf{Simplified synthetic datasets}: Similarly to the synthetic datasets, these datasets were also generated by randomly permuting a starting list. However, in this case, for each vote, the starting list was divided into 3 or more sublists of random length, and the permutation was performed within each sublist. This method introduces additional structure and reduces dataset complexity, typically resulting in fewer cycles.\\
\end{enumerate}

In the dataset, each candidate is represented by a number ranging from 0 to $n-1$, where $n$ denotes the number of candidates. From now on, we will refer to a larger dataset as one with a larger number of candidates. Each dataset consists of 11 votes, a specific number chosen to mitigate the likelihood of encountering degenerate solutions, a common complication with datasets containing an even number of votes. Moreover, it simplifies the iterative process by avoiding potential cycles caused by the random values assigned to the zero-bias qubits.

\section{Results}\label{sec2}

This section presents the results of testing the model and its enhancements on various datasets. Each result is derived from $n_{reads}=2500$ using the D-Wave Advantage System 4.1 sampler with default parameter settings, and consists of the average of the outcomes of 10 repeated runs.\\
It is important to clarify the terminology used when referring to the annealer outputs:
\begin{itemize}
    \item ``Output" refers to the result produced by the annealer, without considering whether it is the correct output or not;
    \item ``Solution" indicates the correct (expected) output that the annealer is supposed to return when given a problem;
    \item ``Optimal solution" is the optimal solution to the overall problem being solved.
\end{itemize}
This distinction is necessary because the annealer does not always return the expected output, especially for very difficult problems. Additionally, due to the iterative nature of the method, the annealer may find a solution to a specific subproblem that satisfies the constraints, but it may not be the optimal solution to the overall problem.

The first data we present is a performance comparison between our base model and another model \cite{fiergolla2023heuristicdiversekemenyrank} that uses a different list representation. This comparison allows assessing the impact of these different representations.
Following this, we evaluate the performance of the \textsf{iterative method} on the same datasets to measure the enhancement in solution accuracy. We then progress to datasets with a larger number of candidates to evaluate the method's effectiveness, enhancements, and limitations.
Next, we evaluate the effectiveness of the pair removal method across various datasets. This evaluation provides insights into the method's performance and its potential contributions to the overall effectiveness of the model.
Finally, we offer a comparison between the \textsf{iterative method} and a classical approach: KwikSort \cite{KwikSort}. The comparison evaluates the two methods as ``exact methods", i.e., their ability to find the optimal solution, and then as approximate methods. The first comparison assesses the execution time when both methods achieve the optimal solution. The second comparison evaluates the two methods on very large datasets, where the \textsf{iterative method} can only be used as an approximate method due to its inability to converge to an optimal solution.

The findings related specifically to partial and weighted lists are detailed in Appendix \ref{pawl}.

\subsection{Base model}
In this subsection, we first discuss the comparison between the base model and another quantum algorithm \cite{fiergolla2023heuristicdiversekemenyrank} which we will call the $n^2$-representation model, and then present the results of our analysis.
\subsubsection{Comparison base model vs $n^2$-representation model}
The main difference between the two methods lies in the representation of the list: while our model represents the list as a binary strictly upper triangular matrix, where each qubit represents the order of the candidates involved, the $n^2$-representation model uses a square matrix where each row represents a candidate and each column a position.
More in detail, the ranking of n candidates is represented by $n^2$ binary variables ($c_{i,j}$), where $c_{i,j}$ = 1 indicates that candidate i occupies position j, and $c_{i,j}$ = 0 indicates otherwise. 
To ensure that the solution is meaningful, the constraints that each candidate holds exactly one position and that each position is occupied by only one candidate are imposed.
The problem can be visualized as a two-dimensional grid where in each row and column, there is only one element set to 1 in the grid.\\ 
The cost function describing the problem is then:

\vspace{0.5cm}

\begin{equation}
\begin{aligned}
f(c_{1,1},...,c_{n,n}) &= \sum_i^n(row_i+col_i+\sum_j^n rank_{i,j}) \\
&=\scalebox{0.65}{$\displaystyle\sum_i\left(P\left(1-\left(\sum_j c_{i,j}\right)\right)^2 + P\left(1-\left(\sum_j^n c_{j,i}\right)\right)^2+\sum_j^n w_{ij}\sum_k^{n-1}\sum_{l>k}^nc_{i,k}c_{j,l}\right)$}
\end{aligned}
\end{equation}\\
\vspace{0.5cm}
\\
where the $row_i$ and $col_i$ terms ensure that each candidate occupies exactly one position and that each position is filled by only one candidate, with $P$ representing the penalty coefficient.

The $rank_{i,j}$ term instead, corresponds to the Kendall-Tau distance term for the pair (i,j): it is $w_{ij}$ when candidate i is ranked above candidate j, and 0 when candidate j is ranked higher than candidate i.
To ensure that $P$ always remains an unpayable cost, it must satisfy the condition: $P=n^2|\Pi|$, where $|\Pi|$ represents the total number of votes in the dataset.

Comparing the $n^2$-representation model with our own reveals significant differences:
\begin{enumerate}\setlength\itemsep{0.5em}
    \item \textbf{Number of qubits}: The $n^2$-representation utilizes more than twice the number of qubits. Although the number of qubits may not be a significant limitation, it certainly impacts performance.
    \item \textbf{Penalty coefficient}: The $n^2$-representation model requires a significantly higher value of $P$ due to its dependence on $n^2$. While the penalty coefficient is essential for producing accurate outputs and ensuring the model functions correctly, it also increases the computational load and limits performance. Comparing the two coefficients, $P=|\Pi|$ for our model's worst-case scenario and $P=n^2|\Pi|$ for the $n^2$-representation model, it is evident that as the number of candidates increases, the performance of the latter is significantly impacted. This dependence on a higher penalty coefficient makes it difficult for the $n^2$-representation model to achieve good results on datasets with a large number of candidates.
    \item \textbf{Output interpretation}: Unlike our model, this representation model does not mandate post-processing for final output comprehension, saving time and resources.\\
\end{enumerate}
In conclusion, the $n^2$-representation model trades off the capability to evaluate a large dataset in exchange for a simplified and intuitive interpretation of the output. The clear correspondence between rows, representing candidates, and columns, representing positions, allows a straightforward understanding of the output ranking.

\subsubsection{Results}
In our evaluation, we compared the performance of the two methods on datasets taken from the synthetic datasets set. The penalty coefficient $P$ for the base model was set to $P=|\Pi-2|+\epsilon$ to ensure cycle removal, given that the datasets employed have an odd number of votes. We evaluate the accuracy and the number of occurrences for datasets with different numbers of candidates, averaging the results over 10 repeated runs.
\begin{figure}[H]
    \centering
    \begin{subfigure}{0.48\textwidth}
        \centering
        \includegraphics[width=\linewidth]{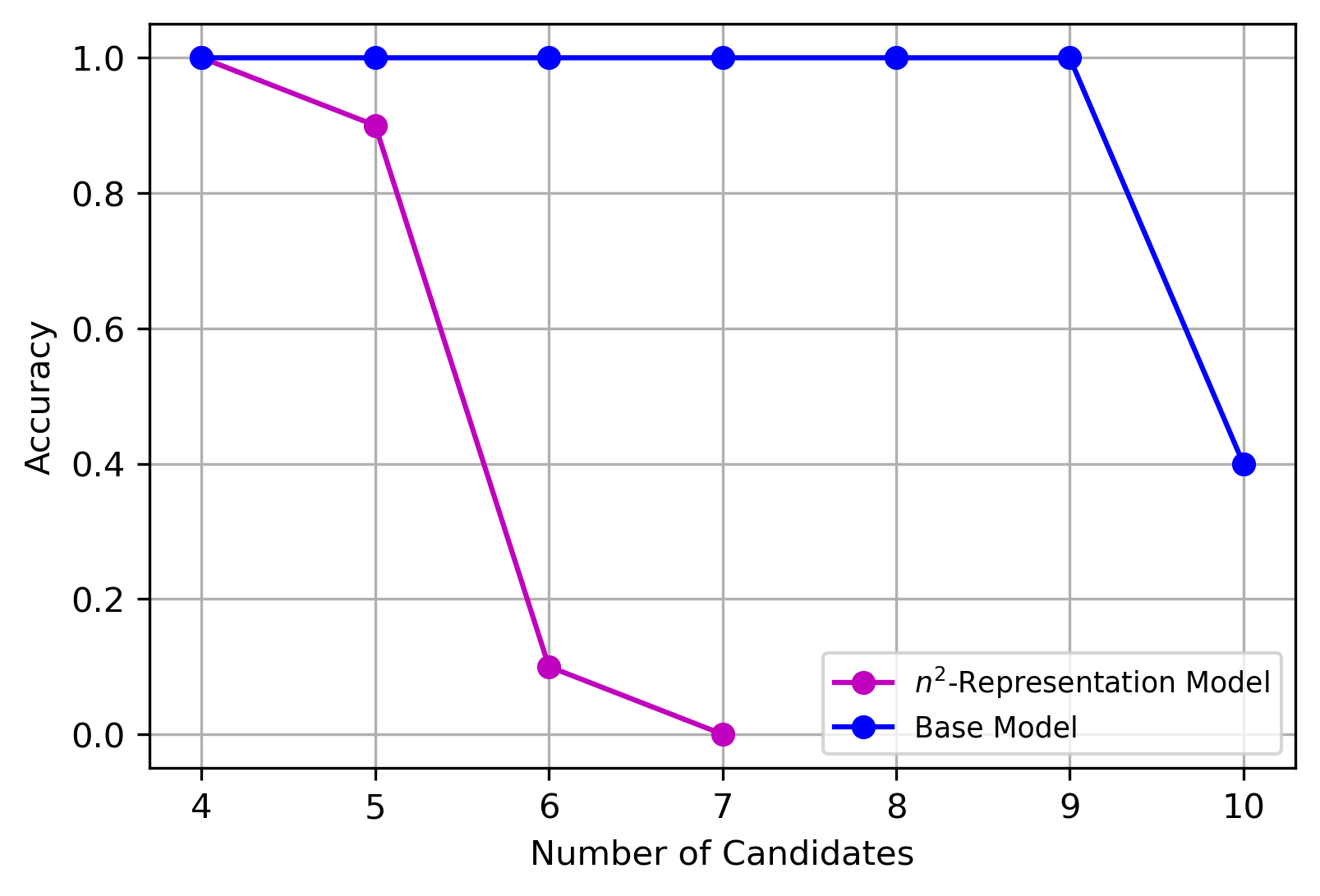}
        \caption{}
        \label{f:pp1}
    \end{subfigure}
    \hfill
    \begin{subfigure}{0.48\textwidth}
        \centering
        \includegraphics[width=\linewidth]{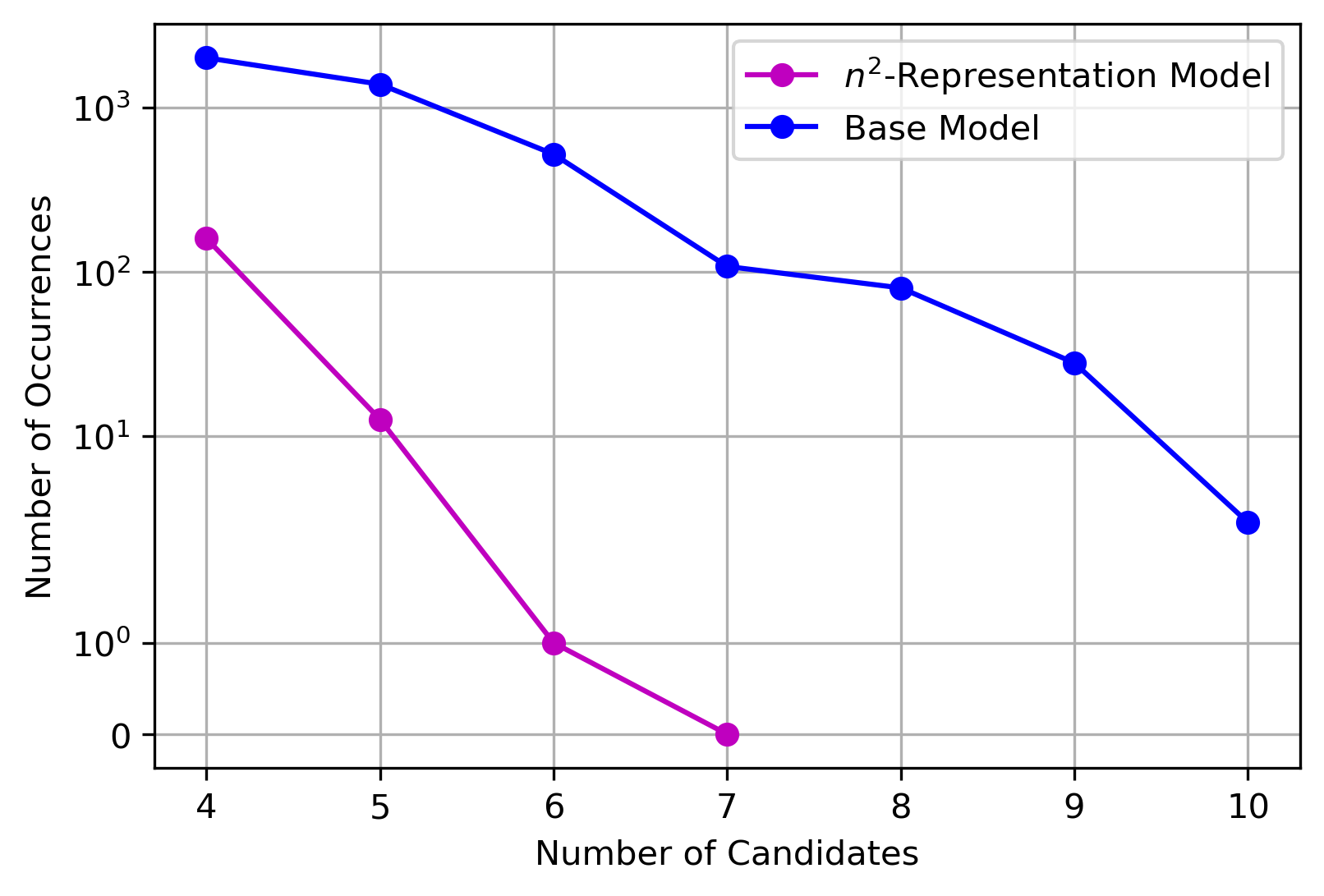}
        \caption{}
        \label{f:pp2}
    \end{subfigure}
     \caption{Comparative evaluation of the models accuracy (\cref{f:pp1}) and the number of occurrences (symlog scale) (\cref{f:pp2}) for the base model and the $n^2$-representation model across different candidate datasets.}
    \label{fig:comparisonbmn2}
\end{figure}
As shown in \cref{fig:comparisonbmn2}, the base model outperforms the $n^2$-representation model in terms of both accuracy and number of occurrences, allowing the handling of larger datasets.
Moreover, since the base model identifies the optimal solution more often, the number of reads required to find the optimal solution can be reduced, saving quantum annealing computation time.
Ultimately, the results demonstrate that the choice of representation significantly influences the efficiency of calculating the Kemeny ranking, primarily due to the effect of the associated penalty coefficient.\\

\subsection{Iterative Method}
The main advantage of the pair representation lies in the possibility to identify the cycles within an arbitrary output matrix, ensuring that penalties are imposed only to the cycles that truly exist.
The \textsf{iterative method} capitalizes on this by seeking out the cycles in the output of the annealer and applying the smallest effective coefficient as a penalty. For a detailed explanation of this method, see Section \ref{im}.
The advantages of this procedure are supported by the results in Appendix \ref{pu2}, which indicate that the required value for the penalty coefficient is often much smaller than the theoretical threshold. Moreover, the performance degrades as the penalty coefficient increases, highlighting the need for a method that maintains the penalty coefficient as small as possible.

The preliminary analysis entails a comparison between the iterative method and the base model, focusing on evaluating the frequency of solution occurrences in both models, using the same datasets as in prior evaluations.
\begin{figure}[H]
    \centering
    \begin{subfigure}{0.48\textwidth}
        \centering
        \includegraphics[width=\linewidth]{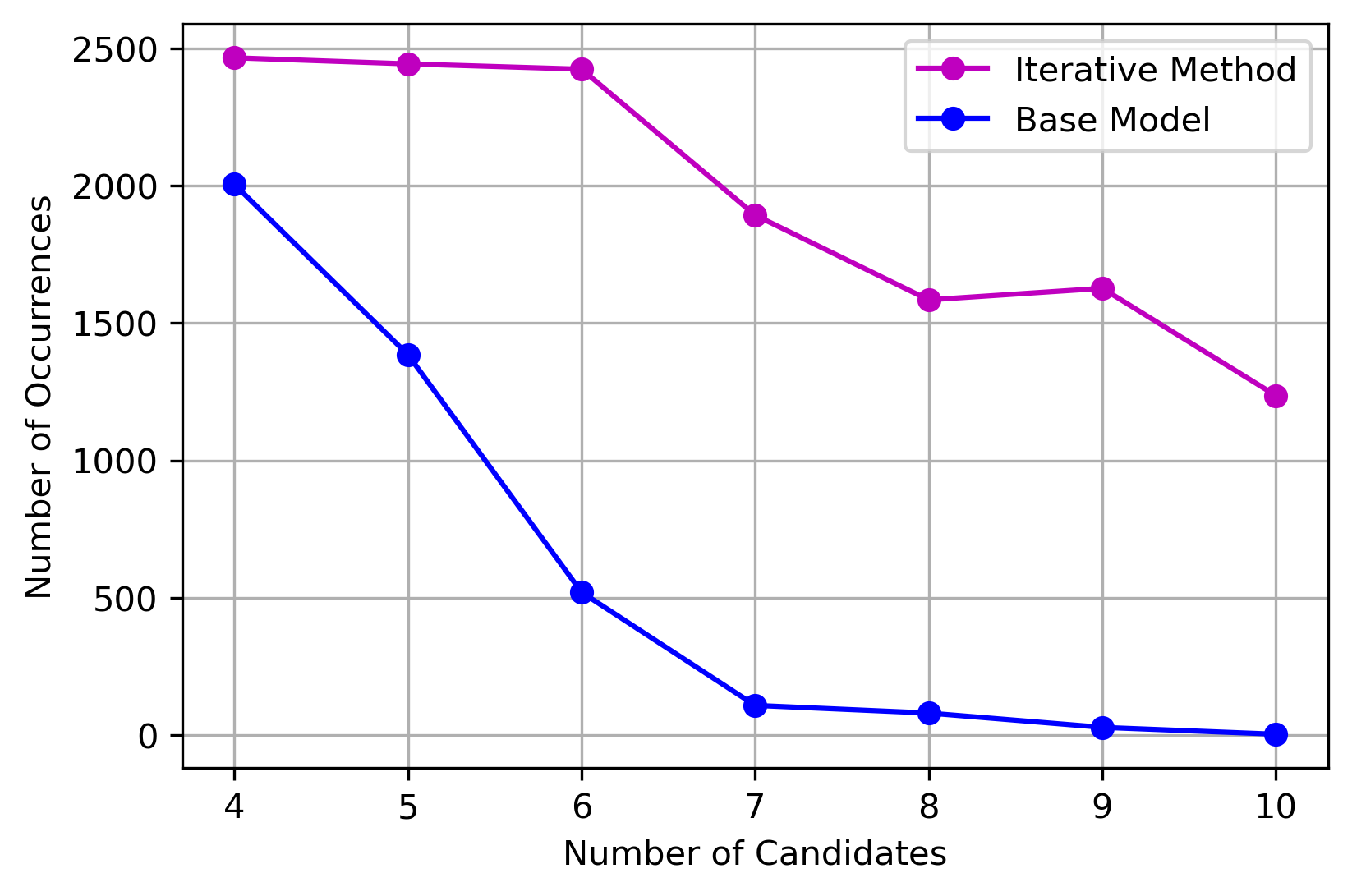}
        \caption{}
        \label{cit1}
    \end{subfigure}
    \hfill
    \begin{subfigure}{0.48\textwidth}
        \centering
        \includegraphics[width=\linewidth]{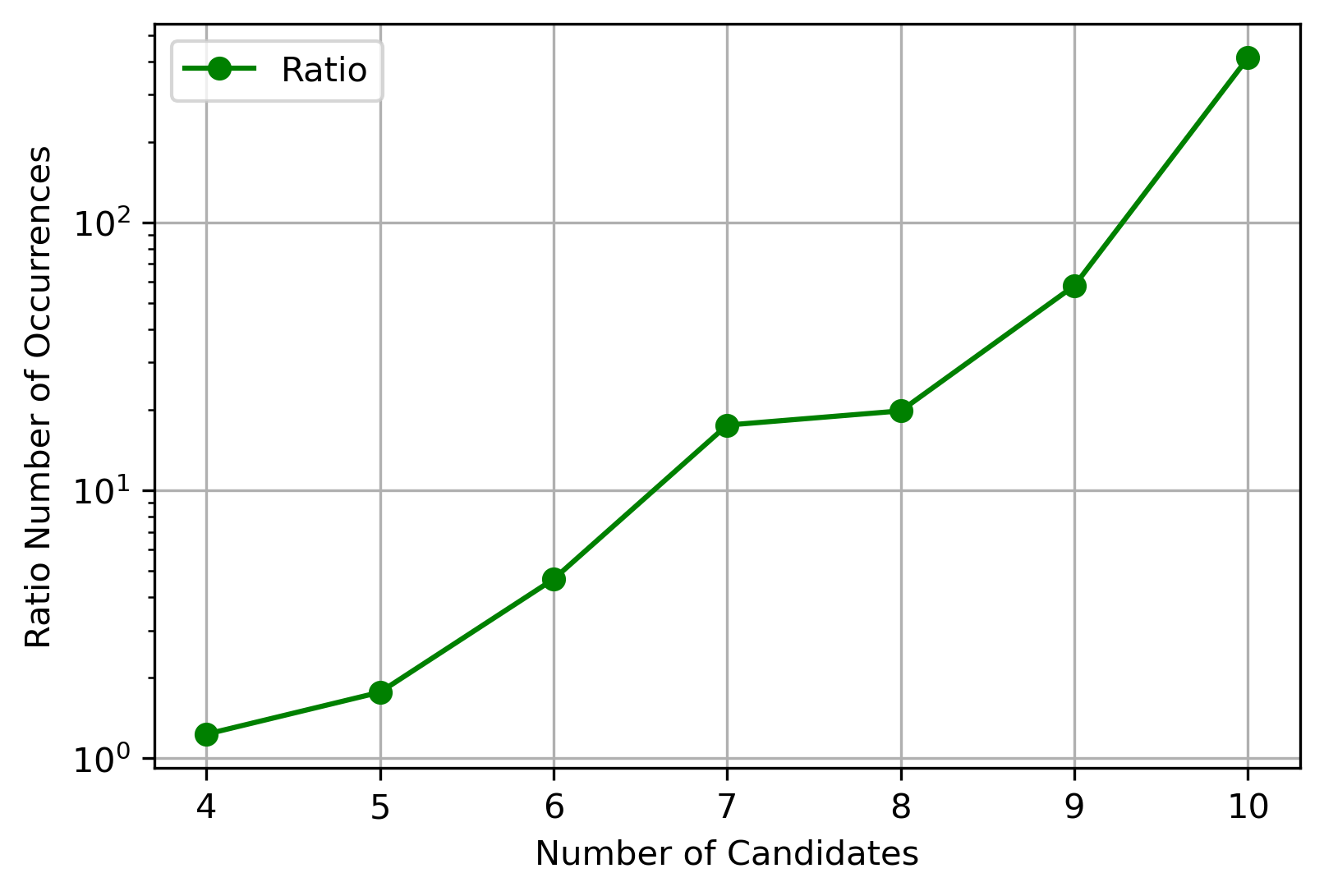}
        \caption{}
        \label{cit2}
    \end{subfigure}
    \caption{Comparative evaluation of the number of occurrences (\ref{cit1}) and their ratio (\ref{cit2}) between the base model and the iterative method across different candidate datasets. he values represent the average of 10 runs for each dataset, where each dataset corresponds to a different number of candidates.}
    \label{f}
\end{figure}
We observe a significant improvement in performance using the \textsf{iterative method} compared to the base model in terms of the number of occurrences (\cref{cit1}). This improvement is further highlighted in (\cref{cit2}), which indicates the ratio of the number of occurrences between the outputs of the two models, showing how much more frequently the \textsf{iterative method} produces successful results with respect to the base model.
This performance improvement can be attributed to the selective penalization of only the relevant cycles that arise in the problem, using the smallest effective coefficient, which leads to an overall boost in performance.

The second comparison regards the performance of the two methods as a function of the number of votes. For this specific task we created a dataset with 9 candidates and 1001 votes and selected subsets of votes for each data point in the plots.
For the \textsf{iterative method}, with datasets containing a high number of votes, finding the minimum value high enough to effectively penalize and remove the cycles would require too many iterations when proceeding in increments of 2.
Therefore, for this case, we decided to adopt a different approach: instead of assigning 1 as the initial penalty coefficient, we assign to each cycle an initial penalty coefficient equal to the minimum (plus $\epsilon$) of the three biases of the pairs involved in the cycle. This method allows for a faster achievement of the required penalty coefficient without requiring too many iterations.

\begin{figure}[H]
    \centering
    \begin{subfigure}{0.48\textwidth}
        \centering
        \includegraphics[width=\linewidth]{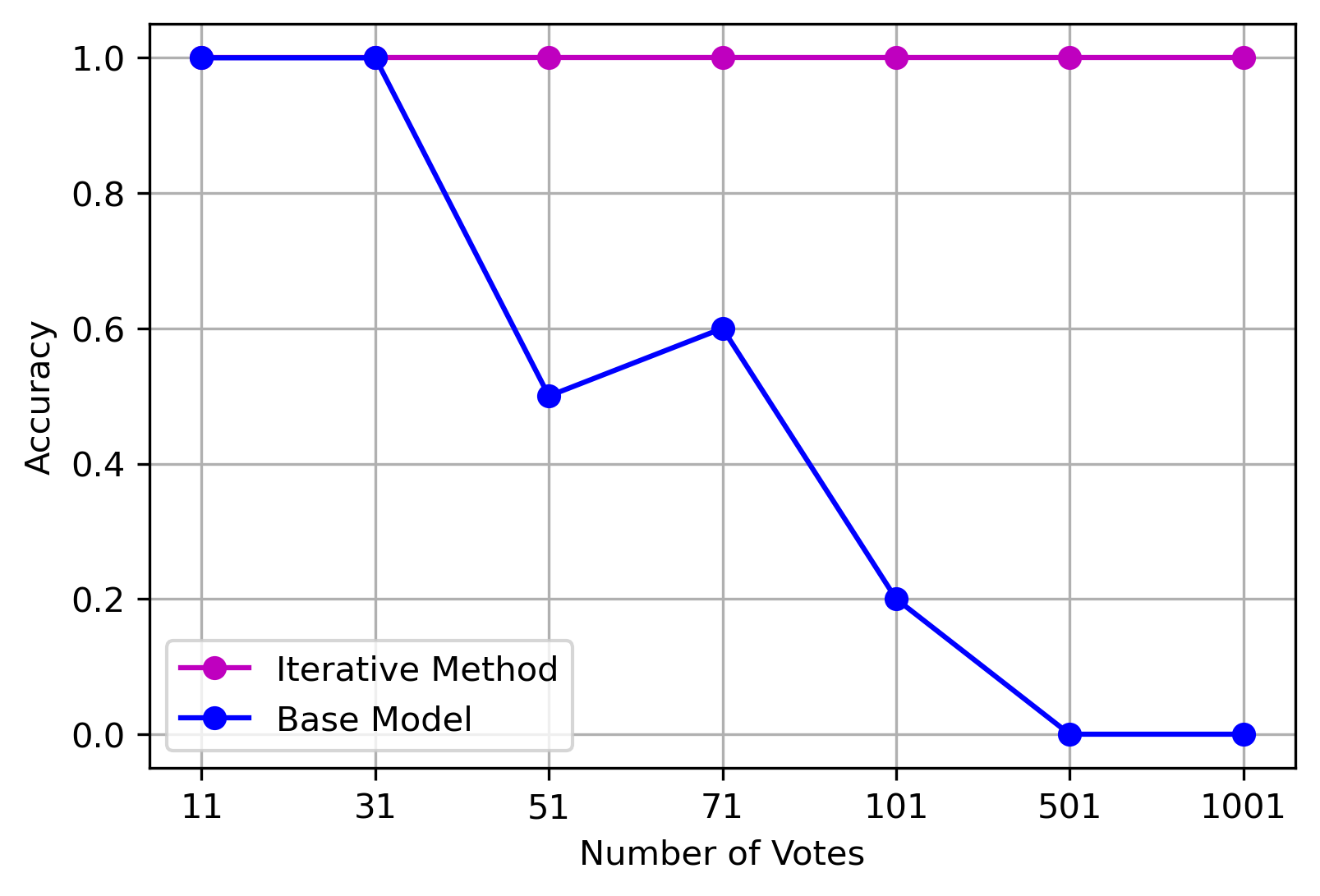}
        \caption{Accuracy comparison}
        \label{accuracyfv}
    \end{subfigure}
    \hfill
    \begin{subfigure}{0.48\textwidth}
        \centering
        \includegraphics[width=\linewidth]{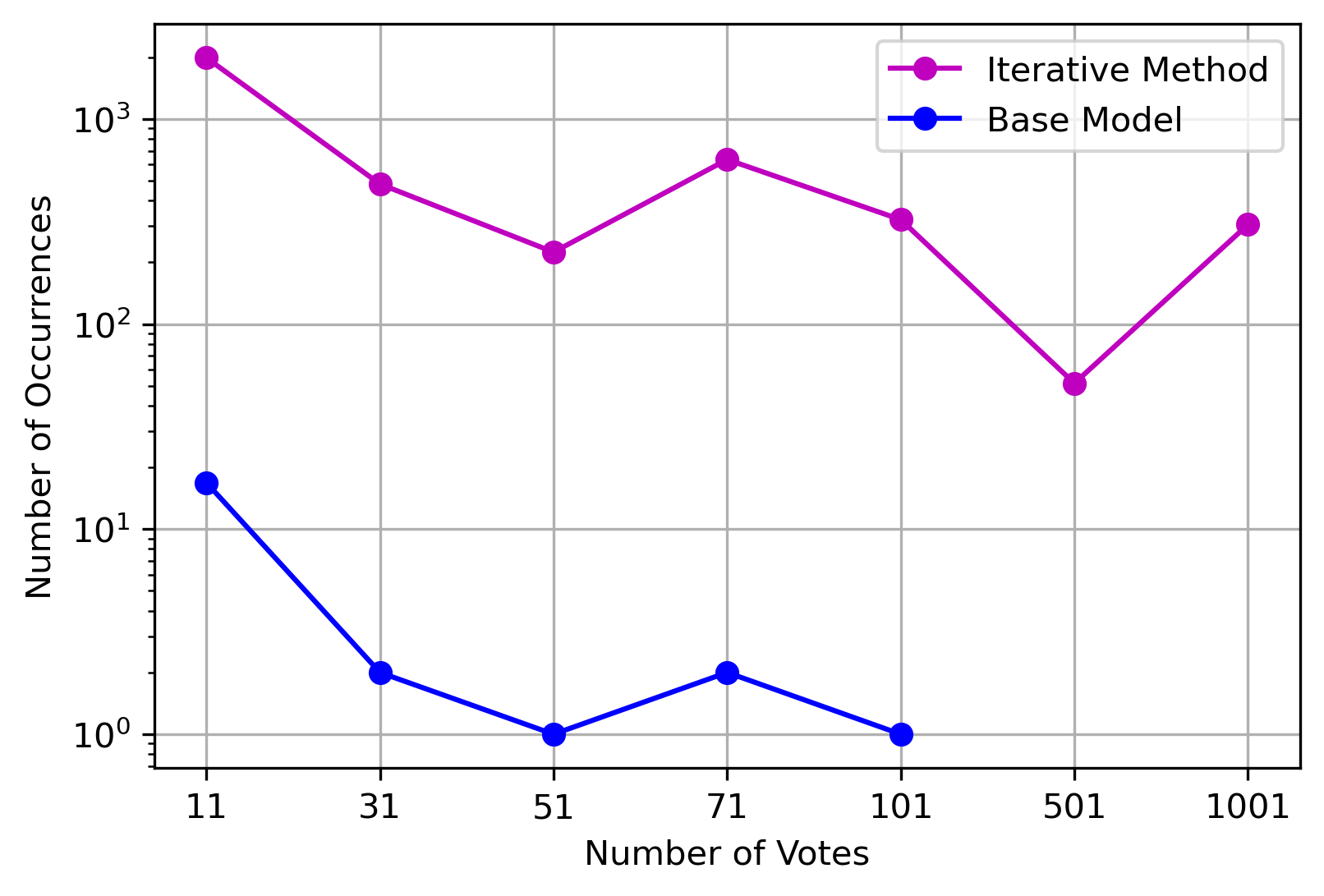}
        \caption{Number of occurrences comparison}
        \label{noccfv}
    \end{subfigure}
    \caption{Comparative evaluation of the accuracy (\ref{accuracyfv}) and the number of occurrences (\ref{cit2}) between the base model and the iterative method across different number of votes datasets. The values represent the average of 10 runs for each dataset, where each dataset corresponds to a different number of votes.}
\end{figure}
\begin{figure}[h]
    \centering
    \includegraphics[width=0.6\linewidth]{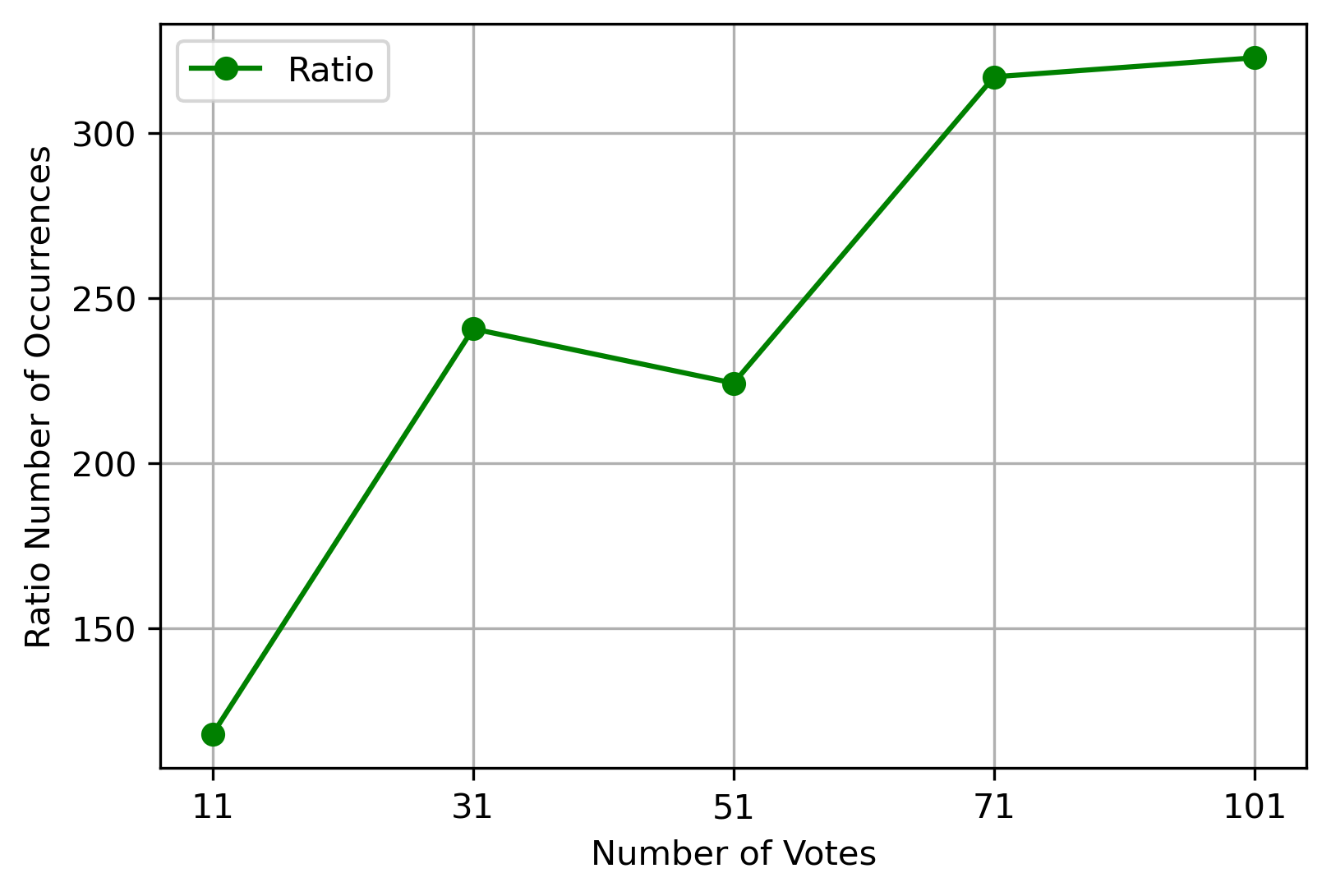}
    \caption{Ratio of the number of occurrences, calculated as the number of occurrences of the iterative method divided by the number of occurrences of the base model.}
    \label{ratiofv}
\end{figure}
\cref{accuracyfv} indicates that the accuracy of the base model decreases as the number of votes increases, whereas it remains consistently at 1 for the \textsf{iterative method}. This behavior is also observable in \cref{noccfv}, where the number of occurrences for the base model is generally decreasing, while for the iterative method, it still decreases but at a slower rate. As a result, the ratio between the two values is generally an increasing function (\cref{ratiofv}).
This is due to the penalty coefficient of the base model, which increases in direct proportion to the number of votes in the dataset, leading to a progressive deterioration in performance as the vote count rises. In contrast, while the penalty coefficients for the \textsf{iterative method} also generally increase, they do not grow at the same rate. This more stable behavior highlights the effectiveness of penalizing only the relevant cycles with appropriately tuned coefficients, allowing the \textsf{iterative method} to maintain better performance even as the number of votes increases.

Now, our goal is to evaluate whether the \textsf{iterative method} can perform well for significantly larger datasets. The primary factors affecting performance are still the number of cycles involved and the penalty coefficients required to eliminate them.
We evaluate the output across four datasets extracted from the simplified synthetic dataset set, which we label as \{$N, C_f$\}, where $N$ represents the number of candidates in the datasets and $C_f$ represents the closest integer to the average total number of cycles that appeared during the solution process over 10 runs (see \cref{average}). This labeling provides the elements for an approximate estimation of the complexity one might encounter in finding the solution.
Since we cannot directly compare the output with an actual solution, we use the normalized Kendall-Tau distance as a metric to evaluate the solution quality, averaging the results over the usual 10 runs for robustness.
Furthermore, in order to visualize the data from all four datasets on a unified plot, we subtract from each set of distances the minimum distance found for that set, so that the minimum distance becomes zero.
\begin{figure}[h]
        \centering
        \includegraphics[width=0.7\textwidth]{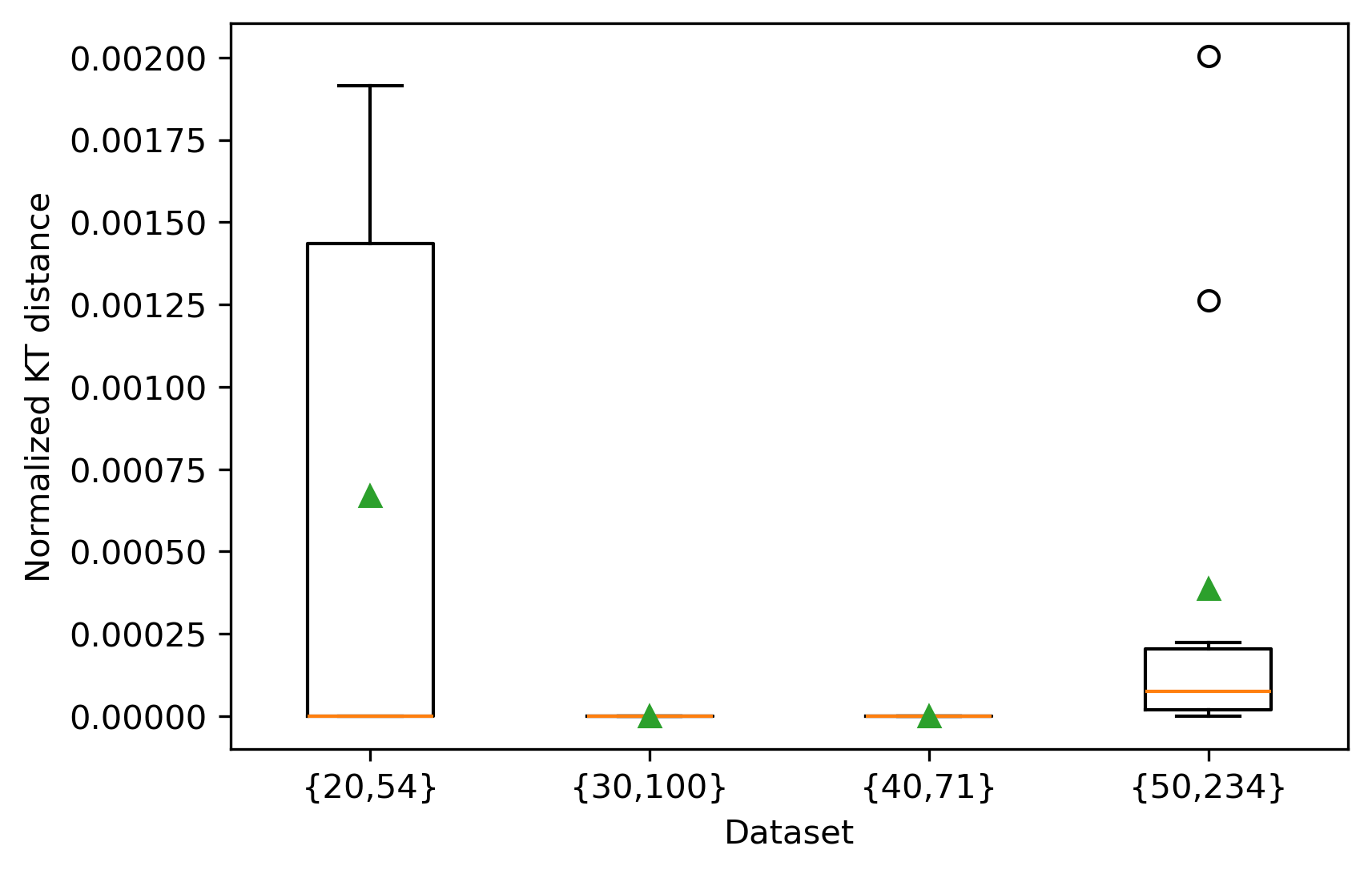}
        \caption{Boxplot comparison of baseline-adjusted normalized Kendall-Tau distances across four datasets (\{N, $C_f$\}) with varying numbers of candidates (N) and average penalized cycles ($C_f$).
        For the dataset with 20 candidates, the minimum distance was observed in 6 out of 10 runs, while for the datasets with 30 and 40 candidates, it was observed in all 10 runs. In contrast, for the dataset with 50 candidates, the minimum distance was achieved only 3 times.}
        \label{grafico2054}
\end{figure}
One significant discovery from these results (see \cref{grafico2054}) is that even when working with datasets containing 20 candidates, the best solution (which we have no way of claiming is optimal) is not always found consistently. This underscores the fact that, depending on the dataset's complexity, even a modest number of candidates can pose difficulties in pinpointing the best solution.\\
Another finding is that the ability to solve the problem is influenced not only by the number of cycles and candidates but also by the distribution of cycles and the penalty coefficients required to remove them. For instance, datasets with fewer candidates and cycles (e.g., \{20, 54\}) may yield best solutions less frequently compared to datasets with higher counts such as \{30, 100\} and \{40, 71\}.

It's worth noting that an initial observation of the plot might suggest that the algorithm performs better on the dataset \{50, 234\} with respect to \{20, 54\} due to the smaller box size. However, a closer examination of the data reveals that the best found solution is actually found twice as often in the dataset with 20 candidates. 
Moving forward with the analysis, in the dataset \{50, 250\}, the smallest distance recorded was 2328, while a commonly found output configuration without cycles had a distance of 2329. This scenario could potentially be attributed to either a chain break or the flip of a qubit representing two adjacent elements in the final list. This highlights the critical need for multiple validations, as even without cycles, there is no guarantee that the output found is the optimal solution in complex scenarios.
The difficulty level of each dataset varies, and while it can be roughly estimated by considering the number of candidates and cycles, it is evident that additional factors beyond these influence the optimization challenge presented to the annealer, such as the embedding, how the cycles are connected or the value of the penalty coefficients necessary to remove them.

The comparative analysis reveals improved performance compared to the base model, primarily due to the strategic penalization of only relevant cycles with reduced penalization terms. The \textsf{iterative method}'s efficacy is intricately linked to the nuances of the dataset. In fact, datasets with identical candidate and vote counts can yield significantly different results. In a way, it can be said that the \textsf{iterative method} takes advantage of the simplicity of the dataset.
Finally, in complex scenarios, multiple validations are essential, as an output without cycles does not guarantee that it is the optimal solution.
Despite these challenges, the method represents a significant advance in performance capabilities.

\subsection{Pair removal}\label{pairremoval}
To enhance the annealer's performance and address instances where it may not consistently yield the optimal solution, we adopt a strategic pre-processing step that simplifies the problem complexity. This simplification entails the temporary exclusion of certain pairs from the problem prior to submission to the annealer. After the annealing process, these excluded pairs are reintroduced, with their values inferred via the transitive property. Figure \ref{fig:prf} illustrates the annealer's output, highlighting the temporarily excluded pairs with a placeholder value of 0.5 (in red), and the configuration after deducing the values of these pairs.
\begin{figure}[H]
    \centering
    
    \begin{subfigure}{0.48\textwidth}
        \centering
        \includegraphics[width=\linewidth]{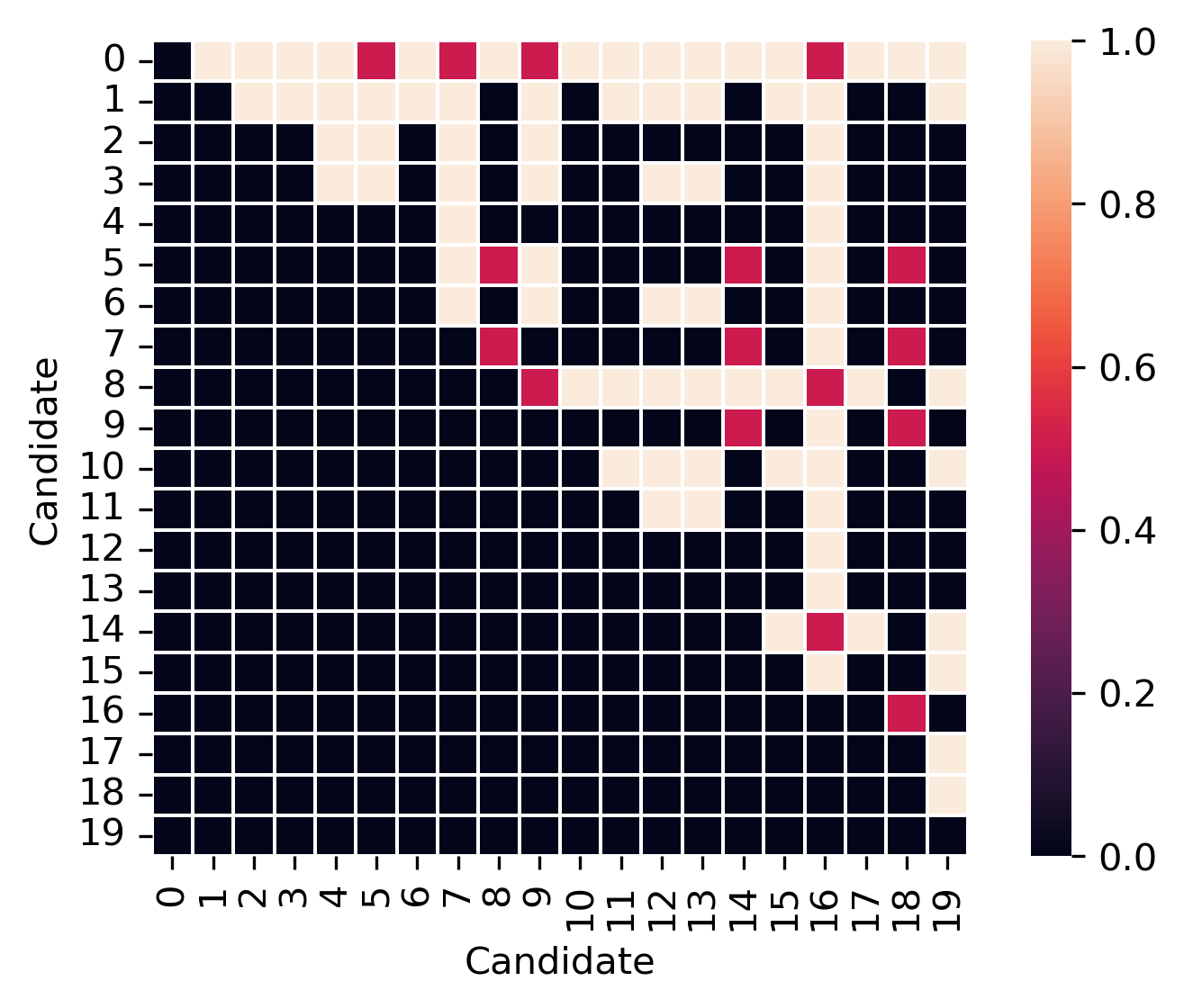}
        \caption{Annealer's output}
        \label{aopr}
    \end{subfigure}
    \hfill
    \begin{subfigure}{0.48\textwidth}
        \centering
        \includegraphics[width=\linewidth]{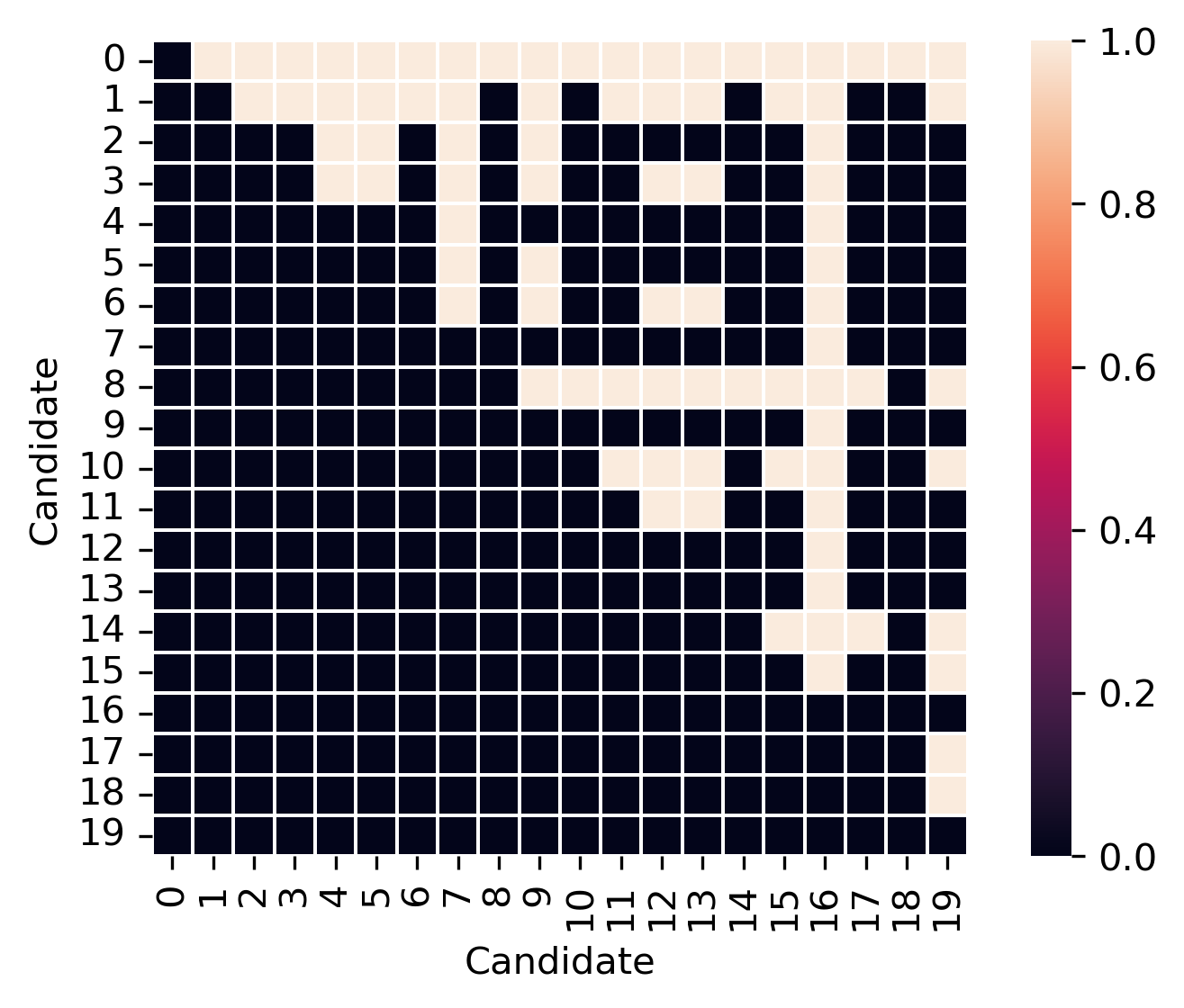}
        \caption{Output after reintroduction of removed pairs}
        \label{aoprf}
    \end{subfigure}

    \caption{Example of $PR\Omega$ output (\ref{aopr}) where removed pairs have a temporary value of 0.5 (red) and corresponding output after pair reintroduction (\ref{aoprf}).}
    \label{fig:prf}
\end{figure}
The first experiment evaluates the effectiveness of applying pair removal to the base model using the same dataset employed before to assess its performance. This approach enables us to measure how the pair removal strategy enhances the achievement of the optimal solution.
These tests involved the removal of an identical number of pairs in both the $PR\Omega$ and $PRHB$ techniques in order to provide a balanced comparative study.

\begin{figure}[H]
        \centering
        \includegraphics[width=0.7\textwidth]{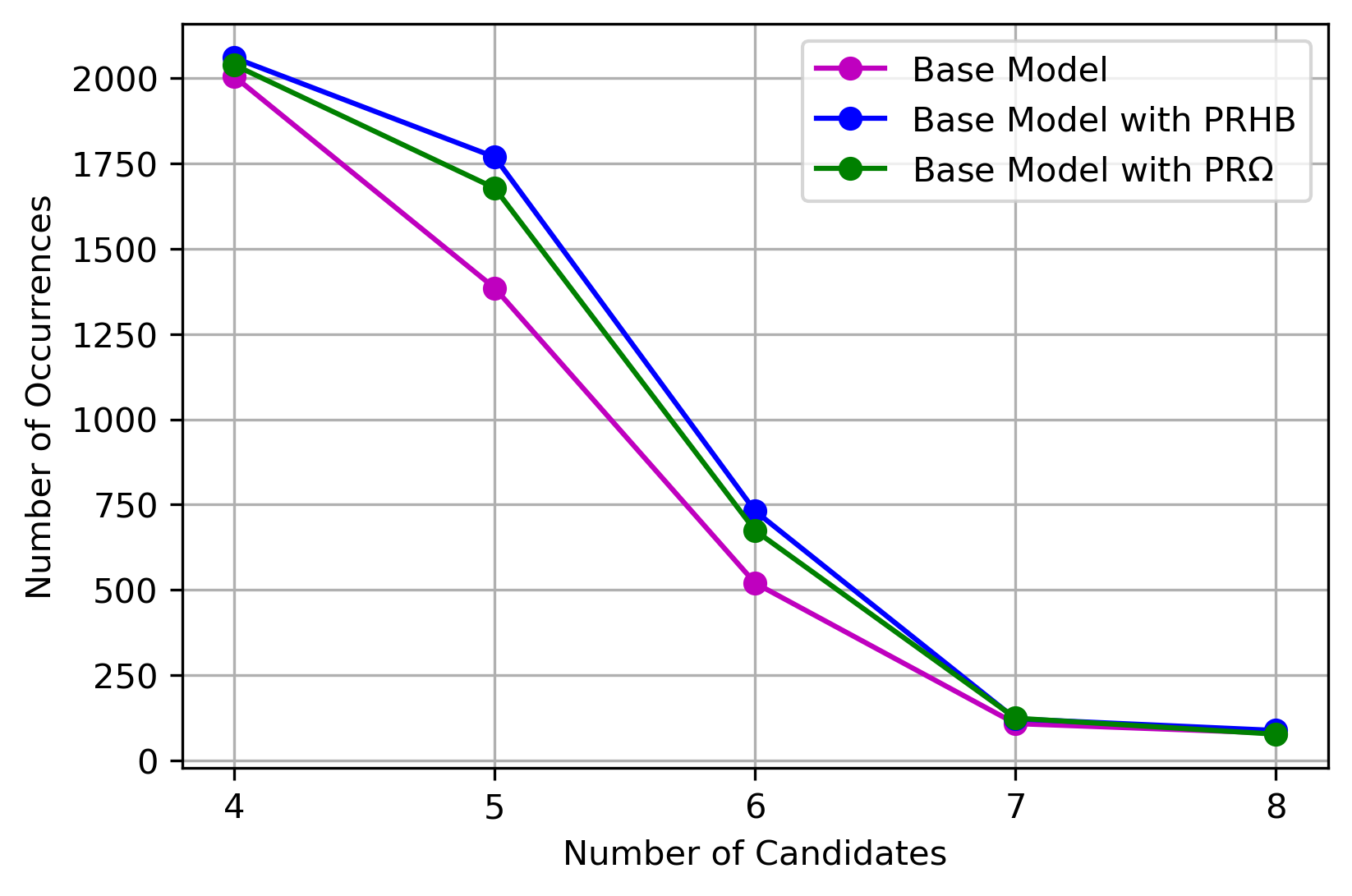}
        \caption{Comparison of the number of occurrences for the base model, the $PRHB$ method and the $PR\Omega$ method. Pairs were removed as follows: 2 for datasets with 4 and 5 candidates, 4 for datasets with 6 candidates, and 5 for datasets with 7 and 8 candidates.}
        \label{pr1}
\end{figure}
We found that, on average, the use of pair removal strategies outperforms the base model that does not incorporate pair exclusion (\cref{pr1}). 
Although the improvement is modest, given the ``limited" number of pairs removed, it proves to be helpful, reducing the complexity of the problem for the annealer.

We now move to the analysis of pair removal applied along with the \textsf{iterative method} on the previously examined dataset consisting of 20 candidates and 54 total cycles on average.
Recall that the \textsf{iterative method} was not always able to find the optimal solution for this dataset (\cref{grafico2054}). In an effort to address this challenge, we explore the removal of 16 pairs using both methods (\cref{pr2}).
\begin{figure}[H]
        \centering
        \includegraphics[width=0.7\textwidth]{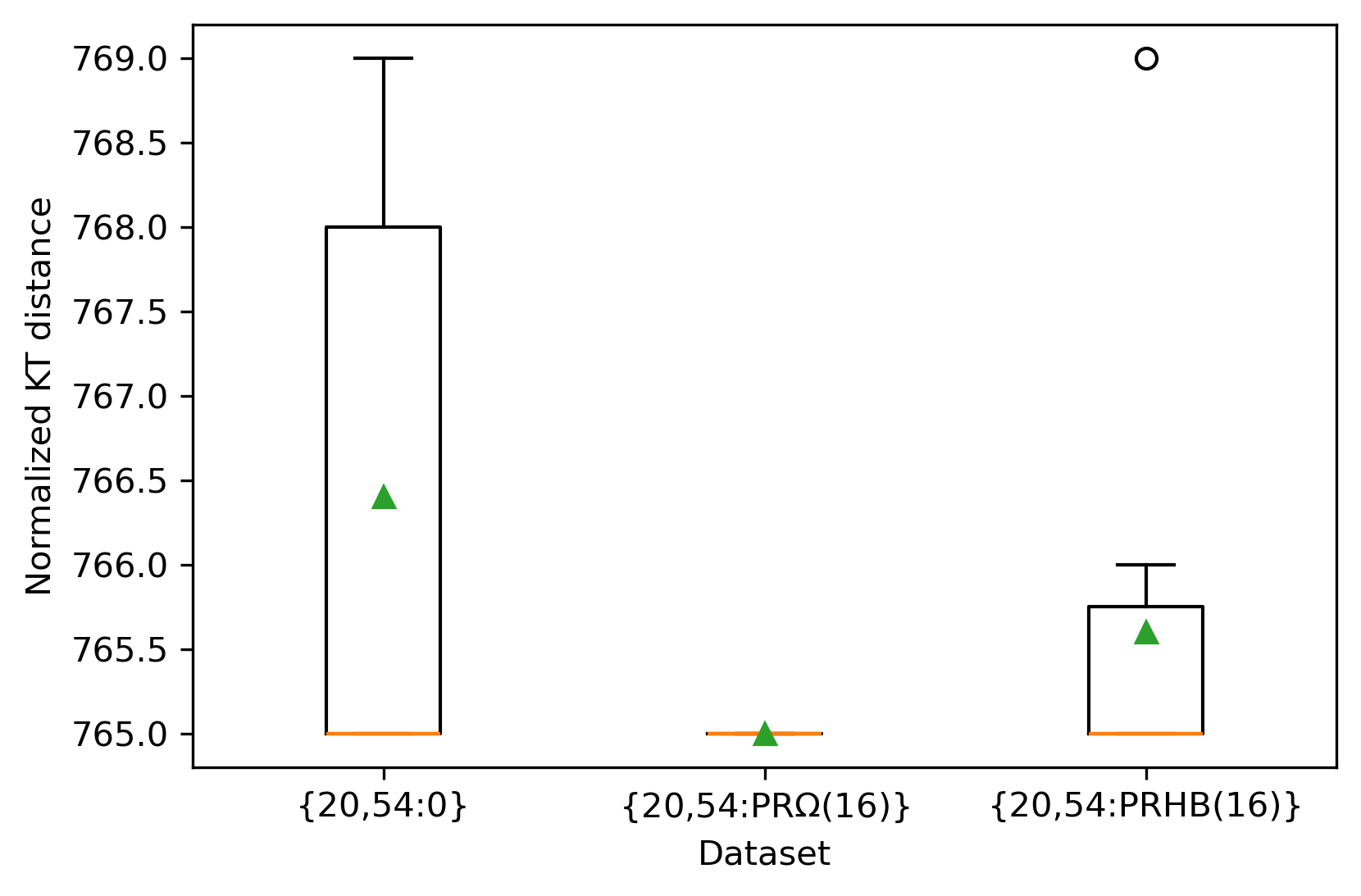}
        \caption{Boxplot comparison of Kendall-Tau distances for dataset \{20, 54\} for the iterative method, the iterative method with $PR\Omega$ and the iterative method with $PRHB$. The minimal distance was observed six times for the iterative method, always for the $PR\Omega$ method and seven times for the $PRHB$ method.}
        \label{pr2}
\end{figure}
Our findings show that the average distances between the output list and the dataset when pair removal is applied are generally lower than those obtained without this strategy, although all methods were able to achieve the same minimum distance. In particular, the $PR\Omega$ technique demonstrated superior performance due to its more consistent results and lower variability in distances.
While the benefits of pair removal are clear, its application must be approached with caution. Excessively or improperly removing pairs can lead to outputs that, while free of cycles, contain inaccurately valued pairs. Moreover, errors in early reconstruction steps can have cascading effects on the accuracy of pair values deduced in later steps.

In conclusion, our study validates the effectiveness of pair removal strategies in improving the quality of the solution, highlighting the importance of comparing and selecting the most suitable approach, as both $PR\Omega$ and $PRHB$ offer unique advantages and limitations.

\subsection{Iterative method vs KwikSort}

In our analysis so far, we have focused only on evaluating the solution quality of the \textsf{iterative method}, overlooking the time investment required to attain it. This section aims to provide a comprehensive performance comparison between the annealer-based method and traditional methods, factoring in both solution quality and time efficiency. By considering these aspects, users can make informed decisions tailored to their unique requirements. Our evaluation of classical methods includes a brute-force approach, where we computed the distance between each possible list and the dataset, alongside the KwikSort algorithm \cite{KwikSort}.
KwikSort, a variant of the QuickSort algorithm, is a comparison-based sorting algorithm that follows the divide-and-conquer paradigm. It works by selecting a ``pivot" element from the array and partitioning the other elements into two subarrays based on whether they are ranked higher or lower than the pivot in most of the input rankings. This process is recursively applied to the sub-arrays, ultimately leading to a sorted array. The efficiency of KwikSort, like QuickSort, relies on the choice of the pivot and can achieve an average case time complexity of $\mathcal{O}(n\log n)$ and a worst case time complexity of $\mathcal{O}(n^2)$ if poor pivot choices are made consistently.
In the context of Kemeny ranking, which seeks to find a consensus ranking that minimizes the total distance from individual rankings, KwikSort provides an efficient way to approximate the optimal Kemeny rank.\\
KwikSort was chosen for its ability to find a solution quickly, despite being an approximate method, because there are no methods with time complexities comparable to those of the annealer that guarantee an optimal solution.
Among the approximate methods, KwikSort appears to be the fastest, hence our decision to employ it.
Looking at the execution time of the classical methods considered (\cref{classicalmethods}),
\begin{figure}[b]
\centering
\includegraphics[scale=0.7]{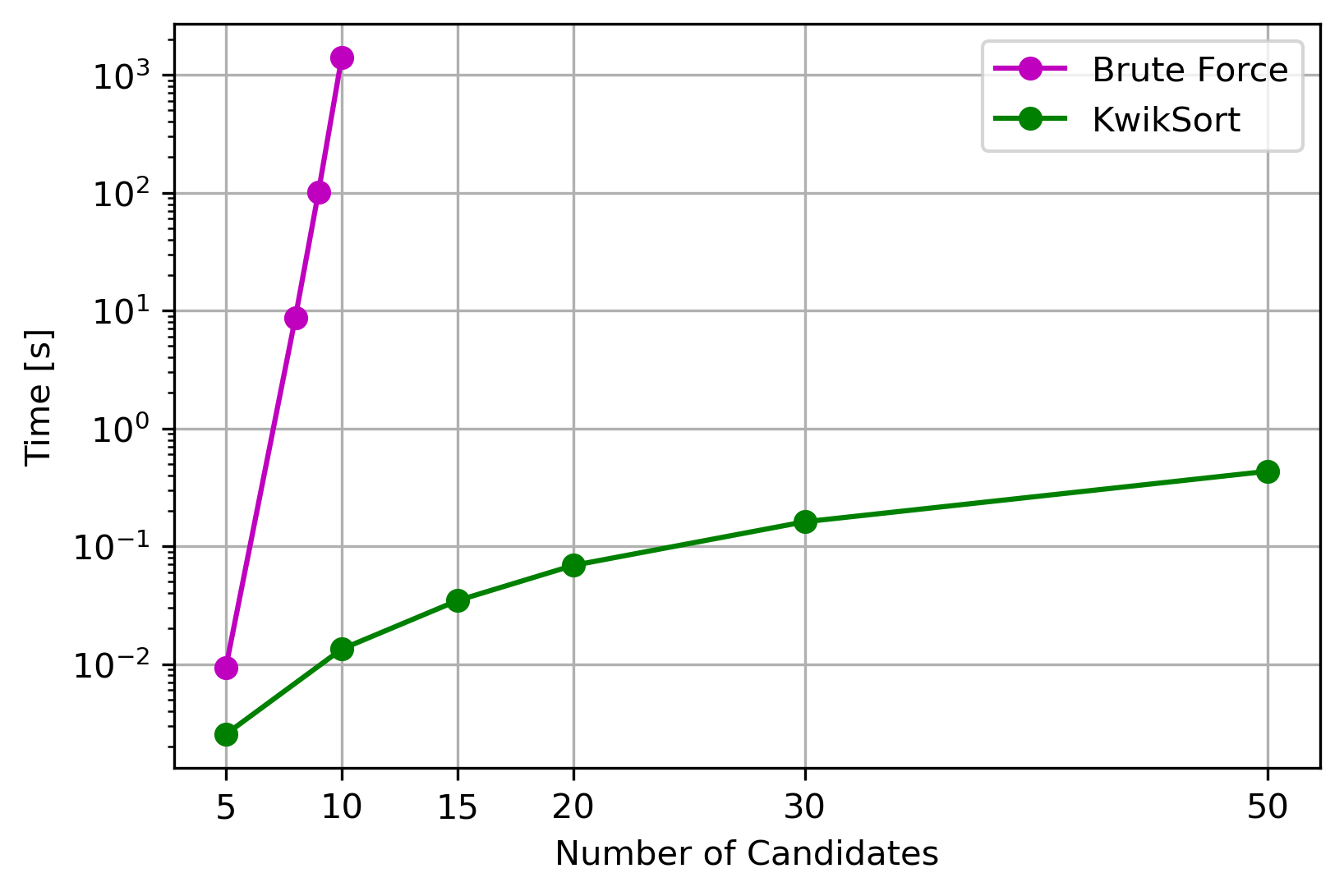}
\caption{Comparison of execution time as a function of the number of candidates in the dataset.}
\label{classicalmethods}
\end{figure}
It is evident that the brute force method, despite its ability to provide the optimal solution, becomes impractical when dealing with datasets of even moderate size. As a result, our analysis focuses exclusively on comparing the \textsf{iterative method} with KwikSort.

\subsubsection{Iterative method vs KwikSort: comparison as exact methods}

As previously mentioned, KwikSort is an approximate algorithm with a randomized output, which means that it returns a different result with each execution. By running KwikSort multiple times, it is possible to find the optimal solution, although this depends on the dataset (see Appendix \ref{ksnosol} for an example where this is not possible).
We aimed to leverage this property to compare KwikSort with the \textsf{iterative method} as exact methods.\\
It is important to note that this comparison may seem biased as KwikSort is an approximate method, but it serves the purpose of evaluating the \textsf{iterative method} performance in terms of speed against the fastest available method.
As a starting point, we computed the execution time for both methods in a single run (see \cref{KSvsA}). 
\begin{figure}[h]
    \centering
    \includegraphics[width=0.7\linewidth]{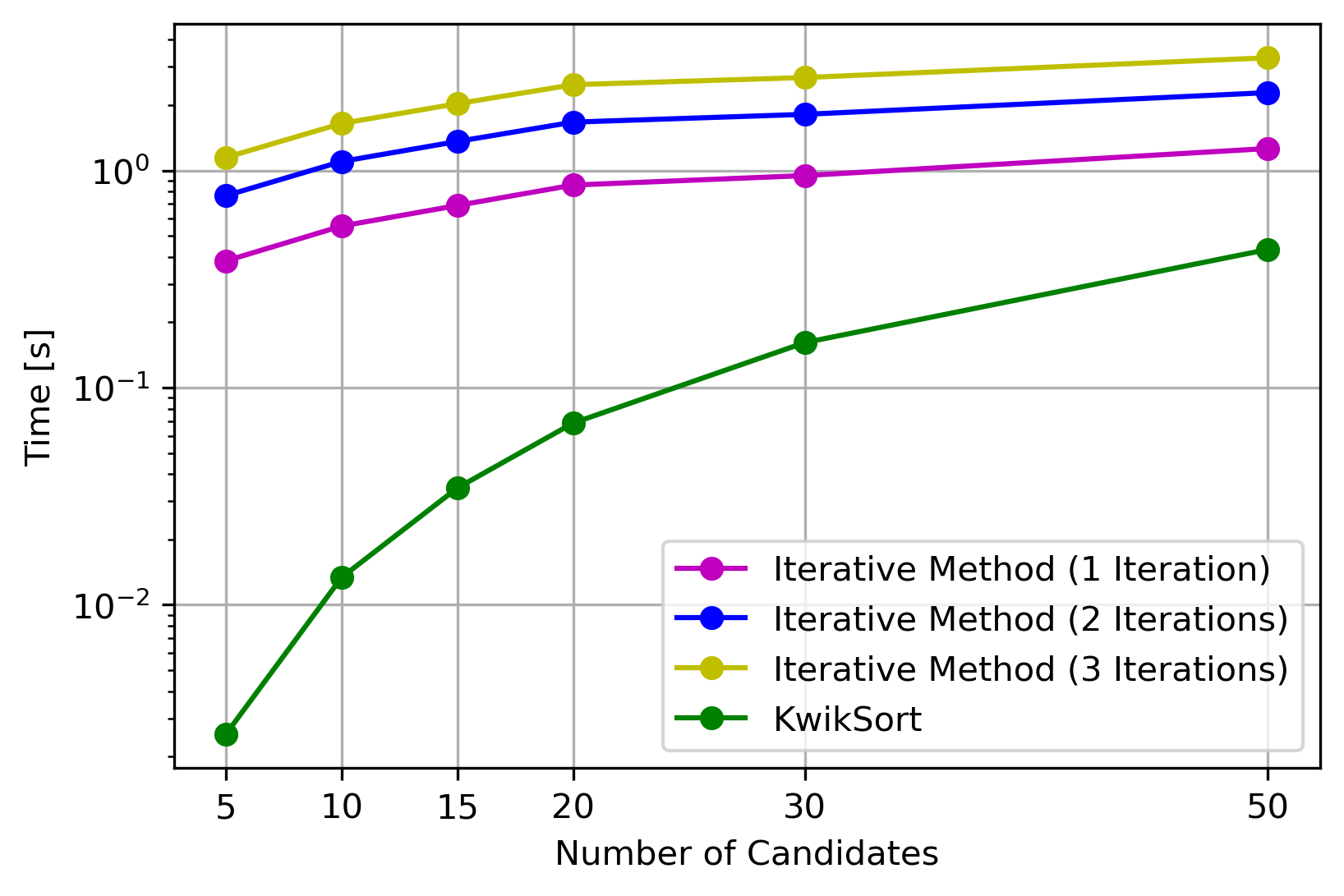}
    \caption{Comparison of execution times for a single run of KwikSort and the iterative method with different number of iterations.}
    \label{KSvsA}
\end{figure}
KwikSort, as expected, shows significantly superior performance in single run-times. We also would like to emphasize how the execution time of the \textsf{iterative method} is composed. Here, a substantial part is attributed to the embedding process, which constitutes approximately 80\% of the total time. Furthermore, as can be noted, the time required for the dataset with 30 candidates is greater than that for the dataset with 50 candidates. This shows how the scaling of the embedding time is not solely dependent on the dataset size but also depends on factors such as the number of cycles or the overall complexity of the dataset.

To compare the two methods as exact methods, we will therefore calculate the average number of attempts required by KwikSort to find a solution at least as good as that of the \textsf{iterative method} (see \cref{ntrials} in Appendix \ref{addfig}). Despite conducting 100,000 trials on the datasets consisting of 30 and 50 candidates each, KwikSort failed to find a solution of comparable quality. This outcome does not definitively rule out the possibility of KwikSort achieving a comparable solution, as exploring all pivot choices is required to exclude this, although it strongly suggests otherwise. Consequently, we will exclude these two datasets from the comparison.
The total time for KwikSort is then calculated as the average time per trial multiplied by the average number of trials:
\begin{equation}
    T_{KS} = \langle N_{trials} \rangle \cdot T_{trial}.
\end{equation}
Updating the timing for KwikSort, we obtain a new version of \cref{KSvsA} where now both methods achieve the same best solution (\cref{fig:timeoptsol}).
\begin{figure}[h]
    \centering
    \includegraphics[width=0.7\linewidth]{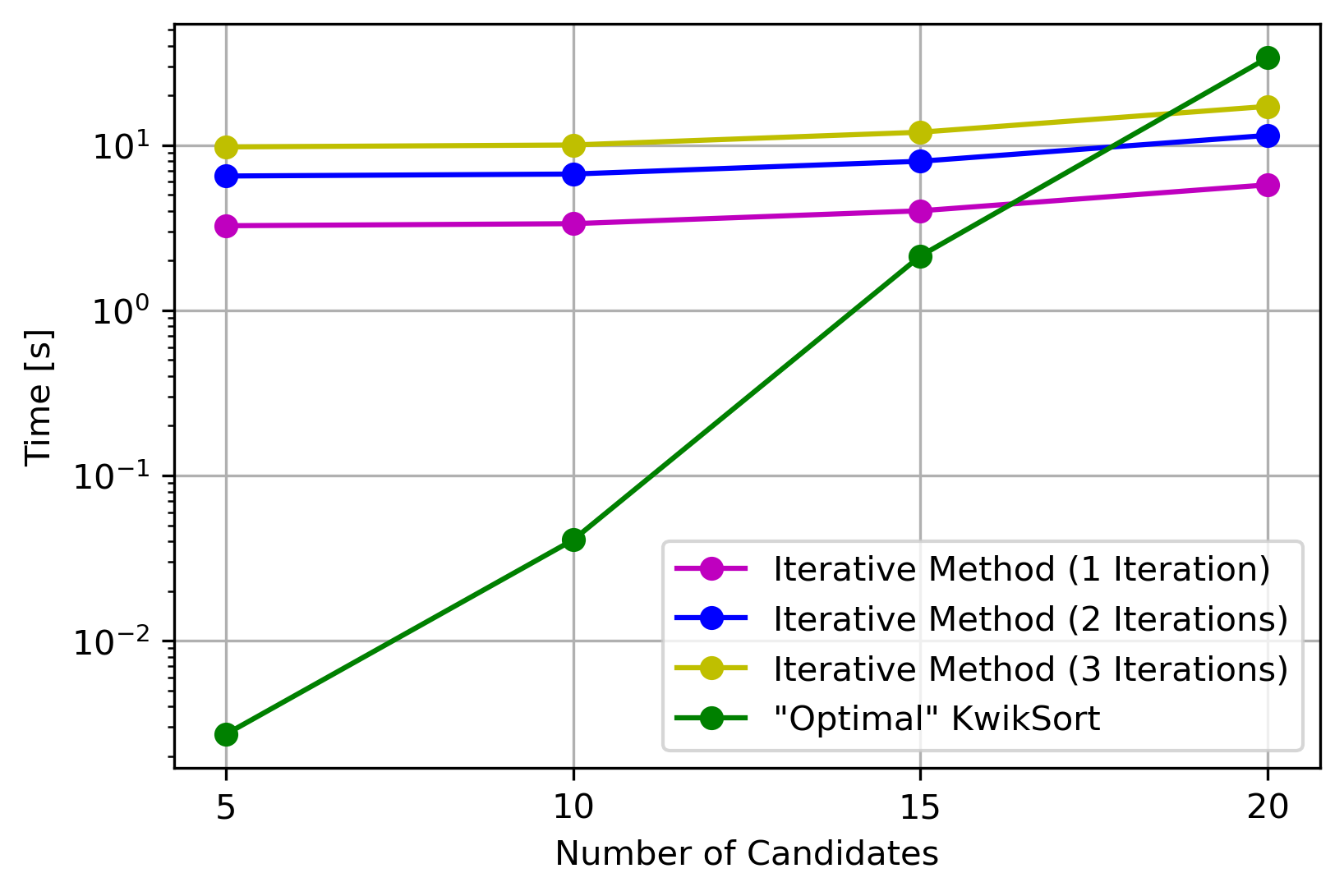}
    \caption{Comparison of execution times required to find the optimal solution for KwikSort and iterative method with different number of iterations.}
    \label{fig:timeoptsol}
\end{figure}
The results show that KwikSort, when used to find the optimal solution, takes longer than the \textsf{iterative method} only in the case of the 20 candidates dataset. However, this marks the beginning of a clear trend, where the difference in execution time between the two methods progressively increases as the dataset size increases. In fact, if we also consider the 30 and 50 candidates datasets, where KwikSort failed to find the optimal solution within the given number of trals, the trend becomes even more pronounced.\\
To further illustrate the performance difference, we compare the Kendall-Tau distances of the outputs of both methods, with KwikSort data obtained from 300 measurements per dataset. Figure \ref{resoconto} shows the growing discrepancy between the distance of the solutions of the two methods as the number of candidates increases.
\begin{figure}[H]
    \centering
    \includegraphics[scale=0.7]{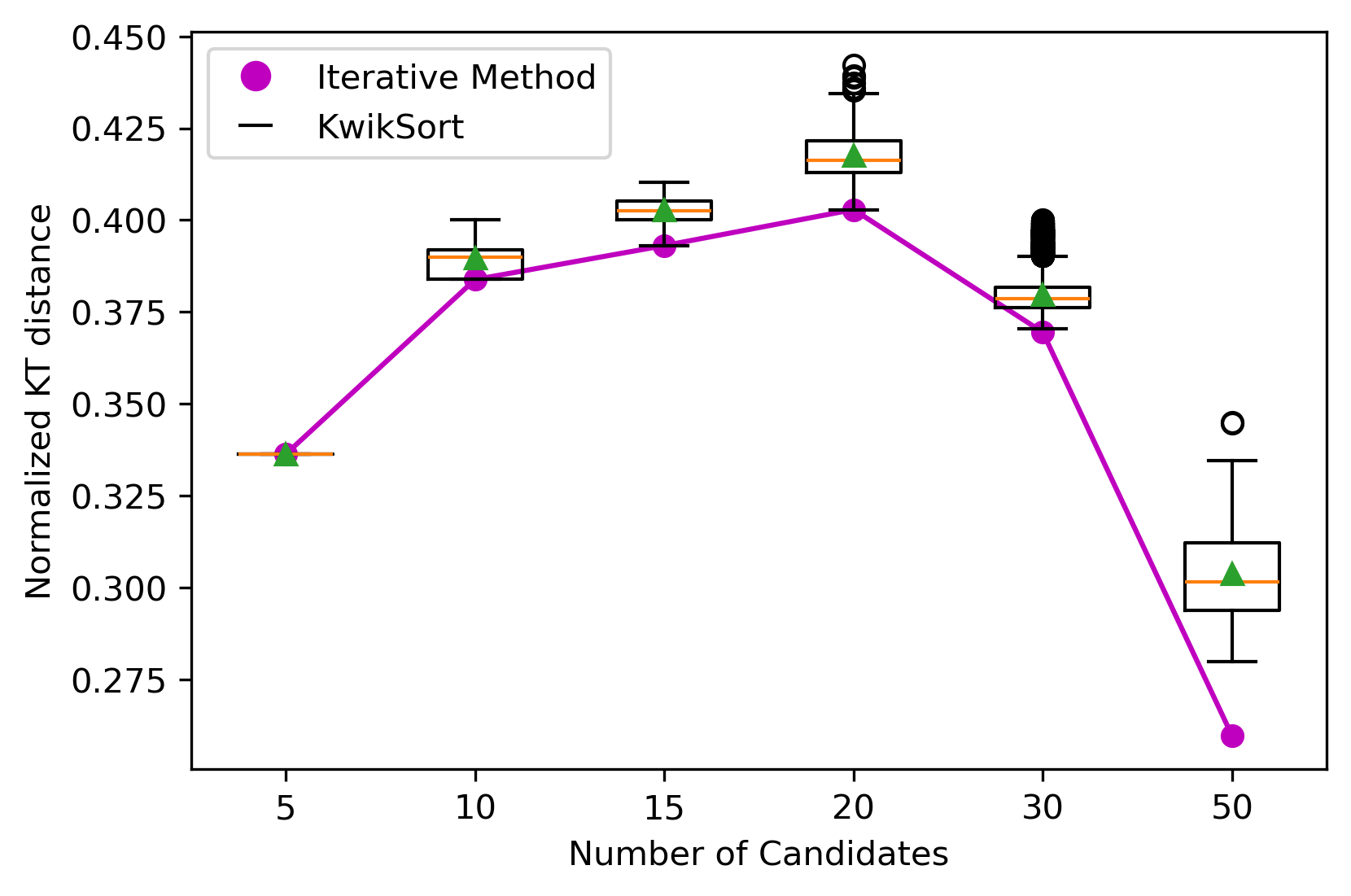}
    \caption{Normalized Kendall-Tau distance comparison between KwikSort (boxplots, 300 measurements) and the iterative method (line).}
    \label{resoconto}
\end{figure}
Although KwikSort excels in individual run-time, the \textsf{iterative method} consistently provides higher-quality solutions.

In summary, KwikSort shows superior speed compared to the \textsf{iterative method} for datasets containing up to 15 candidates, making it the better choice for smaller datasets.
However, as the number of candidates increases, the \textsf{iterative method} becomes more efficient and consistently delivers higher-quality solutions. Considering this, along with the inherent uncertainty in KwikSort's ability to find the optimal solution, the annealer-based method emerges as the preferred choice. In fact, the \textsf{iterative method} consistently delivers the best solutions, and while KwikSort may offer a slight time advantage for small datasets, this marginal benefit becomes irrelevant in light of the uncertainty surrounding its optimality.\\

\subsubsection{Iterative method vs KwikSort: comparison as approximate methods}

In this section, we focus on considering both the annealer and KwikSort as approximate solution methods. As previously mentioned, when tackling highly complex problems or when the number of iterations used is insufficient to address and eliminate all relevant cycles, the annealer's output represents an approximate solution.\\
Specifically, our attention is drawn to the former scenario: we aim to explore situations where the annealer can only be employed as an approximate method due to the dataset's complexity, and see how the quality of the solutions varies in comparison to KwikSort.\\
In this specific context, we are tasked with selecting a stopping criterion to determine when to halt the iterations. As a stopping criterion, we have chosen to halt after four cycle updates as a good compromise between discovering cycles (indicating proximity to the optimal solution) and avoiding an excessive number of penalized cycles in the final iteration. This balance helps maintain a reasonable level of confidence in the output, despite potential errors arising from the problem's complexity.\\
For each evaluation, we then executed four iterations, performing five separate evaluations for each dataset. In each evaluation, the solution is defined as the ranking with the minimal distance among the four iterations.
The datasets employed for this analysis are denoted as \{$N$, $C_{in}$\}, where $N$ represents the number of candidates, and $C_{in}$ denotes the number of cycles found in the initial configuration $\Omega$. The datasets used are the following: \{30, 300\}, \{40, 448\}, \{40, 800\}, \{40, 1124\} and \{50, 400\}.
The first four datasets are drawn from the synthetic dataset set, while the final one is sourced from the simplified synthetic set.

The first thing we had to tackle was a problem regarding two datasets: \{40, 800\} and  \{40, 1124\}. Finding an embedding with that number of cycles was not possible, so we opted to reduce the initial cycles using the strategy \textit{Facilitating embedding discovery} described in \cref{section:mitigation}.
Furthermore, as previously mentioned, it is quite unlikely that each iteration brings us closer to the optimal solution; hence, we track the best solution and keep its distance, regardless of whether it appeared in the first or last iteration.
For KwikSort, we conducted 10000 runs per dataset, collecting the distance for each output.

Of the five datasets used,  we will present plots for only two: one where the \textsf{iterative method} performs better, finding an output with a smaller Kendall Tau distance \{30, 300\}, and one where KwikSort performs better \{40, 1124\}.
First we present the case where the annealer performs better (\cref{fig:30300}).
\begin{figure}[H]
    \centering
    \begin{subfigure}[t]{0.48\textwidth}
        \centering
        \includegraphics[width=\linewidth]{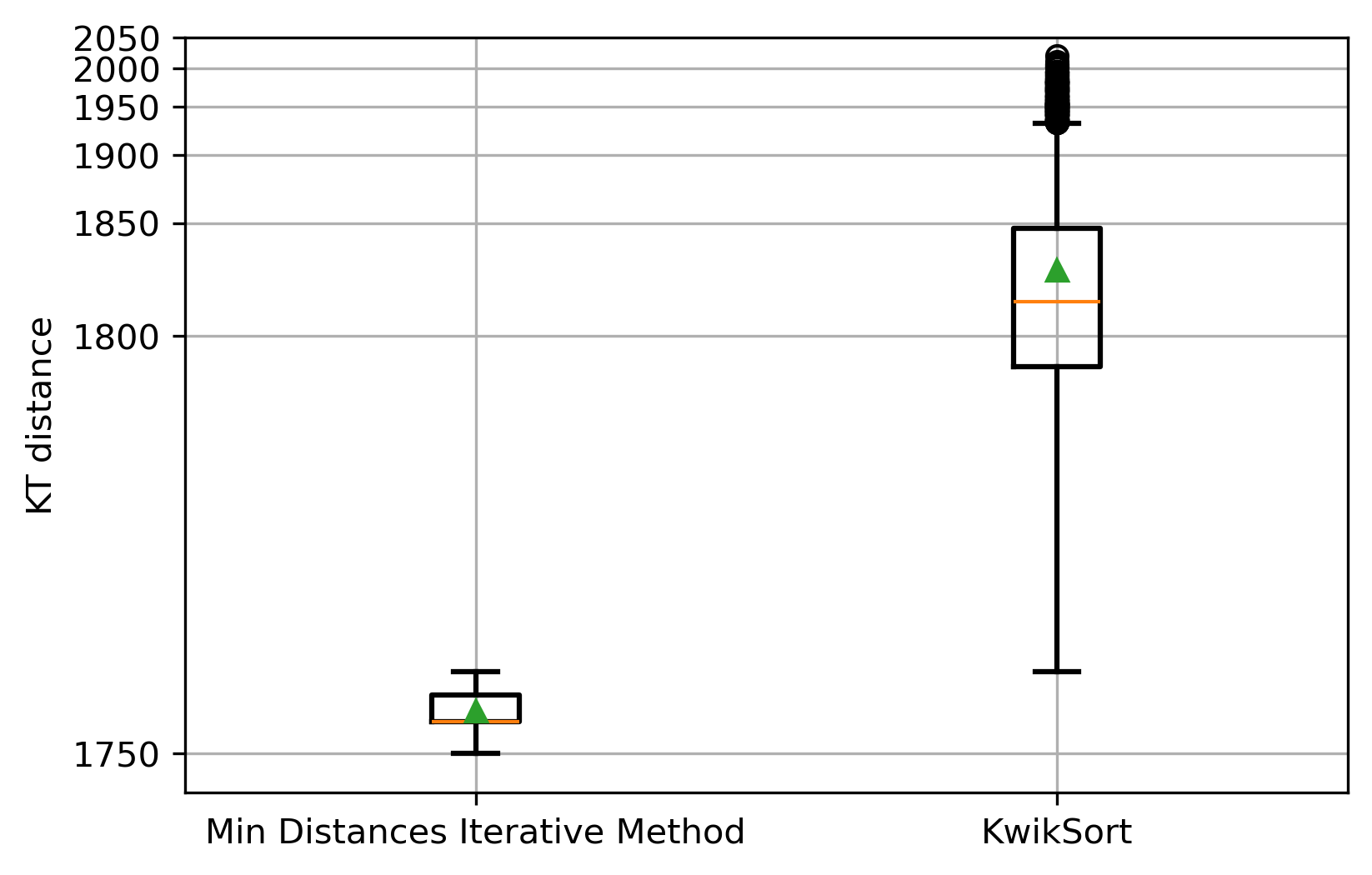}
        \caption{}
        \label{fig:303001}
    \end{subfigure}
    \hfill
    \begin{subfigure}[t]{0.48\textwidth}
        \centering
        \includegraphics[width=\linewidth]{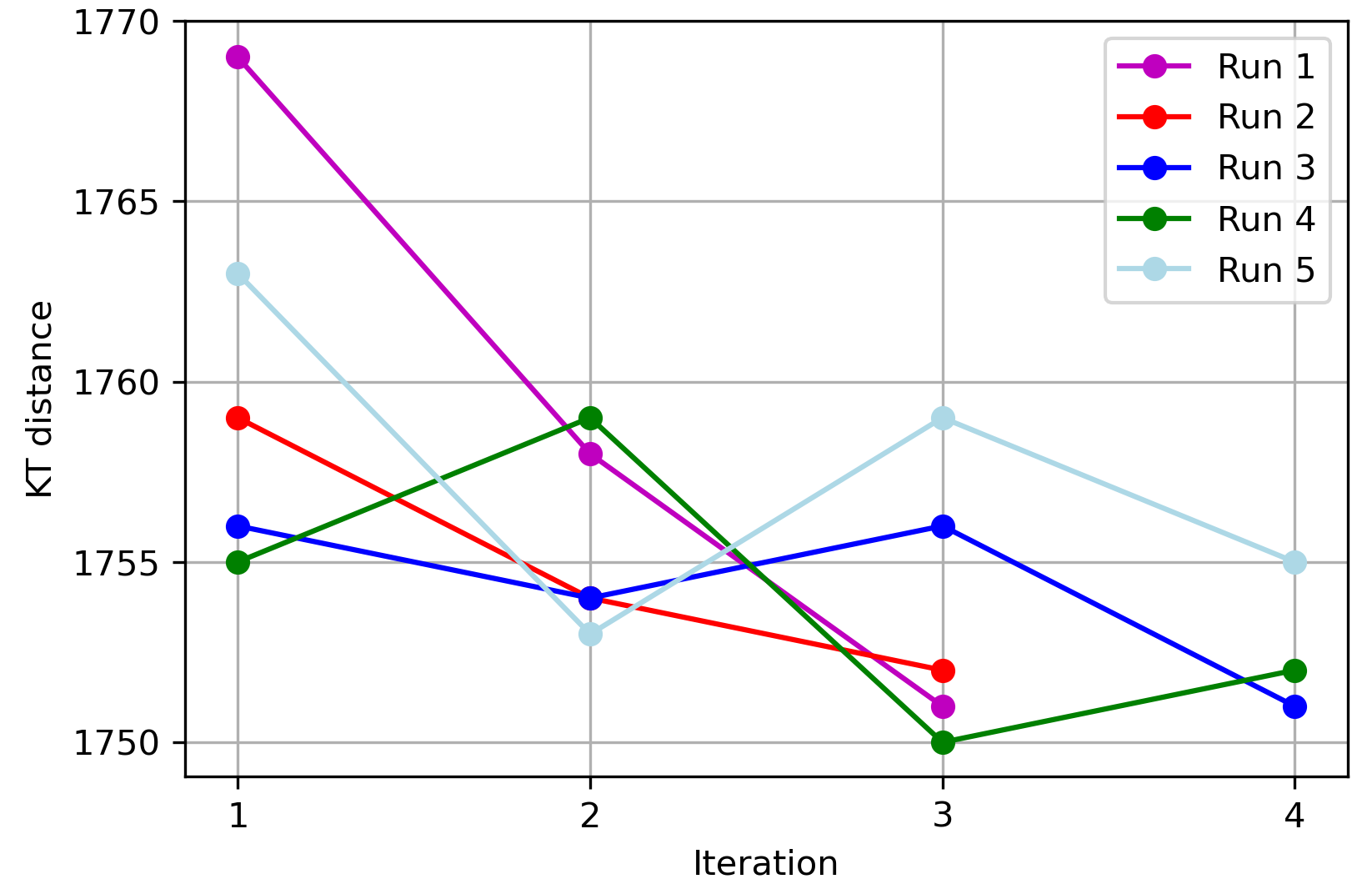}
        \caption{}
        \label{fig:303002}
    \end{subfigure}
    \caption{Comparison of the Kendall Tau (KT) distance between the KwikSort output and the iterative method's minimum output across four iterations on the dataset $\{30, 300\}$ (\ref{fig:303001}). The second plot (\ref{fig:303002}) shows the path of the five runs performed by the iterative method.}
    \label{fig:30300}
\end{figure}
The first observation is that the annealer significantly outperforms KwikSort both on average and in terms of the minimum value found. Another important observation from the second graph (\cref{fig:303002}) is that the first and the second runs converge to an output without cycles in only three iterations. However, it is important to acknowledge that the distances obtained in these cases are greater than the minimum distance found, which highlights once again that the absence of cycles does not necessarily guarantee an optimal solution for challenging datasets.

Now let us move to one case where the \textsf{iterative method} performs worse \{40, 1128\}.
Given the high number of initial cycles in this dataset, we were unable to find an embedding. To address this issue, we proceeded by removing cycles containing pairs that were already present in other cycles. 
In this specific example, we removed cycles containing pairs that appeared more than 6 times in other cycles, reducing the 1128 starting cycles to only 445, which were then used to initiate the annealing process.
\begin{figure}[H]
    \centering
    \begin{subfigure}[t]{0.48\textwidth}
        \centering
        \includegraphics[width=\linewidth]{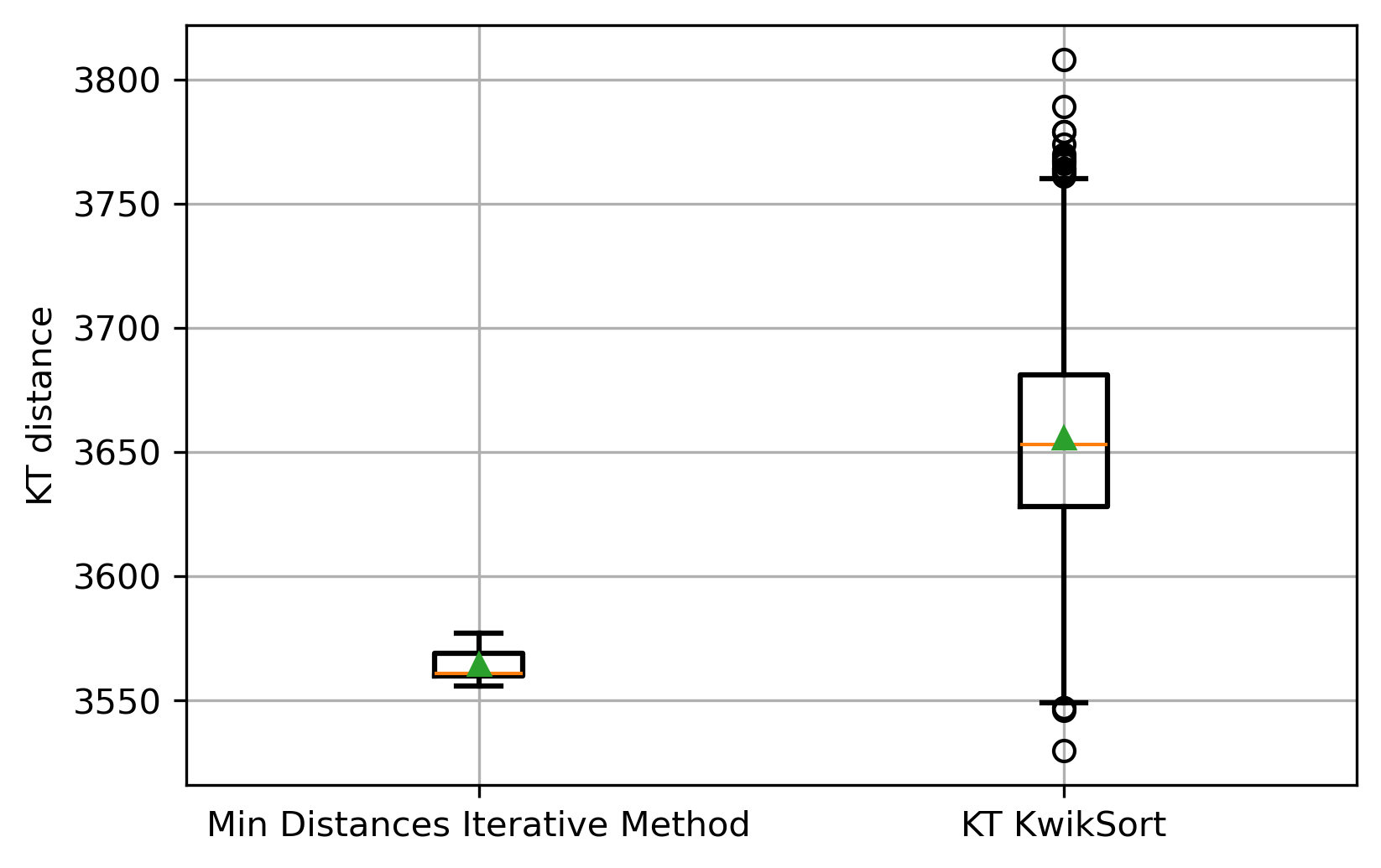}
        \caption{}
        \label{fig:4011281}
    \end{subfigure}
    \hfill
    \begin{subfigure}[t]{0.48\textwidth}
        \centering
        \includegraphics[width=\linewidth]{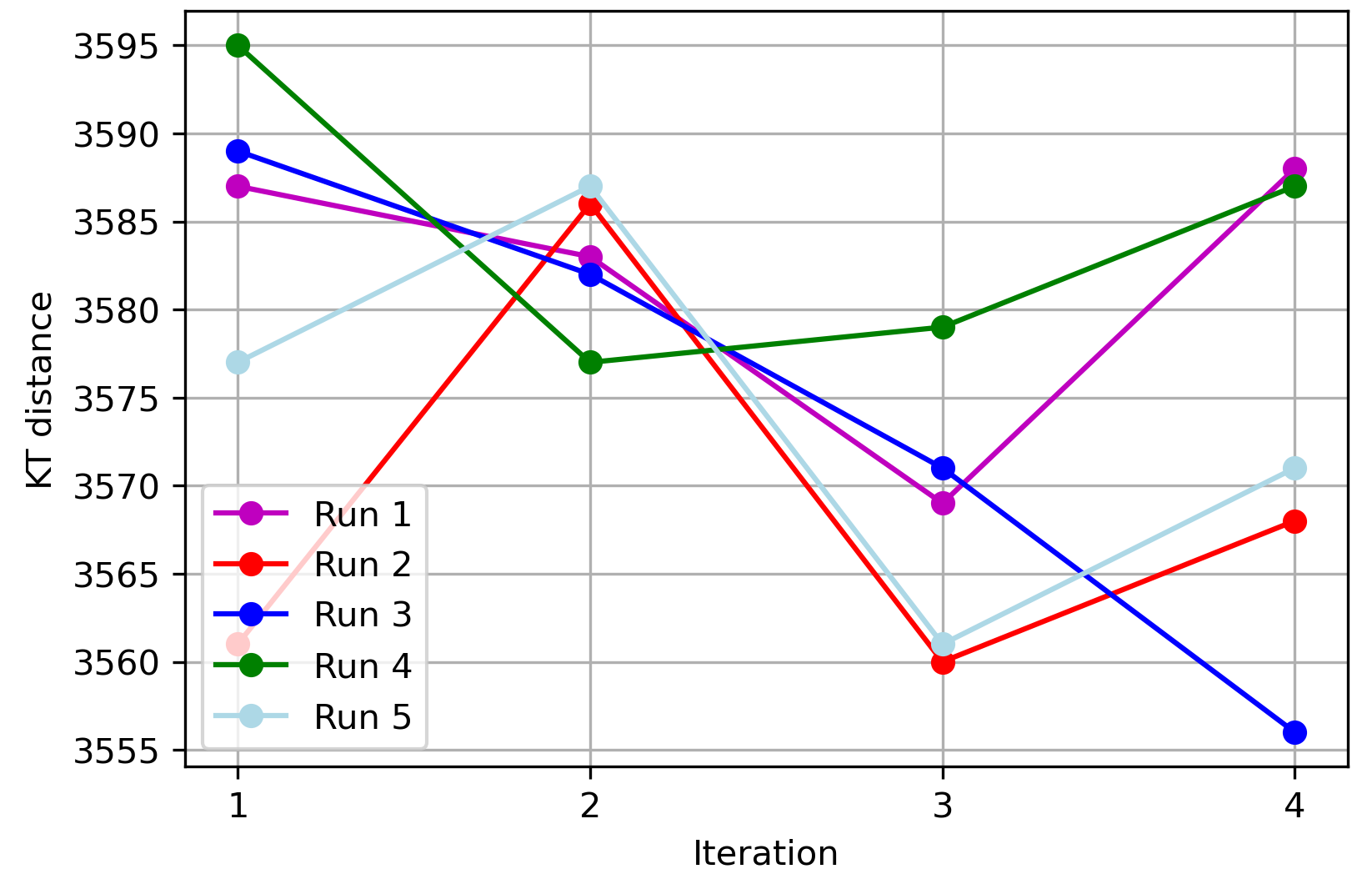}
        \caption{}
        \label{fig:4011282}
    \end{subfigure}
    \caption{Comparison of the Kendall Tau (KT) distance between the KwikSort output and the iterative method's minimum output across four iterations on the dataset $\{40, 1128\}$ (\ref{fig:4011281}). The second plot (\ref{fig:4011282}) shows the path of the five runs performed by the iterative method.}
    \label{fig:401128}
\end{figure}
We first notice (see \cref{fig:401128}) how, in this case, KwikSort occasionally finds better solutions than the \textsf{iterative method}. Thus, in cases like this, with a large number of cycles, KwikSort performs better in the sense that it is able to find solutions with a smaller Kendall-Tau distance. The average distance of KwikSort's outputs is still higher than that found through the \textsf{iterative method}. Therefore, one must consider the desired solution quality when choosing between the two methods.
For this reason, we present the comprehensive data we collected, including all 5 datasets, in \cref{tab:comparison}. This table provides four key metrics to facilitate the decision-making process. Here, $KT_{IM}$ refers to the Kendall-Tau distance of a solution produced by the \textsf{iterative method}, while $KT_{KS}$ denotes the Kendall-Tau distance of a solution produced by KwikSort. The metrics are as follows: the \textsf{iterative method} time $T_{IM}$, which is the time needed by the \textsf{iterative method} to produce an output; $KT_{KS} < \text{Avg}(KT_{IM})$, which is the average time needed by KwikSort to produce a solution better than the average solution of all runs of the \textsf{iterative method} (i.e., the average over the 20 data points); $KT_{KS} < \text{Avg}(\underset{\text{runs}}{\min}(KT_{IM}))$, where instead we consider the average of only the minimum Kendall-Tau distances obtained in each run and $KT_{KS} < \min(\underset{\text{runs}}{\min}(KT_{IM}))$, which represents the average time required by KwikSort to find a solution better than the best solution found by the \textsf{iterative method} in these 5 runs.
\begin{table}[h]
    \centering
    \resizebox{\textwidth}{!}{%
    \begin{tabular}{|c|c|c|c|c|}
    \hline
    Dataset & $T_{IM}$ [s] & $KT_{KS} < \text{Avg}(KT_{IM})$ [s] & $KT_{KS} < \text{Avg}\bigg(\underset{\text{runs}}{\min}(KT_{IM})\bigg)$ [s] & $KT_{KS} < \min\bigg(\underset{\text{runs}}{\min}(KT_{IM})\bigg)$ [s] \\
        \hline
    \{30, 300\}  &  103.57 & 1015.5 & - & -\\
        \hline
    \{40, 448\}  &  134.45   & 1980 & - & -\\
        \hline
    \{40, 800\}  &  216.83   & 220 & - & -\\
        \hline
    \{40, 1128\} &  214.55   & 26.92 & 70.525 & 157.24 \\
        \hline
    \{50, 400\}  &   148.4  & 13.87 & 195.2 & 1366.41 \\
        \hline
    \end{tabular}%
    }
    \caption{This table provides concise performance metrics to evaluate the efficiency and effectiveness of the iterative method and KwikSort, aiding in the selection of the most suitable approach.
    The `-' sign indicates that a numerical value corresponding to that cell does not exist.}
    \label{tab:comparison}
\end{table}
Upon reviewing the provided table, users can gain insights to help with selecting the most suitable approximate method based on their desired precision for the solution.
It is important to bear in mind the obsolescence of the hardware used, which means that these data may be significantly smaller when using a modern computer.

\section{Related work}
During our investigation, a related study was published proposing a foundational model similar to the one we have introduced as our base model. The method in question, detailed in Section 4.3 of the referenced paper \cite{krbase}, introduces a mathematical framework that aligns closely with our proposed base model. This coincidence highlights the relevance of the research within this domain.

However, it is crucial to emphasize that while the aforementioned study lays a significant groundwork by exploring the base model, it does not leverage the binary representation and does not extend its analysis to the \textsf{iterative method} that forms the cornerstone of our contribution. Our research diverges and advances significantly from this point, offering a novel approach that we propose as a key enhancement to the base model. We demonstrate that this iterative methodology not only complements but substantially improves the problem-solving capabilities of the base model.
The benefits of our \textsf{iterative method} are manifold, providing enhanced accuracy, efficiency, and applicability to a wider range of problems. Through rigorous testing and analysis, we have showcased how our iterative approach addresses the limitations of the base model, offering a more robust solution that could significantly contribute to the field.

In conclusion, although our work and the referenced study independently arrived at similar base models, our use of the binary representation for the \textsf{iterative method} represents a significant advancement. 

\section{Conclusions and outlook}

In this work, we proposed a novel approach to computing the Kemeny ranking using a quantum annealer, along with various methods to enhance its performance. The first improvement, the \textsf{iterative method}, leverages the pairwise preference representation to penalize only the cycles relevant to the problem. The second improvement, the pair removal technique, temporarily excludes some pairs from the problem, solves this simplified version, and then reintroduces the excluded pairs by inferring their values through the transitivity property.

The base model already outperforms the benchmarked model ($n^2$-representation model), but the real leap in performance comes from the \textsf{iterative method}, which allows addressing much more complex datasets. When used in conjunction with the pair removal method, performance improves even further, although marginally. \\
We compared the \textsf{iterative method} with KwikSort, both as exact methods and as approximate methods. For small datasets with less than 15 candidates, KwikSort performs better as an exact method, finding the solution faster than the \textsf{iterative method}. As an approximate method, KwikSort outperforms the \textsf{iterative method} in terms of finding a better solution as the dataset complexity increases, despite requiring more time. Users can thus choose the most suitable method based on their specific needs and the complexity of the dataset.

Future work involves leveraging Kemeny ranking in machine learning as both an ensemble method and for feature reduction, alongside exploring a modified version of the \textsf{iterative method}. This modified method would take an approximate solution as input and output a solution with a Kendall-Tau distance less than or equal to that of the input rank, thereby improving the approximate solution. \\
Concerning the usage as an ensemble method, instead of producing a single best class from each classification algorithm, the proposal is to generate a ranking of classes. These ranks can then be combined using the ranking model to derive a final, refined ranking, resulting in a more sophisticated and comprehensive classification process. Further refinement can be achieved by assigning weights that progressively diminish with each rank position, emphasizing the primary position's influence over subsequent ones and enhancing the precision of the classification outcome. Systematically exploring this weight space allows for the identification of optimal weights that yield the best classification results.

In conclusion, the introduction of this new approach significantly enhances the application of Kemeny ranking to large-scale problems, which was previously unthinkable due to the prohibitive computational challenges. This approach not only extends the applicability of Kemeny ranking but also paves the way for further research in this field. The potential implications of this advancement are vast, promising to unlock new opportunities and avenues for exploration in the realm of Kemeny ranking.

\section{Acknowledgments}
This work was funded by the National Recovery and Resilience Plan (PNRR), under Mission 4 "Education and Research"-Component 2, Investment 1.1 "Fund for the National Research Program, Projects of Relevant National Interest (PRIN)". Call: PRIN 2022 (D.D. 104/22), project title: ENGineering INtElligent Systems around intelligent agent technologies, CUP: E53D23007970006. The project was carried out at the department of information engineering and computer science (DISI) of the University of Trento. The authors gratefully acknowledge the Italian Ministry of University and Research (MUR) for supporting this research through the PRIN 2022 funding program. 

They also extend their gratitude to the J\"ulich Supercomputing Center (\url{https://www.fz-juelich.de/ias/jsc}) for funding this project by providing computing time on the D-Wave Advantage\texttrademark{} System JUPSI through the J\"ulich UNified Infrastructure for Quantum computing (JUNIQ).

DP is member of the ``Gruppo Nazionale per la Fisica Matematica (GNFM)'' of the ``Istituto Nazionale di Alta Matematica ``Francesco Severi'' (INdAM)''.

\begin{appendices}

\section{Proof: Maximum penalty weight for datasets with an odd number of votes} \label{secA1}

Let the total number of votes be denoted as $|\Pi|$, with $|\Pi|$ odd, and let us consider the triplet (a, b, c) as a representative example. For a cycle to form, we require either $b_{ab} < 0$, $b_{ac} > 0$, $b_{bc} < 0$ or $b_{ab} > 0$, $b_{ac} < 0$, $b_{bc} > 0$. Without loss of generality, let us assume the first case; the proof for the second case is symmetric.\\
For a penalty coefficient $P > |\Pi|$ to be necessary, one of the biases must be equal to $\pm|\Pi|$. We fix one of the biases, let us say $b_{ab} = (w_{ab} - w_{ba}) = -|\Pi|$.
This means that we need to satisfy:
\begin{equation}
  \begin{cases}
    b_{ab}=-|\Pi|\\
    b_{ac}>0\\
    b_{bc}<0
\end{cases}  
\end{equation}
Using the definition of bias and the fact that $|\Pi|$ is odd, the last two requirements can be decomposed as:

\[
\begin{tikzcd}[column sep=small]
b_{ac} > 0 \arrow[r,Rightarrow] & \begin{cases}
w_{ca} \geq (|\Pi|+1)/2\\
w_{ac} \leq (|\Pi|-1)/2
\end{cases} \\
b_{bc} < 0 \arrow[r,Rightarrow] & \begin{cases}
w_{cb} \leq (|\Pi|-1)/2 \\
w_{bc} \geq (|\Pi|+1)/2
\end{cases}
\end{tikzcd}
\]
Now we look at what are the possible lists present in the dataset.
Since $a \prec b$ is fixed for each list (due to $b_{ab}=-|\Pi|$), there are only three possible orders in the dataset for these three candidates:
\begin{enumerate}
\item $a \prec b \prec c$
\item $a \prec c \prec b$
\item $c \prec a \prec b$
\end{enumerate}
The contribution of each list to the $w$ matrix is:
\begin{enumerate}
\item $a \prec b \prec c$ $\xrightarrow{}$ $w_{ab} \pluseq 1$, $w_{ac} \pluseq 1$, $w_{bc} \pluseq 1$
\item $a \prec c \prec b$ $\xrightarrow{}$ $w_{ab} \pluseq 1$, $w_{ac} \pluseq 1$, $w_{cb} \pluseq 1$
\item $c \prec a \prec b$ $\xrightarrow{}$ $w_{ab} \pluseq 1$, $w_{ca} \pluseq 1$, $w_{cb} \pluseq 1$
\end{enumerate} Let $l_1$, $l_2$, and $l_3$ represent the coefficients of each list, indicating how many times they appear in the dataset. Then, $l_1 + l_2 + l_3 = |\Pi|$, i.e., the sum of the coefficients equals the total number of votes.\\
We rewrite the previous system of equations using $l_1, l_2$ and $l_3$:
\begin{equation}
\begin{cases}
l_1 + l_2 + l_3 = |\Pi|\\
l_1 + l_2 \leq \frac{|\Pi|-1}{2}\\
l_3 \geq \frac{|\Pi|+1}{2}\\
l_1 \geq \frac{|\Pi|+1}{2}\\
l_2 + l_3 \leq \frac{|\Pi|-1}{2}\\
\end{cases}
\end{equation} 
However, these conditions cannot be satisfied simultaneously. Therefore, a cycle cannot form with one of the biases equal to $|\Pi|$. The next possible value for the bias is $|\Pi|-2$. Using similar reasoning, we can show that a cycle is possible in this case. Hence, we have demonstrated that, as a penalty coefficient for a dataset with odd number of votes, $P > |\Pi|-2$ is sufficient.

\section{Transformation Table}\label{tt}
In the reconstruction of a pair process we considered only the case $a<b<x$. However, two additional scenarios are possible: $a < x < b$ and $x < a < b$. To handle these cases, one can use the following table, generated through the swap $x_{ij}\xrightarrow{}(1-x_{ji})$ and requiring that i$<$j, to compute the ``vote" coefficient $R_x$:
\begin{table}[ht]
\centering

\begin{tabular}{|c|c|c|c|}

\hline
\textbf{$a<b<x$} & \textbf{$a<x<b$} & \textbf{$x<a<b$} & \textbf{$R_{x}$} \\
\hline
(0,0) & (0,1) & (1,1) & 0\\
\hline
(0,1) & (0,0) & (1,0) & -1\\
\hline
(1,0) & (1,1) & (0,1) & 1\\
\hline
(1,1) & (1,0) & (0,0) & 0 \\
\hline
\end{tabular}
\caption{Transformation table for computing the coefficient $R_X$ in the scenarios $a<b<x$, $a<x<b$ and $x<a<b$.}
\label{tab:AXB}
\end{table}

\section{Pseudocode}\label{appC}

This section presents the pseudocode for the proposed methods. For datasets with an even number of votes, the pseudocode of the iterative method is the same as for the odd case, except that $P_{ijk}$ is initialized to $\epsilon$ rather than $1 + \epsilon$.\\
The algorithms below outline the base model, iterative method, and pair removal strategy.

\begin{algorithm}
\caption{Base Model}
\label{alg:base_model}
\begin{algorithmic}[1]
\Require Dataset of rankings $\Pi$
\State Compute the pairwise comparison matrix $W$ from $\Pi$ 
\State Compute the bias matrix $B$, where $b_{ij} = w_{ji} - w_{ij}$
\State Define the QUBO cost function:
        \begin{equation*}
            C(X) = \sum_{i, j \ : \ i < j}^n b_{ij} x_{ij} + P \sum_{i, j, k \ : \ i < j < k}^n {c}_{ijk}(X),
        \end{equation*}
        where:
        \begin{equation*}
            {c}_{ijk}(X) = x_{ik} + x_{ij} x_{jk} - x_{ij} x_{ik} - x_{jk} x_{ik}.
        \end{equation*}
\State Encode the problem into the quantum annealer
\State Execute quantum annealing to minimize the cost function
\State Retrieve the binary output matrix $X$
\State For each candidate $c_i$, compute its score:
        \begin{equation*}
           V_{c_i} = \sum_{j=i+1}^n x_{ij} + \sum_{j=1}^{i-1} (1 - x_{ji}) 
        \end{equation*}

\State Reconstruct the optimal ranking $\pi^*$ by sorting candidates in decreasing order of their scores\\
\Return $\pi^*$
\end{algorithmic}
\end{algorithm}

\begin{algorithm}
\caption{Iterative method for datasets with odd number of votes}
\label{alg:iterative_method}
\begin{algorithmic}[1]
\Require Dataset of rankings $\Pi$
\State Compute the pairwise comparison matrix $W$ and bias matrix $B$
\State Compute the $\Omega$ matrix
        \hfill $\triangleright$ $\Omega$ captures majority preferences between candidates
\State Initialize $C_n = \emptyset$ to store cycles and their penalties
\State Initialize $P_{ijk} = 1 + \epsilon$, where $\epsilon$ is a small positive value, for all cycles $(i, j, k)$ satisfying:
        \begin{equation*}
        (\omega_{ij}, \omega_{ik}, \omega_{jk}) = 
        \begin{cases}
        (0, 1, 0), \\
        (1, 0, 1).
        \end{cases}
        \end{equation*}
\State Add all initial cycles $(i, j, k)$ satisfying these conditions to $C_n$ with $P_{ijk} = 1 + \epsilon$
\State Define the QUBO cost function:
        \begin{equation*}
        C(X) = \sum_{i, j \ : \ i < j} b_{ij} x_{ij} + \sum_{i, j, k \in C_n} P_{ijk}{c}_{ijk}(X),
        \end{equation*}
        where:
        \begin{equation*}
        {c}_{ijk}(X) = x_{ik} + x_{ij} x_{jk} - x_{ij} x_{ik} - x_{jk} x_{ik}.
        \end{equation*}
\Repeat
    \State Execute quantum annealing to minimize the cost function
    \State Retrieve the binary output matrix $X$
    \State Detect cycles in $X$:
        \For{each detected cycle $(i, j, k)$}
            \If{$(i, j, k) \notin C_n$}
                \State Add $(i, j, k)$ to $C_n$
                \State Set $P_{ijk} = 1 + \epsilon$
            \Else
                \State Update $P_{ijk} \gets P_{ijk} + 2$
            \EndIf
        \EndFor
\Until{No cycles are detected in $X$}
\State Reconstruct the optimal ranking $\pi^*$ from the final $X$ using candidate scores:
        \begin{equation*}
        V_{c_i} = \sum_{j=i+1}^n x_{ij} + \sum_{j=1}^{i-1} (1 - x_{ji})
        \end{equation*}
\State Sort candidates in decreasing order of their scores to obtain the optimal ranking $\pi^*$\\
\Return $\pi^*$
\end{algorithmic}
\end{algorithm}

\begin{algorithm}
\caption{Pair Removal for the Base Model}
\label{alg:pair_removal}
\begin{algorithmic}[1]
\Require Dataset of rankings $\Pi$
\State Compute the pairwise comparison matrix $W$ and the bias matrix $B$
\State Compute the $\Omega$ matrix
\If{using $PR\Omega$} 
    \State Reconstruct a ranking from $\Omega$ by computing the score for each candidate:
        \begin{equation*}
           V_{c_i} = \sum_{j=i+1}^n x_{ij} + \sum_{j=1}^{i-1} (1 - x_{ji}) 
        \end{equation*}
        and sort candidates in decreasing order of $V_{c_i}$
    \State Remove pairs corresponding to candidates that are far apart in the reconstructed ranking
\ElsIf{using $PRHB$}
    \State Remove pairs with high bias values (e.g., $|b_{ij}|$ above a defined threshold)
\EndIf
\State Modify the cost function by excluding removed pairs:
        \begin{equation*}
        C_{\text{pr}}(X) = \sum_{(i, j) \notin R} b_{ij} x_{ij} 
        + P\sum_{(i, j, k) \notin R}{c}_{ijk},
        \end{equation*}
\Repeat
    \State Execute quantum annealing to minimize $C_{\text{pr}}(X)$
    \State Retrieve the reduced binary output matrix $X$, setting $x_{ij} = 0.5$ for removed pairs $(i, j) \in R$
    \State Infer values for removed pairs:
        \For{each removed pair $(i, j) \in R$}
            \State Compute the score for the pair:
                \begin{equation*}
                R_{ij} = \sum_{k \notin \{i, j\}} \text{vote}_{k}(i, j),
                \end{equation*}
                where:
                \begin{equation*}
                \text{vote}_{k}(i, j) = 
                \begin{cases}
                +1 & \text{if } x_{ik} = 1 \text{ and } x_{jk} = 0, \\
                -1 & \text{if } x_{ik} = 0 \text{ and } x_{jk} = 1, \\
                0  & \text{otherwise.}
                \end{cases}
                \end{equation*}
            \State Assign:
                \begin{equation*}
                x_{ij} = 
                \begin{cases}
                1 & \text{if } R_{ij} > 0, \\
                0 & \text{if } R_{ij} < 0, \\
                0.5 & \text{if } R_{ij} = 0 \text{ (inference cannot be made).}
                \end{cases}
                \end{equation*}
        \EndFor
    \If{any pair inference fails}
        \State Restart the process:
            \begin{itemize}
                \item Remove the unresolved pair $(i, j)$ from $R$.
                \item Recompute the cost function $C_{\text{pr}}(X)$ with the updated $R$.
            \end{itemize}
    \EndIf
\Until{all pairs are successfully inferred or no further resolution is possible}
\State Reconstruct the ranking $\pi$ from the final $X$ using candidate scores:
        \begin{equation*}
        V_{c_i} = \sum_{j=i+1}^n x_{ij} + \sum_{j=1}^{i-1} (1 - x_{ji})
        \end{equation*}
\State Sort the candidates in decreasing order of their scores to obtain $\pi$\\
\Return $\pi$
\end{algorithmic}
\end{algorithm}
\clearpage

\section{Results: base model performance for different values of the penalty coefficient}
\label{pu2}
In this section we want to present the data proving that the base model's performance is significantly influenced by the value of the penalty coefficient. 

In the context of the dataset featuring 10 candidates, an analysis of accuracy and number of occurrences reveals that a value of $P=2$ effectively removes all cycles, showcasing the fact that often the required value for the penalty coefficient is much smaller than the theoretical threshold (see \cref{fig:P2}).

\begin{figure}[H]
    \centering
    \begin{subfigure}{0.48\textwidth}
        \centering
        \includegraphics[width=\linewidth]{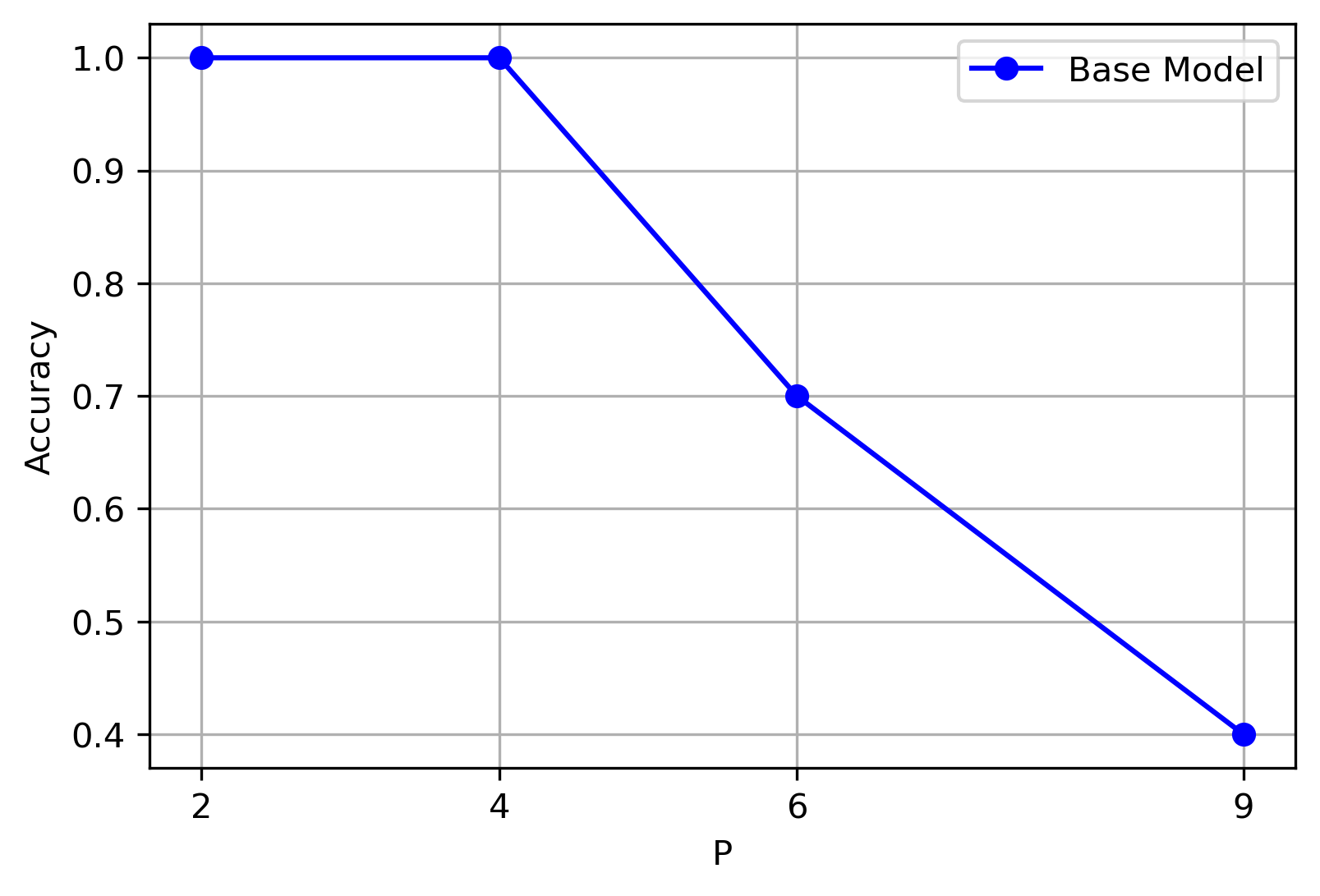}
        \label{f:pp3}
        \caption{}
    \end{subfigure}
    \hfill
    \begin{subfigure}{0.48\textwidth}
        \centering
        \includegraphics[width=\linewidth]{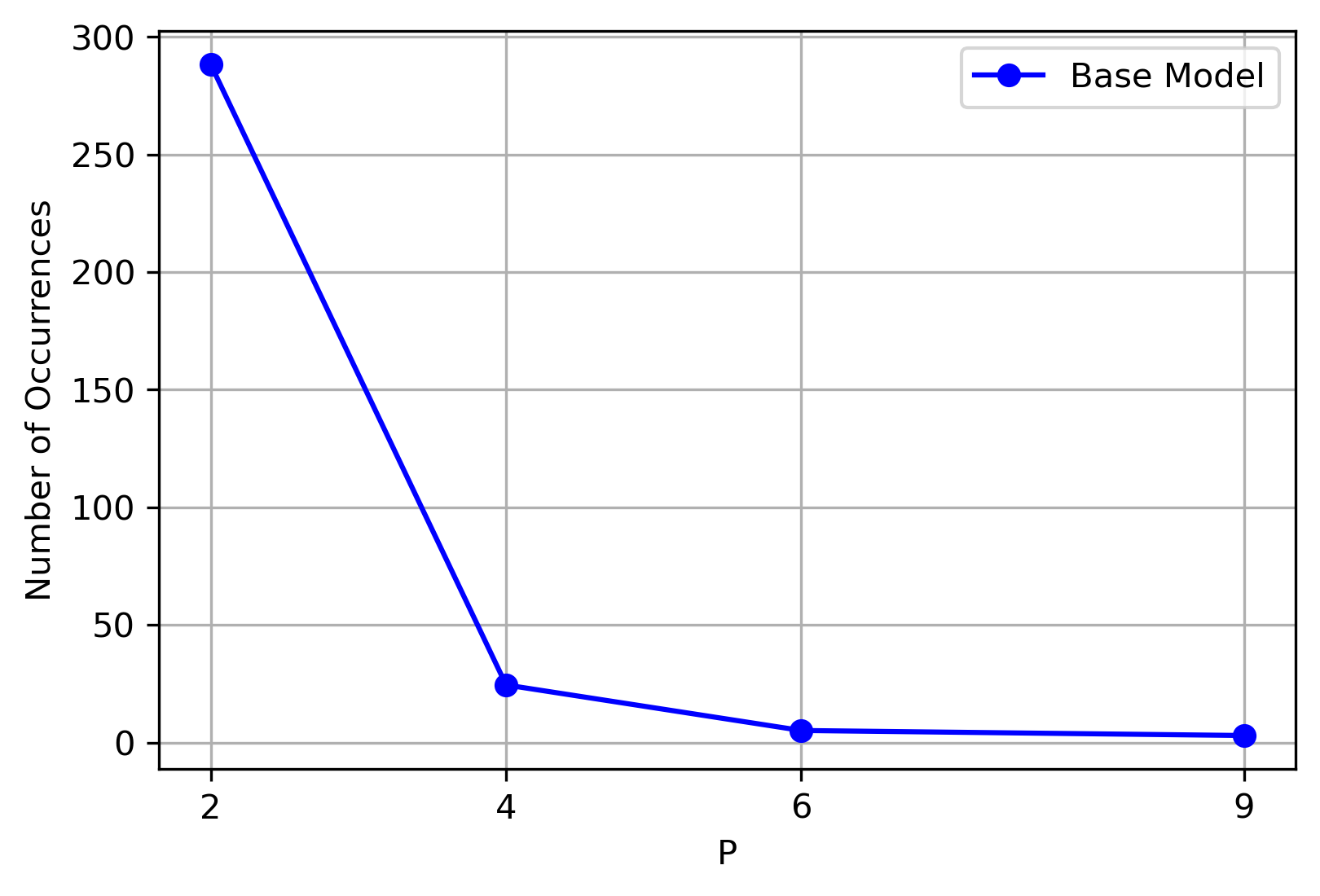}
        \label{f:pp4}
        \caption{}
    \end{subfigure}
    \caption{Behaviour of the accuracy (\cref{f:pp3}) and the number of occurrences (n\_occ) (\cref{f:pp4}) for the base model with respect to the penalty coefficient $P$ for the dataset containing 10 candidates.}
    \label{fig:P2}
\end{figure}
The results presented demonstrate the influence of $P$ on the model's performance. It is observed that as the value of $P$ increases, both the accuracy and the number of occurrences of the correct solution decrease. This suggests the strategic use of the pairwise preference representation to keep the value of $P$ as small as possible for each cycle and reduce the number of penalized cycles to only those that are truly necessary (\textsf{iterative method}). By adopting this approach, the efficiency can be enhanced through targeted cycle penalization, streamlining the computational process and yielding more accurate solutions.

\section{Example of an unsolvable dataset for KwikSort}\label{ksnosol}
Kwiksort employed as ``exact method", while effetive in many scenarios, can fail to converge to a solution for certain datasets.
Due to its inherent randomness, KwikSort relies on random pivot selections and partitioning, which can lead to suboptimal results. Even if all possible choices that the algorithm can explore were considered, the algorithm may still fail to identify the optimal solution. This occurs because the random nature of the pivot selection does not guarantee the exploration of the most efficient path, potentially leading to situations where the algorithm never reaches the best ordering, regardless of the available choices.

\begin{table}[!h]
    \centering
    \resizebox{\textwidth}{!}{ 
        \begin{tabular}{|c|c|c|c|c|c|c|c|c|c|c|}
            \hline
            \textbf{Vote 1} & \textbf{Vote 2} & \textbf{Vote 3} & \textbf{Vote 4} & \textbf{Vote 5} & \textbf{Vote 6} & \textbf{Vote 7} & \textbf{Vote 8} & \textbf{Vote 9} & \textbf{Vote 10} & \textbf{Vote 11} \\
            \hline
            0 & 2 & 0 & 3 & 2 & 2 & 0 & 3 & 1 & 2 & 4 \\
            2 & 0 & 1 & 0 & 1 & 3 & 1 & 4 & 4 & 3 & 3 \\
            1 & 4 & 4 & 2 & 3 & 0 & 2 & 0 & 3 & 0 & 2 \\
            3 & 1 & 2 & 1 & 0 & 1 & 4 & 2 & 0 & 4 & 0 \\
            4 & 3 & 3 & 4 & 4 & 4 & 3 & 1 & 2 & 1 & 1 \\
            \hline
        \end{tabular}
    }
    \caption{An example of an unsolvable dataset for KwikSort. Each column represents a vote, and each row indicates the rank of a candidate (0 being the highest and 4 the lowest).}
    \label{tab:my_label}
\end{table}

\section{Results of partial and weighted lists}\label{pawl}
To handle partial and weighted lists, we need to adapt both the distance measure and the way we construct the weight matrix $w_{ij}$, specifically for cases where some items are missing. 
Let $\{C\}$ represent the set of all candidates present in the dataset, with $n=|\{C\}|$ being the total number of candidates.
For partial lists, the term $(c_j \prec_{\pi_k} c_i)$ is defined as 1 if both candidates $c_i$ and $c_j$ are present in the list and $c_j$ is ranked higher than $c_i$, otherwise it is set to 0. In the case of $k$-top lists, if one candidate is present and the other is missing, the present candidate is preferred. However, if both are missing, no contribution is made to the pairwise preference matrix $W$.
We can also introduce a weight term $\alpha_{ij}$ to handle weighted lists (position weight), allowing different importance for each pair of candidates and a term $\beta_k$ to consider weighted lists. The element of the weight matrix $w_{ij}$ is then:

\begin{equation}
    w_{ij}=\sum_{k}^n \beta_k\alpha(i,j)(c_j \prec_{\pi_k} c_i)
\end{equation}
The same concept extends to a modified version of the Kendall-Tau distance:
\begin{equation}
   KT_{gen}(\pi_1, \pi_2) = \sum_{k}^n \beta_k\alpha(i,j)\cdot KT(i,j) 
\end{equation}
Here, $KT(i,j)$ represents the contribution to the Kendall-Tau distance for the pair $(i, j)$. The key points are:
\begin{itemize}
    \item For partial lists, both candidates $i$ and $j$ must be present in both lists for comparison.
    \item For $k$-top lists, the contribution is included if at least one candidate is present, and the present one is preferred over the missing one.
    \item $\alpha(i,j)$ serves as the weight for the pair in both the sum and distance.
    \item $\beta_k$ is the list weight
\end{itemize}
By handling missing items this way, we extend the standard Kendall-Tau distance and weight matrix $w_{ij}$ to effectively handle both partial and $k$-top lists.
The formula for the generalized Kendall-Tau distance can be expressed as:
\begin{equation}
   C_f = \sum_{i<j} b_{ij}x_{ij} + \alpha_{ij}w_{ij} =\sum_{i<j} \alpha_{ij}(w_{ij} - w_{ji})x_{ij} +\alpha_{ij} w_{ji}=\sum_{i<j}\alpha_{ij}(w_{ij}x_{ij}+w_{ji}(1-x_{ij}) 
\end{equation}
If $x_{ij} = 1$, then $C_f = \alpha_{ij} w_{ij}$, which means candidate $j$ is preferred to candidate $i$ weighted by $\alpha_{ij}$.
If $x_{ij}=0$, meaning candidate j is preferred to candidate i, 
This formulation respects the definition of generalized Kendall-Tau distance.
As weight values, we can choose:

\begin{itemize}
    \item \textbf{Position-based weights}: Items ranked higher are given more significance. The weight is defined as:
    \begin{equation*}
        w(i,j) = \frac{1}{(i+j)^p}
    \end{equation*}

    where $i$ and $j$ are positions of the items, and $p$ controls how the weight decreases.
    
    \item \textbf{Distance-based weights}: Larger rank discrepancies are penalized more. The weight is based on the absolute difference between ranks:
    \begin{equation*}
        w(i,j) = |i-j|
    \end{equation*}
\end{itemize}

We empirically computed the accuracy of the outputs from the annealer compared to the optimal solutions. For all three cases (partial lists, $k$-top lists, and weighted lists), we achieved an accuracy of 1. For weighted lists, we used datasets with different numbers of candidates, while for partial and $k$-top lists, we used one dataset with 9 candidates and randomly removed elements to simulate cases with different list lengths.

The results demonstrate that the annealer can consistently find the optimal solution with 100\% accuracy across partial, $k$-top, and weighted list cases. The experiments show that removing elements from the list (partial lists) or considering only the top candidates (k-top lists) does not affect the annealer's ability to maintain optimality. This indicates the robustness of the generalized Kendall-Tau distance when extended to handle missing items and weighted pairs. Both position-based and distance-based weighting schemes produced similar accuracy, indicating that the method is not sensitive to the choice of weight, further supporting its generalizability to different scenarios.
The empirical results achieved perfect accuracy (1) across the tested cases of partial, $k$-top, and weighted lists. This highlights the robustness of the generalized Kendall-Tau distance when adapted to handle missing items and weighted lists. The experiments involved varying the number of candidates and random removal of elements to create partial lists, simulating different list lengths. Despite these variations, the annealer consistently found the optimal solution, showing that the method is highly adaptable and reliable. Additionally, both position-based and distance-based weighting schemes performed similarly, suggesting that the algorithm can generalize well across different types of weighting functions.

Essentially, what we have demonstrated is that the quantum annealer is capable of solving not only standard ranking problems but also more complex scenarios such as partial lists, $k$-top lists, and weighted lists. The empirical results, with perfect accuracy across all cases, confirm the effectiveness and adaptability of the quantum annealer in handling these problems, making it a promising tool for future applications in decision-making processes that involve incomplete or weighted ranking data.

\section{Additional figures}\label{addfig}
This section presents supplementary figures to enhance understanding of the results, highlighting key aspects of algorithm performance and behavior.
\begin{figure}[!h]
    \centering
    \includegraphics[scale=0.65]{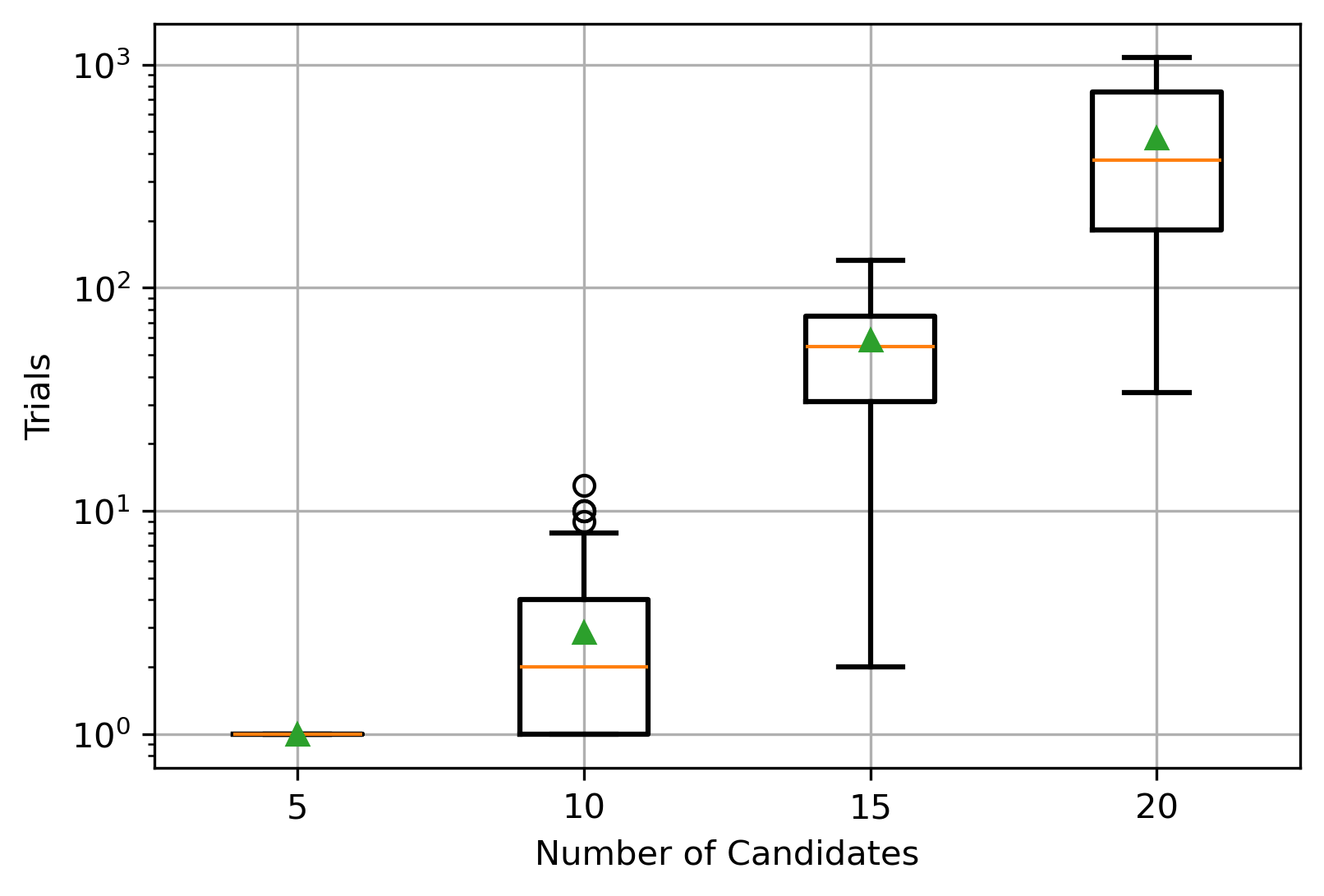}
    \caption{Boxplot of the number of trials required by KwikSort to identify the optimal solution across different datasets.}
    \label{ntrials}
\end{figure}
\begin{figure}[!h]
    \centering
    \includegraphics[scale=0.65]{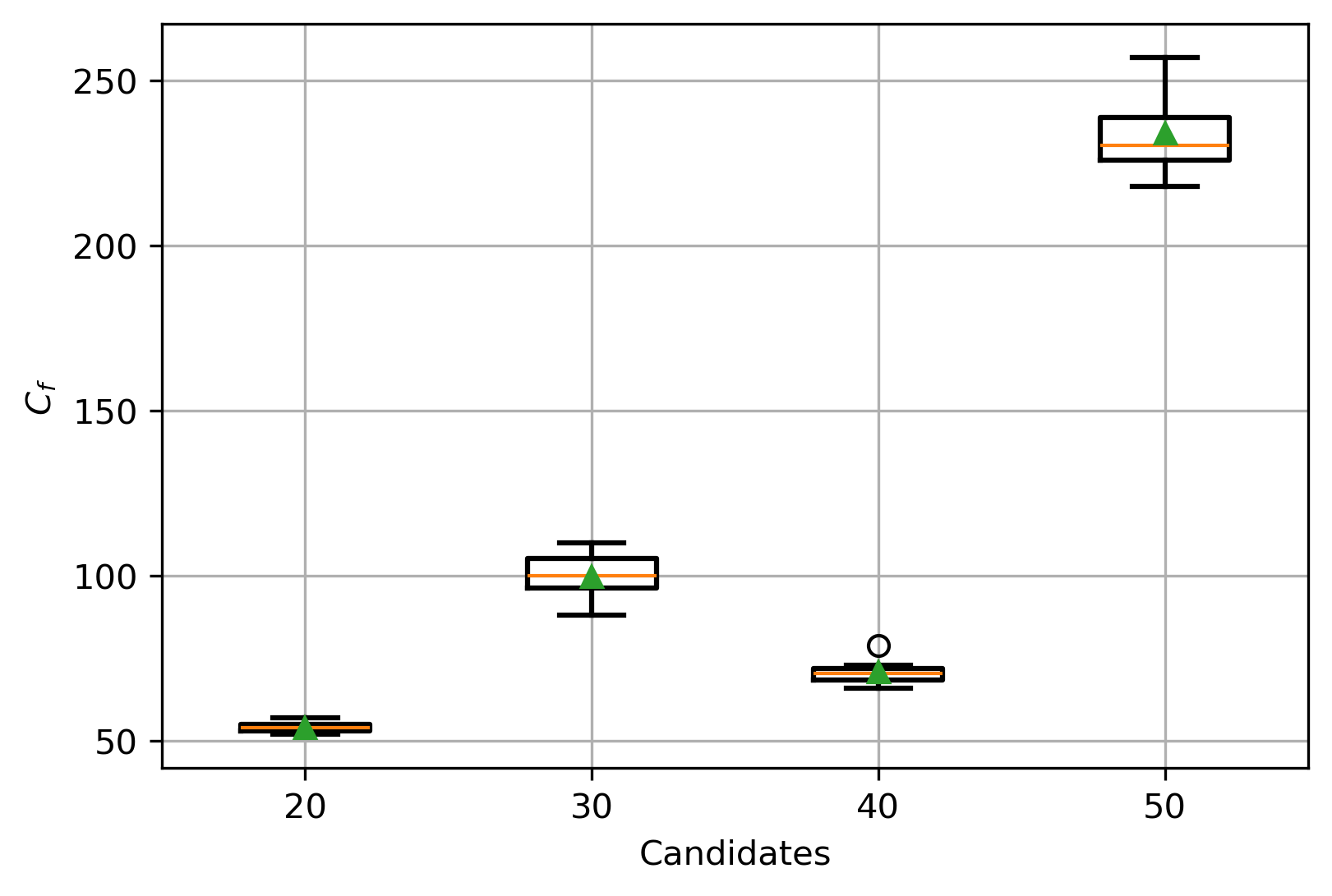}
    \caption{Boxplot of the total number of cycles that appeared during the solution process of the iterative method.}
    \label{average}
\end{figure}




\end{appendices}


\FloatBarrier

\bibliography{sn-bibliography-mia}

\end{document}